\theoremstyle{definition}
\newtheorem{Defn}{Definition}[section]
\theoremstyle{plain}
\newtheorem{Lemma}[Defn]{Lemma}
\newtheorem{prop}[Defn]{Proposition}
\newtheorem{theorem}[Defn]{Theorem}
\newtheorem{Corollary}[Defn]{Corollary}
\theoremstyle{remark}
\newtheorem{remark}[Defn]{Remark}
\newtheorem{Example}[Defn]{Example}
\title{Kitaev's stabilizer code and chain complex theory of bicommutative Hopf algebras}
\author{MINKYU KIM}
\date{}
\address{Graduate School of Mathematical Sciences \\ University of Tokyo}
\email{kim@ms.u-tokyo.ac.jp}
\begin{document}

\begin{abstract}
In this paper, we give a generalization of Kitaev's stabilizer code based on chain complex theory of {\it bicommutative} Hopf algebras.
Due to the bicommutativity, the Kitaev's stabilizer code extends to a broader class of spaces, e.g. finite CW-complexes ; more generally short abstract complex over a commutative unital ring $R$ which is introduced in this paper.
Given a finite-dimensional bisemisimple bicommutative Hopf algebra with an $R$-action, we introduce some analogues of $\mathds{A}$-stabilizers, $\mathds{B}$-stabilizers and the local Hamiltonian, which we call by the $(+)$-stabilizers, the $(-)$-stabilizers and the elementary operator respectively.
We prove that the eigenspaces of the elementary operator give an orthogonal decomposition and the ground-state space is isomorphic to the homology Hopf algebra.
In application to topology, we propose a formulation of topological local stabilizer models in a functorial way.
It is known that the ground-state spaces of Kitaev's stabilizer code extends to Turaev-Viro TQFT.
We prove that the 0-eigenspaces of a topological local stabilizer model extends to a projective TQFT which is improved to a TQFT in typical examples.
Furthermore, we give a generalization of the duality in the literature based on the Poincar\'e-Lefschetz duality of R-oriented manifolds.
\end{abstract}

\maketitle

\tableofcontents

%%%%%%%%%%%%%%%%%%%%%%%%%%%%%
\section{Introduction}

In topological quantum computation, Kitaev's stabilizer code (a.k.a. {\it the toric code}) is a stabilizer code over polyhedral surfaces which induces a Hamiltonian model of anyons and gives a fault-tolerant quantum computation \cite{BalKir} \cite{BMCA} \cite{Kit}.
It is constructed from a finite-dimensional semisimple Hopf algebra $A$ over the complex field.
The total qudit space or crude Hilbert space is given by a tensor product of $A$'s running over $1$-cells of the surface.
The Hamiltonian is defined by a sum of stabilizers, so called $\mathds{A}$-stabilizers and $\mathds{B}$-stabilizers which are defined by using some structure on the Hopf algebra.
The ground-state space (the 0-eigenspace) of the Hamiltonian is known to be a topological invariant of the surface.

In this paper, we give a generalization of Kitaev's stabilizer code based on chain complex theory of {\it bicommutative} Hopf algebras over an arbitrary field $k$.
Due to the bicommutativity, the Kitaev's stabilizer code extends to a broader class of `spaces', for example finite CW-complexes ; more generally,  {\it short abstract complexes over a commutative unital ring $R$}.
The latter notion is introduced in section \ref{202007061028}.

Let $A$ be a finite-dimensional bisemisimple bicommutative Hopf algebra over $k$.
The Hopf algebra $A$ plays a role of a qudit or a qubit in the literature.
For a short abstract complex $X$ over $R$, we are given a vector space $V( X ; A)$ by a tensor product of $A$.
By using an $R$-action $\phi$ on $A$, we introduce some analogues of the two types of local stabilizers and the local Hamiltonian in Kitaev model : {\it the $(\pm)$-stabilizers} and {\it the elementary operator} on $V(X ; A)$ respectively.
We denote by $\mathds{H} ( X ; A , \phi)$ the elementary operator.

There is a natural chain complex $C_\bullet ( X ; A , \phi )$ of bicommutative Hopf algebras induced by $X$ and $(A, \phi)$.
The chain complex theory of bicommutative Hopf algebras is considered based on the fact that the category $\mathsf{Hopf}^\mathsf{bc}_k$ of bicommutative Hopf algebras over $k$ is an abelian category \cite{newman1975correspondence} \cite{takeuchi1972correspondence}.
We denote by $H(X ; A , \phi)$ the homology Hopf algebra which is an analogue of the homology group of chain complexes of abelian groups.

The main theorem of this paper is given as follows.
See Theorem \ref{202006302032} for the proof.
\begin{theorem}
\label{202007052132}
Let $A$ be a finite-dimensional bisemisimple bicommutative Hopf algebra over $k$ equipped with an $R$-action $\phi$.
For a short abstract complex $X$ over $R$, the following statements hold.
\begin{enumerate}
\item
The eigenspaces of the elementary operator $\mathds{H} ( X ; A , \phi)$ give a direct sum decomposition of $V(X ; A)$.
\item
$0 \in k$ is an eigenvalue of $\mathds{H} ( X ; A , \phi)$.
Furthermore, there exists a natural isomorphism of the 0-eigenspace and the homology Hopf algebra $H(X ; A, \phi)$.
\end{enumerate}
\end{theorem}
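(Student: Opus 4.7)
The plan is to decompose the elementary operator $\mathds{H}(X;A,\phi)$ as a sum of local terms---the $(+)$-stabilizers at cells in one degree and the $(-)$-stabilizers at cells in an adjacent degree---and then (i) show that these stabilizers form a commuting family of $k$-diagonalizable operators on the finite-dimensional space $V(X;A)$, and (ii) identify their joint $0$-eigenspace with the homology Hopf algebra $H(X;A,\phi)$. Bisemisimplicity of $A$ is what yields a spectral decomposition of each individual stabilizer into finitely many eigenprojections; bicommutativity of $A$, combined with the complex condition on $C_\bullet(X;A,\phi)$ (consecutive differentials compose to zero in the abelian category $\mathsf{Hopf}^\mathsf{bc}_k$), is what makes $(+)$- and $(-)$-stabilizers commute with each other, while mutual commutativity within each type should reduce to locality plus (co)commutativity.

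For part (1), once mutual commutativity and $k$-diagonalizability of the stabilizers are established, the standard simultaneous spectral decomposition gives a direct sum decomposition of $V(X;A)$ into joint eigenspaces. Since $\mathds{H}$ is a scalar combination of the stabilizers, it acts by a scalar (a sum of local eigenvalues) on each joint eigenspace, which produces the claimed eigenspace decomposition.

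For part (2), the $0$-eigenspace of $\mathds{H}$ should be exactly the joint $0$-eigenspace of all stabilizers. The $(+)$-stabilizer at a cell $\sigma$ is expected to project onto the kernel of an outgoing boundary component at $\sigma$, and the $(-)$-stabilizer at a cell $\tau$ onto the complement of the image of an incoming boundary component at $\tau$; the joint $0$-eigenspace thus realises ``cycles modulo boundaries'' as a subspace of $V(X;A)$, which is precisely $H(X;A,\phi)$ by the definition of homology in $\mathsf{Hopf}^\mathsf{bc}_k$. In particular $0$ is automatically an eigenvalue, and the identification is manifestly natural in $X$ and in $(A,\phi)$.

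The main obstacle will be bridging the operator-theoretic $0$-eigenspace and the categorical homology: the latter is defined by kernels, images and quotients in $\mathsf{Hopf}^\mathsf{bc}_k$, while the former is a concrete subspace of $V(X;A)$. Making the two coincide requires an explicit Hodge-type splitting of $V(X;A)$ into \emph{harmonic}, \emph{exact} and \emph{coexact} parts, which bisemisimplicity of $A$ should make available, together with a proof that the forgetful functor from $\mathsf{Hopf}^\mathsf{bc}_k$ to $k$-vector spaces carries the categorical decomposition to this analytic one. A secondary care point is that $k$ is an arbitrary field, so one must avoid any argument that implicitly uses positivity or a Hermitian inner product; the role of ``orthogonality'' must be played by the idempotent splittings available in a bisemisimple Hopf algebra. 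Finally, one has to check that the resulting isomorphism with $H(X;A,\phi)$ is natural in both arguments and respects the Hopf algebra structure on the $0$-eigenspace inherited from $V(X;A)$.
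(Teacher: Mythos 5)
Your treatment of part (1) is essentially the paper's: each stabilizer is an idempotent (because the normalized integral $\sigma_A$ and cointegral $\sigma^A$ are idempotent under multiplication resp.\ convolution), the $(+)$-type operators commute among themselves by commutativity of $A$, the $(-)$-type by cocommutativity, and the two types commute with each other because $\partial_-\circ\partial_+=1$ (Lemmas \ref{202006291020}, \ref{201907031035}, \ref{202006291038}). A family of commuting idempotents is simultaneously diagonalizable over any field, so no separate ``$k$-diagonalizability'' input is needed there.

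For part (2) there is a genuine gap at exactly the point you flag and then defer. The joint $1$-eigenspace of all stabilizers is, concretely, the invariant subspace of the $C_+(X;A,\phi)$-action on the subspace $C_\circ(X;A,\phi)\backslash\beta_{\partial_-}$ of ``cycles''; but the homology Hopf algebra is $\mathrm{Cok_H}(\bar\partial_+)$, which is realised as a \emph{quotient} (the stabilized space $\alpha_{\bar\partial_+}\backslash\mathrm{Ker_H}(\partial_-)$, cf.\ Proposition \ref{202101041128}). Your plan asserts that the joint $0$-eigenspace ``realises cycles modulo boundaries as a subspace,'' but a subspace of invariants is not a quotient by boundaries unless one supplies a specific isomorphism between them, and that is the entire content of the step. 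The paper's mechanism is not a Hodge-type splitting of $V(X;A)$ into harmonic, exact and coexact parts (which you would still have to construct, and whose existence as a direct-sum complement of the ``exact'' part is not obvious in this generality); it is Proposition \ref{202012302248}/\ref{202101041125}: for a Hopf algebra with a normalized integral, the canonical map $_\alpha\gamma$ from the invariant subspace $\alpha\backslash\backslash X$ to the stabilized quotient $\alpha\backslash X$ is an isomorphism, with inverse $[x]\mapsto\alpha(\sigma\otimes x)$. Applying this to the action $\alpha_{\bar\partial_+}$ of $C_+(X;A,\phi)$ on $\mathrm{Ker_H}(\partial_-)$ converts the subspace of invariants into the categorical cokernel, i.e.\ into $H(X;A,\phi)$. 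Without this lemma (or an equivalent substitute) your argument does not close; with it, the Hodge-splitting detour is unnecessary. Note also that one needs the auxiliary identity $\prod_{x_+}\mathds{S}^+(x_+)=L_{\alpha_{\partial_+}}(\sigma_{C_+(X;A)})$ and its $(-)$-analogue to pass from the joint fixed space of the individual local stabilizers to the single (co)action of the total (co)integral; this is where the tensor-product structure of $C_\pm(X;A,\phi)$ is used and should be stated explicitly.
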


In the context of topological physics, the ground field $k$ is assumed to be the complex field.
If $A$ is a Hopf $\ast$-algebra (over $\mathbb{C}$), then $A$ has a Hermitian Frobenius form.
Moreover it is a Hermitian inner product if $A$ is a Hopf C$^\ast$-algebra.
The Hermitian Frobenius form induces a Hermitian form on $V( X ; A)$.
The $(\pm)$-stabilizers and the elementary operator are Hermitian.
The main theorem is improved to a theorem which respects the Hermitian form (see Theorem \ref{202012271301}).

For general $k$, the bicommutative Hopf algebra $A$ has a symmetric Frobenius form.
It induces a nondegenerate symmetric bilinear form on $V(X ; A)$.
The $(\pm)$-stabilizers and the elementary operator are symmetric with respect to the bilinear form.
There is another improvement of the main theorem which respects the bilinear form (see Theorem \ref{202007061137}).

In application to topology, we make use of our framework to propose a formulation of local stabilizer models (LSM's).
We define a LSM over $R$ by a symmetric monoidal functor from the category of pointed finite CW-complexes and embeddings to the category of short abstract complexes over $R$ and inclusions.
We require that it preserves pushouts (see Definition \ref{202012291052}).
Then the 0-eigenspace corresponding to a pointed finite CW-complex has a Hopf algebra structure by Theorem \ref{202007052132}.

We go further by formulating {\it topological} LSM's.
For a topological LSM, we prove that the 0-eigenspace equipped with the Hopf algebra structure is a homotopy invariant.
In particular, the 0-eigenspace is independent of the choice of the CW-complex structure.

We give a computation of the $(\pm)$-stabilizers and the elementary operator for some concrete topological LSM's (see subsection \ref{202101040012}).

The isomorphism of the 0-eigenspace and the homology Hopf algebra in some LSM (see Example \ref{202101021222}) gives a discrete and finite analogue of the Hodge decomposition.
The Hodge decomposition gives an isomorphism of the kernel of the Laplacian and the de Rham cohomology theory.
The Laplacian is obtained from the de Rham cochain complex of differential forms whereas the elementary operator is constructed from the cellular chain complex.

It is known that the Kitaev model, not necessarily bicommutative, is a Hamiltonian counterpart of 3-dimensional Dijkgraaf-Witten TQFT \cite{DW} \cite{FQ} and Turaev-Viro TQFT \cite{BalKir} \cite{BWB} \cite{Kir} \cite{BalKirETQFT} \cite{TV}.
For any topological LSM, we prove that the eigenspaces corresponding to closed $(n-1)$-manifolds extend to a {\it projective} $n$-TQFT.
Moreover, for some typical topological LSM's the eigenspaces extend to an $n$-TQFT.
It gives a generalization of the relationship of bicommutative Kitaev model and Turaev-Viro TQFT.
The discussion related with the construction of TQFT's heavily depends on our other papers \cite{kim2019integrals} \cite{kim2020family}.

We give a generalization of the duality with respect to a dual complex in oriented surfaces.
The duality in \cite{BalKir} \cite{BraKit} \cite{meusburger} factors through the Poincar\'e-Lefschetz duality and {\it a transposition duality of short abstract complexes} introduced in this paper.
We extend the duality to $R$-oriented manifolds with arbitrary dimension.

There is a classical isomorphism between the first cohomology group with an abelian group $G$ and the set of isomorphism classes of principal $G$-bundle.
We believe that our study is the (co)homology theory counterpart with respect to the Hopf algebra gauge theory \cite{meusburger}.

The $(\pm)$-stabilizers gives a generalization of a part of Vrana-Farkas \cite{vrana}.
They study a local Hamiltonian model defined on $q$-cells of an arbitrary CW-complex without any finite condition.
We consider an arbitrary bisemisimple bicommutative Hopf algebra whereas their study is based on a Hopf algebra induced by an abelian group.
Moreover, we compare the ground-state space with the homology Hopf algebra.

Kitaev's stabilizer code is constructed by a semisimple Hopf (C$^\ast$-)algebra over $\mathbb{C}$ which is not necessarily bicommutative.
It raises a question whether there exists a framework containing ours and the non-bicommutative settings.

This paper is organized as follows.
In section \ref{202007062155}, we give an overview of a stabilized object and an invariant object of (co)actions.
We characterize bisemisimplicity of a Hopf algebra by isomorphisms between stabilized objects and invariant objects.
In section \ref{202007062154}, we introduce the homology Hopf algebra of a short chain complex of bicommutative Hopf algebras.
Furthermore, we introduce the ordinary homology theory with coefficients in a bicommutative Hopf algebra.
In section \ref{202007061028}, we introduce the short abstract complexes and its related basic notions.
In section \ref{202007062235}, we define the $(\pm)$-stabilizers and the elementary operator on short abstract complexes.
We give an explanation about the transposition duality.
In section \ref{202101031448}, we prove the main theorem and the improvements below the main theorem.
In section \ref{202101051043}, we give a formulation of LSM's with some examples.
We reproduce the Kitaev's stabilizers from our framework.
Moreover, we prove that the eigenspaces of a topological LSM extend to a (projective) TQFT.
In section \ref{201907220234}, we prove a Poincar\'e-Lefschetz duality of the $(\pm)$-stabilizers and the elementary operator on a polyhedral subcomplex in an $R$-oriented manifold.
In appendix \ref{201907211845}, we give an overview of the Poincar\'e-Lefschetz duality of cellular chain complexes.

\section*{Acknowledgements}
The author was supported by FMSP, a JSPS Program for Leading Graduate Schools in the University of Tokyo, and JPSJ Grant-in-Aid for Scientific Research on Innovative Areas Grant Number JP17H06461.
The author is grateful to Christine Vespa for informing him about the work of Antoine Touz{\'e} \cite{touze}.
The author appreciates the suggestion by the referees.
Especially, the discussion related with Hopf $\ast$-algebras is motivated by one of the referees.

\section*{Terminologies and notations}

{\it Hopf algebras}
\hspace{0.5cm}
For a Hopf algebra $A$, let $\eta_A, \epsilon_A , \Delta_A, \nabla_A, S_A$ be the unit, the counit, the comultiplication, the multiplication and the antipode of $A$ respectively.
For the convenience, we often uses the Sweedler notation to represent the comultiplication : for $a \in A$, 
\begin{align}\notag
\Delta_A (a) = a^{(1)} \otimes a^{(2)} .
\end{align}
Note that it does not mean there are $a^{(1)} , a^{(2)} \in A$ whose tensor product coincides with $\Delta_A (a)$.

For a group $G$, we denote by $kG$ the induced group Hopf algebra over a field $k$.
If $G$ is finite, then we denote by $k^G$ the induced $k$-valued function Hopf algebra.

{\it CW-complexes}
\hspace{0.5cm}
A $q$-cell of a CW-pair $(L, K)$ is a $q$-cell of the CW-complex $L$ which does not lie in the subcomplex $K$.
We use the terminology {\it embedding} only to represent the induced map $i : K \hookrightarrow L$ where $K$ is a subcomplex of $L$.

{\it arrows}
\hspace{0.5cm}
We use the arrow $\hookrightarrow$ only to denote an embedding of CW-complexes.
The arrow $\rightarrowtail$ and $\twoheadrightarrow$ represent an inclusion and a restriction of short abstract complexes introduced in Definition \ref{202007080920}.

%%%%%%%%%%%%%%%%%%%%%%%%%%%%%
\section{Stabilized space and invariant space of (co)actions}
\label{202007062155}

In this section, we give an overview of two vector spaces associated with (co)actions of Hopf algebras : a {\it stabilized} space and an {\it invariant} space.
An isomorphism between them plays an important role in the proof of our main theorem.
We give a necessary and sufficient condition for the isomorphism to hold.

\begin{Defn}
Let $X$ be a vector space and $A$ be a Hopf algebra (over $k$).
For a left action $\alpha : A \otimes X \to X$, {\it the invariant space of $\alpha$} is defined by
\begin{align}\notag
\alpha \backslash \backslash X \stackrel{\mathrm{def.}}{=} \{ x \in X ~;~ \alpha ( a \otimes x ) = \epsilon_A (a) x, ~ a \in A \} .
\end{align}
We similarly define {\it the invariant space of a right action} $\alpha^\prime : X^\prime \otimes A^\prime \to X^\prime$ and denote by $X^\prime / / \alpha^\prime$.

For a left coaction $\beta : X \to A \otimes X$, we define {\it the invariant space of $\beta$} by a quotient space of $X$ by identifying $(id_X \otimes f) \circ \beta (x)$ with $f(\eta_A) \cdot x$ for any $x \in X$ and linear functional $f : A \to k$.
We denote it by $\beta / / X$.
Analogously, we define {\it the invariant space of a right coaction} $\beta^\prime : X^\prime \to X^\prime \otimes A^\prime$ and denote by $X^\prime \backslash \backslash \beta^\prime$.
\end{Defn}

\begin{Defn}
For a left action $\alpha : A \otimes X \to X$, we define {\it the stabilized space of $\alpha$} by
the quotient space of $X$ by identifying $\alpha ( a \otimes x ) $ with $\epsilon (a) x$ for any $x \in X$ and $a \in A$.
{\it The stabilized space of a right action} $\alpha^\prime : X^\prime \otimes A^\prime \to X^\prime$ is defined similarly and denoted by $X^\prime / \alpha^\prime$.

For a left coaction $\beta : X \to A \otimes X$, we define {\it the stabilized space of $\beta$} by
\begin{align}\notag
\beta / X \stackrel{\mathrm{def.}}{=} \{ x \in X ~;~ \beta ( x) = \eta_A \otimes x \} .
\end{align}
{\it The stabilized space of a right coaction $\beta^\prime : X^\prime \to X^\prime \otimes A^\prime$} is similarly defined by the kernel space of $(\beta^\prime - id_{X^\prime} \otimes \eta_{A^\prime})$ and denoted by $X^\prime \backslash \beta^\prime$.
\end{Defn}

\begin{Defn}
\label{202012302246}
Let $\alpha : A \otimes X \to X$ be a left action.
We define a linear homomorphism $_\alpha\gamma : \alpha \backslash \backslash X \to \alpha \backslash X$ by a composition of the inclusion and the quotient map.
For a right action $\alpha^\prime : X^\prime \otimes A^\prime \to X^\prime$, we define $\gamma_{\alpha^\prime} : X^\prime / / \alpha^\prime \to X^\prime / \alpha^\prime$ in a parallel way.
\end{Defn}

The canonical homomorphism $_\alpha \gamma$ is not an isomorphism in general.
In the following proposition, we give equivalent conditions of $A$ for $_\alpha \gamma$ to be an isomorphism.

\begin{Defn}
An element $\sigma_A \in A$ is {\it a normalized integral} if $\sigma_A \cdot a = a \cdot \sigma_A  = \epsilon_A ( a ) \cdot \eta_A$ (the integral conditions) and $\epsilon_A ( \sigma_A ) = 1$ (the condition of `normalized').
A linear functional $\sigma^A : A \to k$ is {\it a normalized cointegral} if $(\sigma^A \otimes id_A ) \circ \Delta_A = ( id_A \otimes \sigma^A ) \circ \Delta_A  = \eta_A \otimes \sigma^A$ (the cointegral conditions) and $\sigma^A ( \eta_A )= 1$ (the condition of `normalized').
\end{Defn}

\begin{prop}
\label{202012302248}
Let $A$ be a finite-dimensional Hopf algebra.
Then the following conditions are equivalent with each other.
\begin{itemize}
\item
The Hopf algebra $A$ is semisimple.
\item
There exists a normalized integral $\sigma_A$ of $A$.
\item
$_\alpha\gamma$ and $\gamma_{\alpha^\prime}$ are isomorphisms for every left action $\alpha$ and right action $\alpha^\prime$ of $A$.
\end{itemize}
\end{prop}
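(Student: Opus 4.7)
The plan is to close the cycle of implications $(1) \Rightarrow (2) \Rightarrow (3) \Rightarrow (1)$. The equivalence $(1) \Leftrightarrow (2)$ is the classical Larson--Sweedler version of Maschke's theorem for finite-dimensional Hopf algebras, and I would simply invoke it; the content of the proposition therefore lies in the equivalence $(2) \Leftrightarrow (3)$, built around an averaging construction.

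For $(2) \Rightarrow (3)$, given a normalised integral $\sigma_A$ I would introduce, for each left action $\alpha \colon A \otimes X \to X$, the averaging operator
\[
p_\alpha(x) := \alpha(\sigma_A \otimes x).
\]
Three short computations then suffice. First, the left integrality of $\sigma_A$ gives $\alpha(a \otimes p_\alpha(x)) = \epsilon_A(a) p_\alpha(x)$, so the image of $p_\alpha$ lies in $\alpha \backslash \backslash X$. Second, the right integrality of $\sigma_A$ gives $p_\alpha(\alpha(a \otimes x)) = \epsilon_A(a) p_\alpha(x)$, so $p_\alpha$ descends to a map $\bar p_\alpha \colon \alpha \backslash X \to \alpha \backslash \backslash X$. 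Third, the normalisation $\epsilon_A(\sigma_A) = 1$ shows that the compositions ${_\alpha\gamma} \circ \bar p_\alpha$ and $\bar p_\alpha \circ {_\alpha\gamma}$ are identities. The case of $\gamma_{\alpha'}$ is entirely symmetric, with left and right exchanged throughout.

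For $(3) \Rightarrow (2)$, I would specialise the hypothesis to the left regular action $\alpha = \nabla_A$. By direct inspection $\nabla_A \backslash \backslash A$ is the space of left integrals of $A$, while the defining relation of the stabilised space applied with $x = \eta_A$ gives $a \sim \epsilon_A(a) \eta_A$ for every $a \in A$, so $\nabla_A \backslash A \cong k$ via the counit and ${_\alpha\gamma}$ sends an integral $\sigma$ to $\epsilon_A(\sigma) \bar \eta_A$. Its being an isomorphism forces the left-integral space to be one-dimensional with $\epsilon_A$ non-vanishing on the generator; rescaling yields a normalised left integral, which by Larson--Sweedler already implies semisimplicity. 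Applying the same argument to the right regular action via $\gamma_{\alpha'}$ furnishes a normalised right integral, and in the finite-dimensional semisimple setting these one-sided integrals can be aligned into a single two-sided normalised integral, recovering (2).

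The main obstacle, beyond bookkeeping, is ensuring in $(3) \Rightarrow (2)$ that the pair of one-sided normalised integrals produced by the hypothesis assembles into a two-sided normalised integral as the paper's definition demands; this rests on the fact that a finite-dimensional semisimple Hopf algebra is automatically unimodular, so the left and right integral subspaces coincide. Once that is in hand, the remaining verifications are routine manipulations with the integral and counit axioms.
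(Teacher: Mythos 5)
Your proposal is correct and follows essentially the same route as the paper: averaging by $\sigma_A$ to invert $_\alpha\gamma$ for $(2)\Rightarrow(3)$, and specializing to the regular actions for $(3)\Rightarrow(2)$. The only cosmetic difference is in the final step, where you invoke unimodularity of semisimple Hopf algebras to merge the one-sided normalized integrals, whereas the paper obtains $\sigma = \sigma'$ directly from the two-line computation $\sigma = \epsilon_A(\sigma')\sigma = \sigma'\sigma = \sigma'\epsilon_A(\sigma) = \sigma'$ (which is in fact the proof of the unimodularity statement you cite).
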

\begin{proof}
We give a sketch of proof.
The equivalence of the first and second statements follows from the classical reference \cite{LarSwe}.

We prove the third claim from the second one.
In fact, we could construct the inverse homomorphism $\gamma^\prime : \alpha\backslash X \to \alpha \backslash \backslash X$ by $\gamma^\prime ( [x] ) = \alpha ( \sigma_A \otimes x )$ for a left action $\alpha$.
The proof for the case of a right action is given in a symmetric fashion.

We prove the second claim from the third one.
The left regular action $\alpha = \nabla_A : A \otimes A \to A$ induces an isomorphism $_\alpha \gamma : \alpha \backslash \backslash A \to \alpha \backslash A$.
There exists $\sigma \in \alpha \backslash \backslash A$ such that $_\alpha \gamma ( \sigma ) = [ \eta_A ]$.
We have $a \cdot \sigma = \epsilon_A ( a ) \sigma$ (the left integral condition) due to $\sigma \in \alpha \backslash \backslash A$.
Moreover, $( \sigma^2 - \sigma ) = ( \sigma - \eta_A ) \sigma = \left( \sum_i ( a_i x_i - \epsilon_A ( a_i ) x_i ) \right) \sigma = 0$.
Since $\sigma^2 = \epsilon_A ( \sigma ) \sigma$, we obtain $\epsilon_A ( \sigma ) = 1$.
In a parallel way, there exists $\sigma^\prime \in A$ such that $\sigma^\prime \cdot a = \epsilon_A ( a ) \cdot \sigma^\prime$ (the right integral condition) and $\epsilon_A ( \sigma^\prime ) = 1$.
We have $\sigma = \sigma^\prime$ since $\sigma = \epsilon_A ( \sigma^\prime ) \cdot \sigma = \sigma^\prime \cdot \sigma = \sigma^\prime \cdot \epsilon_A ( \sigma) = \sigma^\prime$.
Thus, $\sigma_A = \sigma = \sigma^\prime$ is a normalized integral of $A$.
\end{proof}

In a symmetric fashion, some canonical linear homomorphisms are defined for coactions.

\begin{Defn}
Let $\beta : X \to A \otimes X$ be a left coaction.
We define a linear homomorphism $^\beta\gamma : \beta /  X \to \beta / / X$ by a composition of the inclusion and the quotient map.
For a right coaction $\beta^\prime : X^\prime \to X^\prime \otimes A^\prime$, we define $\gamma^{\beta^\prime} : X^\prime \backslash \beta^\prime \to X^\prime \backslash \backslash \beta^\prime$ in a parallel way.
\end{Defn}

By combining Proposition \ref{202012302248} and its dual statement, we obtain the following result.

\begin{prop}
\label{202101041125}
Let $A$ be a finite-dimensional Hopf algebra.
Then the following conditions are equivalent with each other.
\begin{itemize}
\item
The Hopf algebra $A$ is bisemisimple, i.e. semisimple and cosemisimple.
\item
There exists a normalized integral and normalized cointegral of $A$.
\item
$_\alpha\gamma$ and $\gamma_{\alpha^\prime}$ are isomorphisms for every left action $\alpha$ and right action $\alpha^\prime$ of $A$.
Furthermore, $^\beta \gamma$ and $\gamma^{\beta^\prime}$ are isomorphisms for every left coaction $\beta$ and right coaction $\beta^\prime$ of $A$.
\end{itemize}
\end{prop}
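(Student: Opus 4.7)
The plan is to deduce the proposition by applying Proposition~\ref{202012302248} twice: once to $A$ itself, which handles the ``action'' half of the statement, and once to the dual Hopf algebra $A^\ast$, which will handle the ``coaction'' half via finite-dimensional Hopf duality. Since $A$ is finite-dimensional, $A^\ast$ carries a canonical Hopf algebra structure dualizing every structure map of $A$, the biduality $A \cong A^{\ast\ast}$ is an isomorphism of Hopf algebras, and the standard facts that $A$ is cosemisimple iff $A^\ast$ is semisimple, and that normalized cointegrals of $A$ are precisely normalized integrals of $A^\ast$, are at our disposal.

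The key preparatory step is to set up a translation dictionary between $A$-coactions on a vector space $X$ and $A^\ast$-actions on $X$. Given a left coaction $\beta : X \to A \otimes X$, define $\tilde{\beta} : A^\ast \otimes X \to X$ by $\tilde{\beta}( f \otimes x ) = (f \otimes id_X)(\beta(x))$; this gives a bijection between left coactions of $A$ and left actions of $A^\ast$, and there is an analogous right-sided version. Under this bijection the stabilized space $\beta / X = \{ x : \beta(x) = \eta_A \otimes x \}$ matches the invariant space $\tilde{\beta} \backslash \backslash X$, because the equation $\beta(x) = \eta_A \otimes x$ is equivalent to $\tilde{\beta}(f \otimes x) = f(\eta_A)\, x = \epsilon_{A^\ast}(f)\, x$ for every $f \in A^\ast$. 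Dually, the invariant space $\beta / / X$, obtained by quotienting by $(id_X \otimes f)\beta(x) - f(\eta_A) x$, matches the stabilized space $\tilde{\beta} \backslash X$. Consequently, the canonical map $^\beta \gamma : \beta / X \to \beta / / X$ is identified with the map $_{\tilde{\beta}} \gamma : \tilde{\beta} \backslash \backslash X \to \tilde{\beta} \backslash X$ attached to the $A^\ast$-action, and similarly for $\gamma^{\beta^\prime}$.

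With the dictionary in place, Proposition~\ref{202012302248} applied to $A$ gives the equivalence among semisimplicity of $A$, existence of a normalized integral of $A$, and the isomorphism property of $_\alpha\gamma$ and $\gamma_{\alpha^\prime}$ for every left/right action $\alpha,\alpha^\prime$. Applying the same proposition to $A^\ast$ and translating back along the dictionary gives the equivalence among semisimplicity of $A^\ast$ (equivalently, cosemisimplicity of $A$), existence of a normalized integral of $A^\ast$ (equivalently, a normalized cointegral of $A$), and the isomorphism property of $^\beta\gamma$ and $\gamma^{\beta^\prime}$ for every left/right coaction $\beta,\beta^\prime$ of $A$. Taking the logical conjunction of these two three-way equivalences yields exactly the statement of Proposition~\ref{202101041125}.

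The main obstacle is not substantive but bookkeeping: ensuring that ``left'' and ``right'', as well as ``stabilized'' and ``invariant'', match correctly under the $A \leftrightarrow A^\ast$ duality. The swap between stabilized and invariant spaces upon dualization is easy to miss, since the equation-type definition of $\beta / X$ becomes the quotient-type definition of $\tilde{\beta} \backslash \backslash X$ and vice versa; this has to be verified carefully on the definitions before the two instances of Proposition~\ref{202012302248} can be combined.
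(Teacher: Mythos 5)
Your proposal is correct and matches the paper's approach: the paper derives Proposition \ref{202101041125} precisely by ``combining Proposition \ref{202012302248} and its dual statement,'' which is exactly your strategy of applying that proposition to $A$ and to $A^\ast$ via the action/coaction dictionary. Your careful verification that stabilized and invariant spaces swap under dualization is the content the paper leaves implicit.
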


\begin{Example}
\label{202101052222}
Let $G$ be a finite abelian group.
Then the group Hopf algebra $A= kG$ is bisemisimple if and only if the order of $G$ is coprime to the characteristic of $k$.
In fact, if the latter condition holds, then the normalized integral and cointegral are given by $\sigma_{A} = |G|^{-1} \sum_{g \in G} g$ and $\sigma^A = \hat{e}$ where $\hat{e} (g ) = 1$ if $g=e$ and $\hat{e} (g ) = 0$ otherwise.
Especially, the function Hopf algebra $k^G$ is also bisemisimple since it is the dual Hopf algebra of $kG$.
\end{Example}

\begin{remark}
If $k$ is algebraically closed, e.g. $k = \mathbb{C}$, then all the bisemisimple bicommutative Hopf algebras arise as function Hopf algebras.
\end{remark}

\begin{Example}
Consider $k = \mathbb{R}$.
Let $B = \mathbb{R} G$ where $G = \mathbb{Z} /n$ for $n \geq 3$, the group Hopf algebra over $\mathbb{R}$.
Then $A = B \otimes B^\vee = \mathbb{R} G \otimes \mathbb{R}^G$ is a bisemisimple bicommutative Hopf algebra which is neither a group Hopf algebra nor a function Hopf algebra \cite{kim2020homology}.
\end{Example}

\begin{Example}
\label{202101052257}
Consider $k= \mathbb{F}_3$.
The direct sum algebra $\mathbb{F}_3 \oplus \mathbb{F}_9 \oplus \mathbb{F}_9$ has two bisemisimple bicommutative Hopf algebras $D_1 ,D_2$ on $\mathbb{F}_3$ \cite{kim2020homology}.
Note that $D_1, D_2$ are isomorphic via the switch of $\mathbb{F}_9$ component.
It is also neither a group Hopf algebra nor a function Hopf algebra.
\end{Example}

%%%%%%%%%%%%%%%%%%%%%%%%

\section{Homology Hopf algebra}
\label{202007062154}

The chain complex theory of abelian groups could be generalized to an abstract setting.
An abelian category $\mathcal{A}$ is a category where there exists a zero object, all biproducts, all kernels and all cokernels in the sense of category theory, and the fundamental theorem on homomorphisms holds.
A chain complex in $\mathcal{A}$ is a sequence of morphisms $\{ \partial_q : C_q \to C_{q-1} \}_{q \in \mathbb{Z}}$ such that $\partial_{q-1} \circ \partial_q$ is the zero morphism whose existence is implied by the axioms of abelian category.
Likewise, the basic notions such as {\it chain complex}, {\it chain map}, {\it chain homotopy} and {\it homology} are defined formally the same as those of abelian groups.
In this section, we introduce a chain complex theory of bicommutative Hopf algebras based on following Proposition \ref{202101042026}.

\subsection{The category of bicommutative Hopf algebras}

\begin{prop}
\label{202101042026}
Let $k$ be a field.
The category $\mathsf{Hopf}^\mathsf{bc}_k$ of bicommutative Hopf algebras over $k$ is an abelian category.
\end{prop}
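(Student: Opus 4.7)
The plan is to verify the abelian category axioms for $\mathsf{Hopf}^\mathsf{bc}_k$ directly, using the Takeuchi--Newman correspondence cited in the text as the key input for the deepest axiom. First I would identify the zero object: the ground field $k$, with its canonical Hopf algebra structure, is both initial and terminal in $\mathsf{Hopf}^\mathsf{bc}_k$, since any morphism $A \to k$ must equal $\epsilon_A$ and any morphism $k \to A$ must equal $\eta_A$. Next I would construct finite biproducts. In a commutative algebra setting the tensor product $A \otimes B$ is a coproduct, and in a cocommutative coalgebra setting it is a product; in $\mathsf{Hopf}^\mathsf{bc}_k$ both structures are present simultaneously, so $A \otimes B$ equipped with componentwise Hopf operations serves as a biproduct, with the obvious inclusions $a \mapsto a \otimes \eta_B$ and $b \mapsto \eta_A \otimes b$ and projections $\mathrm{id}_A \otimes \epsilon_B$ and $\epsilon_A \otimes \mathrm{id}_B$.

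The heart of the proof is the construction of kernels and cokernels. For a morphism $f : A \to B$ in $\mathsf{Hopf}^\mathsf{bc}_k$, I would define the kernel as the equalizer
\begin{align*}
\mathrm{HKer}(f) = \{ a \in A \mid (f \otimes \mathrm{id}_A) \circ \Delta_A (a) = \eta_B \otimes a \} ,
\end{align*}
which by bicommutativity coincides with the analogous right-sided condition and inherits from $A$ the structure of a bicommutative Hopf subalgebra. Dually, I would define the cokernel as the quotient Hopf algebra
\begin{align*}
\mathrm{HCoker}(f) = B \big/ \bigl( B \cdot f(A^+) \bigr), \qquad A^+ = \ker \epsilon_A,
\end{align*}
whose two-sided ideal $B \cdot f(A^+)$ is automatically a biideal thanks to commutativity on the algebra side and cocommutativity on the coalgebra side. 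A routine check shows these satisfy the correct universal properties with respect to morphisms in $\mathsf{Hopf}^\mathsf{bc}_k$.

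The remaining, and genuinely substantive, axiom is that every monomorphism is a kernel and every epimorphism is a cokernel. Here I would invoke the Takeuchi--Newman correspondence \cite{newman1975correspondence} \cite{takeuchi1972correspondence}: in the bicommutative setting there is an inclusion-reversing bijection between Hopf subalgebras $A' \subseteq A$ and Hopf quotients $A \twoheadrightarrow A''$, realized precisely by $A' \mapsto A / A \cdot (A')^+$ and $A'' \mapsto \mathrm{HKer}(A \to A'')$. Applying this to a monomorphism $f$ (which one first shows is injective, using that $A$ is faithfully flat over any Hopf subimage in the commutative case) gives $f = \ker(\mathrm{coker}(f))$, and the symmetric argument gives $f = \mathrm{coker}(\ker f)$ for an epimorphism. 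Finally, as a consequence, every morphism admits an epi-mono factorization, and the first isomorphism theorem (the image and coimage coincide) follows automatically.

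The main obstacle is exactly this last normality/conormality step: showing that the abstract category-theoretic monos and epis in $\mathsf{Hopf}^\mathsf{bc}_k$ coincide with the injective and surjective Hopf maps, and that the kernel-cokernel correspondence is bijective. Both facts fail without bicommutativity and rely on the faithful (co)flatness results of Takeuchi and Newman, which is why I would cite those papers rather than try to reprove them. Granting this input, the verification of all other abelian axioms is formal.
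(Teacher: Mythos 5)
Your proposal is correct and follows essentially the same route as the paper, which simply defers the entire statement to Newman and Takeuchi; you fill in the formal verifications (zero object, tensor-product biproducts, explicit Hopf kernels and cokernels) and, like the paper, attribute the one genuinely substantive step --- that monos are kernels and epis are cokernels, via faithful (co)flatness and the subalgebra--quotient correspondence --- to those same references. Your explicit descriptions of the kernel and cokernel also agree with the paper's Proposition \ref{202101041128}, where they appear as the stabilized spaces $A \backslash \beta_\xi$ and $\alpha_\xi \backslash B$.
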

\begin{proof}
The proof follows from \cite{newman1975correspondence} or \cite{takeuchi1972correspondence}.
\end{proof}

\begin{remark}
\label{202012251341}
We give some remarks on the abelian category structure of $\mathsf{Hopf}^\mathsf{bc}_k$.
\begin{itemize}
\item
The ground field $k$ has a unique Hopf algebra structure which is bicommutative.
If we denote the Hopf algebra by $k$ with abuse of notation, then $k$ is a zero object in $\mathsf{Hopf}^\mathsf{bc}_k$.
\item
The biproduct, which is a direct sum and a direct product simultaneously, is given by the tensor product of Hopf algebras.
\item
In any abelian category, all the morphism sets have an abelian group structure which is biadditive with respect to the composition.
Let $A,B$ be bicommutative Hopf algebras.
The set of Hopf homomorphisms from $A$ to $B$ becomes an abelian group by the {\it convolution} :
for $\xi , \xi^\prime : A \to B$, the convolution $\xi \ast \xi^\prime$ is given by a composition $\nabla_B \circ (\xi \otimes \xi^\prime) \circ \Delta_A$.
Then the unit which we denote by $1$ is given by  the composition $\eta_B \circ \epsilon_A$.
\item
Let $\mathcal{A}$ be an abelian category with a biproduct $\oplus$.
For an object $A$ in $\mathcal{A}$, let $End_\mathcal{A} (A)$ be the endomorphism set having a natural ring structure.
There is an obvious 1-1 correspondence between $m \times n$ matrices with entries in $End_\mathcal{A} (A)$ and morphisms from $A^{\oplus n}$ to $A^{\oplus m}$.
A $(i,j)$-component of a morphism from $A^{\oplus n}$ to $A^{\oplus m}$ is the $(i,j)$-entry of the corresponding matrix.
This terminology is applied to $\mathcal{A} = \mathsf{Hopf}^\mathsf{bc}_k$.
\end{itemize}
\end{remark}

We introduce the following notations to avoid confusion between (co)kernel Hopf algebras of Hopf homomorphisms and (co)kernel space of linear homomorphisms.

\begin{Defn}
Let $A,B$ be bicommutative Hopf algebras.
For a Hopf homomorphism $\xi : A \to B$, we denote {\it a cokernel Hopf algebra of $\xi$} by a pair $( \mathrm{Cok_H} (\xi ), \mathrm{cok_H} (\xi ) )$.
In other words, it satisfies the following conditions :
\begin{enumerate}
\item
$\mathrm{Cok_H} (\xi )$ is a bicommutative Hopf algebra.
\item
$\mathrm{cok_H} (\xi ) : B \to \mathrm{Cok_H} (\xi )$ is a Hopf homomorphism.
\item
$\mathrm{cok_H} ( \xi ) \circ \xi = 1$.
\item
It is {\it universal} : if a Hopf homomorphism $\varphi : B \to C$ satisfies $\varphi \circ \xi = 1$, then there exists a unique Hopf homomorphism $\bar{\varphi} : \mathrm{Cok_H}(\xi) \to C$ such that $\bar{\varphi} \circ \mathrm{cok_H} ( \xi ) = \varphi$.
\end{enumerate}
Note that a cokernel Hopf algebra is unique up to an isomorphism by the universality.
In an analogous way, we denote by $( \mathrm{Ker_H} ( \xi ) , \mathrm{ker_H} ( \xi ) )$ {\it a kernel Hopf algebra of $\xi$}.
\end{Defn}

\begin{remark}
In this paper, we denote by $\mathrm{Cok}$ or $\mathrm{Ker}$ to represent the cokernel and kernel spaces of linear homomorphisms.
Note that $\mathrm{Cok_H} ( \xi ) \neq \mathrm{Cok} ( \xi )$ and $\mathrm{Ker_H} ( \xi ) \neq \mathrm{Ker} ( \xi )$.
\end{remark}

\begin{Defn}
Let $A,B$ be Hopf algebras and $\xi : A \to B$ be a Hopf homomorphism.
We define a left action $\alpha_\xi : A \otimes B \to B$ of $A$ on $B$ by a composition $\nabla_B \circ ( \xi \otimes id_B )$.
For $\alpha = \alpha_\xi$, we simply write $_\xi \gamma = _\alpha \gamma$.
In a dual way, we define a right coaction $\beta_\xi : A \to A \otimes B$ of $B$ on $A$ by a composition $(id_A \otimes \xi ) \circ \Delta_A$.
For $\beta = \beta_\xi$, we write $\gamma^\xi = \gamma^\beta$.
\end{Defn}

\begin{prop}
\label{202101041128}
The stabilized space $\alpha_\xi \backslash B$ has a bicommutative Hopf algebra structure which gives a cokernel Hopf algebra of $\xi$.
Analogously, the stabilized space $A \backslash \beta_\xi$ has a bicommutative Hopf algebra structure which gives a kernel Hopf algebra of $\xi$.
\end{prop}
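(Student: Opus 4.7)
The plan is to realize $\alpha_\xi \backslash B$ explicitly as a Hopf-algebraic quotient of $B$ and, dually, $A \backslash \beta_\xi$ as a Hopf subalgebra of $A$, so that the universal properties reduce to those of underlying quotient spaces and subspace inclusions. Because $\alpha_\xi ( a \otimes b ) = \xi ( a ) \cdot b$ and $B$ is commutative, one identifies
\begin{align*}
\alpha_\xi \backslash B = B / I , \qquad I = B \cdot \xi ( A^+ ) ,
\end{align*}
the two-sided ideal of $B$ generated by $\xi ( a ) - \epsilon_A ( a ) \eta_B$ for $a \in A$, where $A^+ = \mathrm{Ker} ( \epsilon_A )$; commutativity of $B$ is precisely what makes $I$ automatically two-sided.

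The main step is to verify that $I$ is a Hopf ideal of $B$, so that $B/I$ inherits a bicommutative Hopf algebra structure. The counit and antipode conditions $\epsilon_B ( I ) = 0$ and $S_B ( I ) \subseteq I$ follow on generators from $\epsilon_B \circ \xi = \epsilon_A$ and $S_B \circ \xi = \xi \circ S_A$. The coideal condition $\Delta_B ( I ) \subseteq I \otimes B + B \otimes I$ is a direct Sweedler-notation computation on the generators, using $\epsilon_A ( a^{(1)} ) a^{(2)} = a$ to split $\Delta_B ( \xi ( a ) - \epsilon_A ( a ) \eta_B )$ into a piece in $I \otimes B$ plus a piece in $B \otimes I$. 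Bicommutativity of $B/I$ is inherited from $B$ since quotients of commutative, resp.\ cocommutative, Hopf algebras are commutative, resp.\ cocommutative. The quotient map $\mathrm{cok_H} ( \xi ) : B \to B/I$ is then a Hopf homomorphism with $\mathrm{cok_H} ( \xi ) \circ \xi = 1$ by construction, and its universal property is immediate: any Hopf homomorphism $\varphi : B \to C$ with $\varphi \circ \xi = 1$ satisfies $\varphi ( \xi ( a ) - \epsilon_A ( a ) \eta_B ) = 0$, hence factors uniquely through $B/I$.

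The kernel claim is obtained by dualizing this argument. The stabilized space
\begin{align*}
A \backslash \beta_\xi = \{ a \in A \mid a^{(1)} \otimes \xi ( a^{(2)} ) = a \otimes \eta_B \}
\end{align*}
is to be verified as a sub-Hopf-algebra of $A$: closure under multiplication uses that $\Delta_A$ and $\xi$ are algebra homomorphisms; stability under $S_A$ uses $\xi \circ S_A = S_B \circ \xi$ together with commutativity of $A$; and closure under $\Delta_A$ follows from coassociativity, cocommutativity of $A$, and the fact that $\beta_\xi$ makes $A$ a $B$-comodule coalgebra whose coinvariants $A \backslash \beta_\xi$ form a subcoalgebra. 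With this induced bicommutative Hopf algebra structure, the inclusion $\mathrm{ker_H} ( \xi ) : A \backslash \beta_\xi \hookrightarrow A$ satisfies $\xi \circ \mathrm{ker_H} ( \xi ) = 1$, and universality holds because any Hopf homomorphism $\varphi : C \to A$ with $\xi \circ \varphi = 1$ automatically takes values in $A \backslash \beta_\xi$: evaluating $( id_A \otimes \xi ) \Delta_A \varphi ( c ) = ( \varphi \otimes \eta_B \epsilon_C ) \Delta_C ( c ) = \varphi ( c ) \otimes \eta_B$ yields the required coinvariance. I expect the only real obstacle to be this coalgebraic closure of $A \backslash \beta_\xi$; the remaining assertions are formal consequences of (co)commutativity and of the definitions of $\alpha_\xi \backslash B$ and $A \backslash \beta_\xi$ given earlier.
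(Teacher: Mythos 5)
Your argument is correct, but it is worth noting that the paper does not actually prove this proposition at all: it simply refers the reader to Newman and Takeuchi, where the (co)kernel constructions in $\mathsf{Hopf}^\mathsf{bc}_k$ are carried out. What you have written is the standard direct verification that those references encapsulate, so your route is genuinely more self-contained: you realize $\alpha_\xi \backslash B$ as $B/I$ with $I = B\,\xi(A^+)$, check that $I$ is a Hopf ideal, and verify the cokernel universal property by hand; dually you check that the coinvariants $A \backslash \beta_\xi$ form a Hopf subalgebra with the kernel universal property. Two small points deserve attention. First, in the antipode step for $A \backslash \beta_\xi$ you invoke ``commutativity of $A$,'' but what is actually used is \emph{cocommutativity} (so that $\Delta_A \circ S_A = (S_A \otimes S_A)\circ \Delta_A$ without a flip); in the bicommutative setting both hold, so nothing breaks. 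Second, the subcoalgebra closure --- which you correctly single out as the only real obstacle --- is stated a bit circularly (``the coinvariants form a subcoalgebra'' is what needs proving). The honest argument is: writing $C = A\backslash\beta_\xi$, coassociativity gives $a^{(1)}\otimes a^{(2)}\otimes \xi(a^{(3)}) = \Delta_A(a)\otimes \eta_B$, hence $\Delta_A(a) \in A\otimes C$ by flatness over the field $k$; cocommutativity then lets you move $\xi$ to the middle leg and conclude $\Delta_A(a)\in C\otimes A$ as well; finally $(C\otimes A)\cap(A\otimes C) = C\otimes C$ because $k$ is a field. With these two clarifications your proof is complete and would serve as a legitimate substitute for the citation.
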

\begin{proof}
The readers are referred to \cite{newman1975correspondence}  \cite{takeuchi1972correspondence}.
\end{proof}

\subsection{Chain complex of bicommutative Hopf algebras}

In this subsection, we define a (short) chain complex of bicommutative Hopf algebras and its {\it homology Hopf algebra}.
We give some examples derived from familiar ones of abelian groups or modules.

\begin{Defn}
\label{202012261818}
Consider Hopf homomorphisms between bicommutative Hopf algebras.
\begin{align}\notag
A_+ \stackrel{\partial_+}{\to} A_\circ \stackrel{\partial_-}{\to} A_- .
\end{align}
It is {\it a (short) chain complex of bicommutative Hopf algebras} if $\partial_- \circ \partial_+ = 1$ (the differential axiom).
Here, $1$ means the zero morphism $\eta_{A_-} \circ \epsilon_{A_+}$.
We define {\it the homology Hopf algebra} $H ( A_\bullet )$ by its homology in the abelian category $\mathsf{Hopf}^\mathsf{bc}_k$.
In other words, we define $H( A_\bullet ) \stackrel{\mathrm{def.}}{=} \mathrm{Cok_H} ( \bar{\partial}_+ )$ where the Hopf homomorphism $\bar{\partial}_+ : A_+ \to \mathrm{Ker_H} ( \partial_- )$ is induced by the Hopf homomorphism $\partial_+$ due to $\partial_- \circ \partial_+ = 1$.
\end{Defn}

\begin{remark}
The terminology is motivated by the chain complex of abelian groups and its {\it homology group}.
\end{remark}

\begin{Example}
\label{202012251535}
Let $G_+ \stackrel{\partial_+}{\to} G_\circ \stackrel{\partial_-}{\to} G_-$ be a chain complex of abelian groups.
Put $A_\square = k G_\square$ for $\square = + , \circ , -$, i.e. the group Hopf algebra of $G_\square$ on the field $k$.
Then it induces a chain complex of bicommutative Hopf algebras $A_+ \stackrel{(\partial_+)_\ast}{\to} A_\circ \stackrel{(\partial_-)_\ast}{\to} A_-$.
If we denote by $H ( G_\bullet )$ the homology group of $G_\bullet$, then we have a natural isomorphism $H ( A_\bullet ) \cong k H (G_\bullet )$.
In fact the group Hopf algebra functor $k(-) : \mathsf{Ab} \to \mathsf{Hopf}^\mathsf{bc}_k$ is an exact functor where $\mathsf{Ab}$ is the category of abelian groups.
\end{Example}

\begin{Example}
\label{202012251538}
Suppose that $G_\square$ are finite groups in the above example.
Put $A^\prime_\square = k^{G_\square}$ for $\square = + , \circ , -$, i.e. the $k$-valued function Hopf algebra of $G_\square$.
Then we obtain a chain complex of bicommutative Hopf algebras $A^\prime_- \stackrel{\partial^\ast_-}{\to} A^\prime_\circ \stackrel{\partial^\ast_+}{\to} A^\prime_+$.
We also have a natural isomorphism $H(A^\prime_\bullet ) \cong k^{H(G_\bullet)}$ since the function Hopf algebra $k^{(-)} : \left( \mathsf{Ab}^\mathsf{fin} \right)^\mathsf{op} \to \mathsf{Hopf}^\mathsf{bc}_k$ is an exact functor.
\end{Example}

Before we give one more example, we recall some results in \cite{touze}.
Let $R$ be a unital ring and $\mathsf{P}_R$ be the category of finitely-generated projective $R$-modules.
A commutative exponential functor $E$ with domain $( \mathsf{P}_R , \oplus , 0 )$ and codomain $( \mathsf{Vec}_k , \otimes , k)$ induces a bicommutative Hopf algebra $E(R)$ equipped with an $R$-action.
Here, an $R$-action on a bicommutative Hopf algebra $A$ is a unital ring homomorphism $\phi : R \to \mathrm{End} ( A )$ where $\mathrm{End} (A )$ is the {\it Hopf endomorphism} set whose additive structure induced by the convolution in Remark \ref{202012251341} and the multiplication is induced by the composition.
The assignment gives a functor $\mathcal{E}$ from the category $\mathrm{Exp}_c$ of commutative exponential functors to the category of $R$-modules in $\mathsf{Hopf}^\mathsf{bc}_k$.
Due to the first main theorem of \cite{touze}, $\mathcal{E}$ yields an equivalence of categories.
On the one hand, we have an evaluating functor $\mathcal{G}$ from $\mathrm{Exp}_c$ to the category of commutative exponential functors with domain $\mathsf{P}_R$ and codomain $\mathsf{Hopf}^\mathsf{bc}_k$.

\begin{Defn}
Consider a bicommutative Hopf algebra $A$ equipped with an $R$-action $\phi$.
We define a commutative exponential functor (or a symmetric monoidal fucntor) $(A, \phi)^{(-)}$ from $\mathsf{P}_R$ to $\mathsf{Hopf}^\mathsf{bc}_k$ as follows.
Firstly, we choose a preimage $E$ with respect to $\mathcal{E}$ which is uniquely determined up to an isomorphism.
We define $(A, \phi)^{(-)} \stackrel{\mathrm{def}}{=} \mathcal{G} (E)$, i.e. $( A , \phi) ^{M} = ( \mathcal{G} ( E ) ) ( M )$ for an object $M$ of $\mathsf{P}_R$.
\end{Defn}

\begin{remark}
The definition of $(A, \phi)^{(-)}$ depends on the choice of $E$ but it is well-defined up to a natural isomorphism.
\end{remark}

\begin{Example}
\label{202012251553}
Let $M_+ \stackrel{\partial_+}{\to} M_\circ \stackrel{\partial_-}{\to} M_-$ be a chain complex of finitely-generated projective $R$-modules.
It induces a chain complex of bicommutative Hopf algebras $(A , \phi)^{M_+} \stackrel{(\partial_+)_\ast}{\to} (A , \phi)^{M_\circ} \stackrel{(\partial_-)_\ast}{\to} (A, \phi)^{M_-}$.
If $R$ is a field (in particular $\mathsf{P}_R$ is an abelian category), then we have a natural isomorphism $H( (A, \phi)^{M_\bullet} ) \cong (A , \phi)^{H(M_\bullet)}$.
In fact, the functor $(A, \phi)^{(-)}$ is an exact functor since every exact sequence in $\mathsf{P}_R$ splits and $(A, \phi)^{(-)}$ is an additive functor.
\end{Example}

\subsection{$\mathsf{Hopf}^\mathsf{bc}_k$-valued homology theory}

A generalized homology theory is an assignment of graded abelian groups (or $R$-modules) to spaces satisfying the Eilenberg-Steenrod axioms except the dimension axiom.
For an abelian category $\mathcal{A}$, one could define an $\mathcal{A}$-valued (generalized) homology theory by an assignment of graded objects in $\mathcal{A}$ to spaces satisfying the Eilenberg-Steenrod axioms except the dimension axiom.
In this subsection, we give an explanation by restricting ourselves to $\mathcal{A} = \mathsf{Hopf}^\mathsf{bc}_k$.
As an application of the previous subsection, we prove that the ordinary homology with coefficients in a bicommutative Hopf algebra theory exists.

\begin{Defn}
\label{201912312136}
A {\it $\mathsf{Hopf}^\mathsf{bc}_k$-valued homology theory} $E_\bullet = \{ E_q , \partial_q \}_{q\in \mathbb{Z}}$ is given by the following data :
\begin{enumerate}
\item
For each integer $q$, the data $E_q$ consist of two assignments :
The first one assigns a bicommutative Hopf algebra $E_q ( K , K^\prime)$ to a finite CW-pair $(K, K^\prime)$.
The second one assigns a Hopf homomorphism $E_q (f) : E_q (K, K^\prime ) \to E_q(L , L^\prime)$ to a map $f: (K, K^\prime) \to ( L, L^\prime)$ where the objects $E_q (K, K^\prime ), E_q(L , L^\prime)$ are the corresponding objects by the first assignment.
\item
For each finite CW-pair $(K, K^\prime)$, the data $\partial_q$ is a natural transformation $\partial_q : E_{q+1} ( K , K^\prime) \to E_q ( K^\prime )$ in $\mathcal{A}$ where $E_q ( K^\prime)$ is a shorthand for $E_q ( K ^\prime , \emptyset )$.
\end{enumerate}
These are subject to following conditions :
\begin{enumerate}
\item
The assignments satisfies a functoriality :
\begin{align}\notag
E_q ( Id_{K,K^\prime} ) &= Id_{E_q(K , K^\prime)} ,  \\
E_q ( g \circ f) &= E_q (g) \circ E_q (f) . \notag
\end{align}
Here, $Id_{K,K^\prime}$ is the identity on $(K,K^\prime)$, and the maps $g,f$ are composable.
\item
Let $f,g$ be maps from $(K,K^\prime)$ to $(L,L^\prime)$.
A homotopy $f \simeq g$ induces
\begin{align}\notag
E_q ( f ) = E_q ( g) . 
\end{align}
\item
A triple of finite CW-complexes $(X, L , K)$ induces an isomorphism $E_q ( K , K \cap L ) \to E_q ( K \cup L , L )$.
\item
A pair $(K,K^\prime)$ induces a long exact sequence in $\mathcal{A}$ where $i,j$ are inclusions :
\begin{align}\notag
\cdots \to E_q ( K^\prime ) \stackrel{E_q (i)}{\to} E_q ( K ) \stackrel{E_q (j)}{\to} E_q ( K , K^\prime ) \stackrel{\partial_{q-1}}{\to} E_{q-1} ( K^\prime) \to \cdots . \notag
\end{align}
\end{enumerate}
The corresponding object $E_q ( \mathrm{pt} )$ to the one-point space is called {\it the $q$-th coefficient} of homology theory.
A $\mathsf{Hopf}^\mathsf{bc}_k$-valued homology theory is {\it ordinary} if any $q$-th coefficient is isomorphic to the trivial Hopf algebra $k$ for $q \neq 0$.
For a bicommutative Hopf algebra $A$, denote by $H_\bullet ( - , - ; A)$ an ordinary homology theory whose $0$-th coefficient is isomorphic to $A$.
\end{Defn}

\begin{prop}
For any bicommutative Hopf algebra $A$, there exists an ordinary $\mathsf{Hopf}^\mathsf{bc}_k$-valued homology theory whose coefficient is isomorphic to $A$.
\end{prop}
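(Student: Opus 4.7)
The plan is to build the homology theory as the cellular homology with coefficients in $A$, using the commutative exponential functor technology introduced just before Example \ref{202012251553} to ``tensor'' integral cellular chains with the Hopf algebra $A$. Take $R = \mathbb{Z}$. Any bicommutative Hopf algebra $A$ is canonically an object of the category of $\mathbb{Z}$-modules in $\mathsf{Hopf}^\mathsf{bc}_k$ via the unique unital ring homomorphism $\phi_A : \mathbb{Z} \to \mathrm{End}(A)$ sending $1$ to $\mathrm{id}_A$, so the construction gives a symmetric monoidal (equivalently, commutative exponential) functor $(A,\phi_A)^{(-)} : \mathsf{P}_\mathbb{Z} \to \mathsf{Hopf}^\mathsf{bc}_k$, which is in particular additive. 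For a finite CW-pair $(K,K')$ the relative cellular chain complex $C_\bullet(K,K';\mathbb{Z})$ lies in $\mathsf{P}_\mathbb{Z}$ degreewise and bounded, so I would define
\begin{align*}
E_q(K,K') \;\stackrel{\mathrm{def.}}{=}\; H_q\!\left( (A,\phi_A)^{C_\bullet(K,K';\mathbb{Z})} \right)
\end{align*}
where the right-hand homology is taken in the abelian category $\mathsf{Hopf}^\mathsf{bc}_k$ in the sense of Definition \ref{202012261818}. A cellular map $f:(K,K')\to(L,L')$ induces a chain map on integral cellular complexes, and applying the additive functor $(A,\phi_A)^{(-)}$ produces a chain map in $\mathsf{Hopf}^\mathsf{bc}_k$, whose induced map on homology defines $E_q(f)$.

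Next I would verify the Eilenberg--Steenrod axioms one by one. Functoriality is immediate from the functoriality of both $C_\bullet(-;\mathbb{Z})$ and $(A,\phi_A)^{(-)}$. For homotopy invariance I would use that homotopic cellular maps induce chain-homotopic maps on integral cellular chains, and that an additive functor carries chain homotopies to chain homotopies, so the induced Hopf homomorphisms on homology coincide. Excision reduces to the classical excision isomorphism $C_\bullet(K,K\cap L;\mathbb{Z}) \cong C_\bullet(K\cup L,L;\mathbb{Z})$ at the level of integral cellular complexes, to which I apply $(A,\phi_A)^{(-)}$. For the long exact sequence of a pair, the short exact sequence
\begin{align*}
0 \to C_\bullet(K';\mathbb{Z}) \to C_\bullet(K;\mathbb{Z}) \to C_\bullet(K,K';\mathbb{Z}) \to 0
\end{align*}
is split degreewise because relative cellular chains are free abelian. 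Applying the additive functor $(A,\phi_A)^{(-)}$ yields a degreewise split short exact sequence of chain complexes in $\mathsf{Hopf}^\mathsf{bc}_k$, and the long exact homology sequence then follows from the snake lemma applied inside the abelian category $\mathsf{Hopf}^\mathsf{bc}_k$ (Proposition \ref{202101042026}); naturality of the connecting morphism $\partial_q$ is obtained in the standard way.

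Finally, the dimension/coefficient computation is immediate: $C_\bullet(\mathrm{pt};\mathbb{Z})$ is concentrated in degree zero with value $\mathbb{Z}$, so the complex $(A,\phi_A)^{C_\bullet(\mathrm{pt};\mathbb{Z})}$ is $(A,\phi_A)^{\mathbb{Z}} \cong A$ in degree $0$ and $(A,\phi_A)^{0} \cong k$ elsewhere, giving $E_0(\mathrm{pt}) \cong A$ and $E_q(\mathrm{pt}) \cong k$ for $q\neq 0$. The main subtlety I expect is that Example \ref{202012251553} asserts exactness of $(A,\phi)^{(-)}$ only when $R$ is a field, because non-split short exact sequences in $\mathsf{P}_R$ need not be preserved by an additive functor in general. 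This is the reason I organise the argument so that exactness is only invoked on \emph{split} short exact sequences of finitely generated free abelian groups, where additivity alone suffices; the non-split extensions (e.g.\ those produced by boundaries inside a single CW-pair) never need to be preserved, since they are absorbed into the homology computation carried out inside $\mathsf{Hopf}^\mathsf{bc}_k$ itself.
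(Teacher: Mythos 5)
Your proposal takes essentially the same route as the paper: the paper also defines $H_q(L,K;A)$ as the homology Hopf algebra of the cellular chain complex with coefficients in $A$, which is precisely your $(A,\phi_A)^{C_\bullet(K,K';\mathbb{Z})}$ (since $(A,\phi_A)^{\mathbb{Z}^{\oplus n}}\cong A^{\otimes n}$ and the integral incidence matrices are sent to matrices of $\phi_A$-actions as in Remark \ref{202012282203}), and it defers the verification of the axioms and of independence of the CW structure to a cited companion paper. Your axiom checks are correct and supply exactly the omitted details; in particular your observation that only degreewise \emph{split} short exact sequences of finitely generated free abelian groups need to be preserved, so that additivity of the exponential functor suffices and the exactness hypothesis of Example \ref{202012251553} (which requires $R$ to be a field) is never invoked, is the right way to close the one genuine subtlety.
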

\begin{proof}
The $q$-th homology theory $H_q ( L , K ; A)$ is obtained by the homology Hopf algebra $H(A_\bullet)$ where $A_\bullet = \left( C^\mathrm{cell}_{q+1} (L , K ; A) \to C^\mathrm{cell}_{q} ( L, K ; A) \to C^\mathrm{cell}_{q-1} ( L , K ; A) \right)$ is the cellular chain complex with coefficients in $A$.
Such a construction does not depend on the choice of the CW-complex structures up to a natural isomorphism.
The details are given in \cite{kim2020homology}.
\end{proof}

\begin{Example}
Let $G$ be an abelian group and $A = kG$ be the group Hopf algebra.
By Example \ref{202012251535}, we have $H_q ( L , K ; A) \cong k H_q (L , K ; G)$, the group Hopf algebra of {\it the ordinary homology theory with coefficients in $G$}.
If $G$ is finite, then for the function Hopf algebra $A^\prime = k^G$ we have $H_q ( L , K ; A^\prime ) \cong k^{H^q (L, K ; G)}$, the function Hopf algebra of {\it the ordinary cohomology theory with coefficients in $G$}.
It follows from Example \ref{202012251538}.
\end{Example}

\begin{Example}
Let $R$ be a field.
Consider a bicommutative Hopf algebra $A$ with an $R$-action $\phi$.
Then we have $H_q ( L , K ; A) \cong (A, \phi)^{H_q ( L , K ; R)}$ by Example \ref{202012251553}.
\end{Example}

%%%%%%%%%%%%%%%%%%%%%%%
\section{Short abstract complex}
\label{202007061028}

In this section, we introduce a notion of short abstract complex which generalizes the polyhedral surfaces in Kitaev's model.
Roughly speaking, it consists of `cells' and `incidence numbers' which are essential in the construction of {\it $(\pm)$-stabilizers} in next section.
Furthermore, we define a homology Hopf algebra of short abstract complexes.
Throughout this section, let $R$ be a unital commutative ring.

\begin{Defn}
A quintuple $X = (X_+ , X_\circ , X_- , I_+ , I_- )$ is {\it a short abstract complex over $R$} if the following conditions hold :
\begin{enumerate}
\item
$X_+ , X_\circ , X_-$ are finite sets.
We call an element of $X_+, X_\circ , X_-$ by a $+$-cell, a $\circ$-cell, a $-$-cell respectively.
\item
$I_+ : X_+ \times X_\circ \to R$ and $I_- : X_\circ \times X_- \to R$ are maps.
We often use the notations $I_+ ( x_+ , x_\circ ) = [ x_+ : x_\circ ]$ and $I_- ( x_\circ , x_- ) = [ x_\circ : x_- ]$ and call them {\it incidence numbers}.
\item
$\sum_{x_\circ \in X_\circ} [ x_+ : x_\circ ] \cdot [ x_\circ : x_- ] = 0$ for any $x_+ \in X_+, x_- \in X_-$.
\end{enumerate}
\end{Defn}

\begin{remark}
The last condition equivalently means that the following homomorphisms $\partial_- , \partial_+$ form a chain complex.
In fact, a short abstract complex over $R$ is nothing but a freely and finitely generated short chain complex over $R$ with a distinguished basis.
\end{remark}

\begin{Defn}
For a short abstract complex $X$ over $R$, we define a short chain complex of $R$-modules $C_\bullet ( X )$ as follows.
It consists of three $R$-modules $C_+ ( X ), C_\circ ( X ), C_- ( X )$ where $C_\square ( X ) \stackrel{\mathrm{def.}}{=} \bigoplus_{x_\square \in X_\square} R$ for $\square = + , \circ , -$.
We define the boundary homomorphism $\partial_+ : C_+ ( X ) \to C_\circ ( X )$ by a homomorphism whose $( x_\circ , x_+ )$-component is $[ x_+ : x_\circ ]$.
Analogously, we define the boundary homomorphism $\partial_- : C_\circ ( X ) \to C_- ( X )$.
These data form a chain complex, i.e. $\partial_- \circ \partial_+ = 0$ since $X$ is an abstract complex over $R$.
Consider another ring $R^\prime$.
For a $(R^\prime , R)$-bimodule $M$, we define the $R^\prime$-module $H( X ; M )$ by the homology of the chain complex $M \otimes_R C( X )$.
We simply write $H( X ) = H(X ; R)$.
\end{Defn}

\begin{Defn}
\label{202101052236}
Let $X,Y$ be short abstract complexes over $R$.
We define {\it the product} $Y = X \times X^\prime$ by $Y_\square = X_\square \amalg X^\prime_\square$ for $\square = + , \circ , -$.
The incidence numbers are $[ x_\square : x^\prime_\triangle] = 0$ if $x_\square \in X_\square$, $x^\prime_\triangle \in X^\prime_\triangle$, and the same with those of $X$ and $X^\prime$ otherwise.
In other words, the incidence numbers of the product is determined by the chain isomorphism preserving the basis $C_\bullet (X) \oplus C_\bullet (X^\prime) \cong C_\bullet (X \times X^\prime)$.
\end{Defn}

\begin{Defn}
\label{202012251514}
Let $X$ be a short abstract complex over $R$.
We define {\it the transposition} as a short abstract complex $X^T = ( X^T_+ , X^T_\circ , X^T_- , I^\prime_+ , I^\prime_- )$ given by $X^T_+ \stackrel{\mathrm{def.}}{=} X_-$, $X^T_\circ \stackrel{\mathrm{def.}}{=} X_\circ$, $X^T_- \stackrel{\mathrm{def.}}{=} X_+$, $I^\prime_+ ( x^T_- , x^T_\circ) = I_- ( x_\circ , x_-  )$ and $I^\prime_- ( x^T_\circ , x^T_+) = I_- ( x_+ , x_\circ  )$.
Here, we use a formal notation $x^T_+ \in X^T_-$ ($x^T_- \in X^T_+$, resp.) to represent the element corresponding to $x_+ \in X_+$ ($x_- \in X_-$, resp.).
We simply write $C^\bullet (X) \stackrel{\mathrm{def.}}{=} C_\bullet (X^T)$.
\end{Defn}

\begin{Defn}
\label{202007080920}
Let $X,Y$ be short abstract complexes over $R$.
{\it An inclusion $s$ from $X$ to $Y$} is given by an injective map $s : (X_+ , X_\circ , X_-) \to (Y_+ , Y_\circ , Y_-)$ between triples of sets such that $[s(x_+) : s(x_\circ)] = [x_+ : x_\circ]$ and $[s(x_\circ) : s(x_-)] = [x_\circ : x_-]$ for any $x_+ \in X_+, x_\circ \in X_\circ , x_- \in X_-$.
Moreover, it satisfies the following {\it closed} conditions.
\begin{itemize}
\item
$[ s(x_\circ ) : y_- ] \neq 0$ then $y_- = s (x_-)$ for some $x_- \in X_-$.
\item
$[ s(x_+ ) : y_\circ ] \neq 0$ then $y_\circ = s (x_\circ)$ for some $x_\circ \in X_\circ$.
\end{itemize}
We denote by $s : X \rightarrowtail Y$.
If the injective map $s$ is a bijection, then $s : X \rightarrowtail Y$ is called {\it an isomorphism}.
For two inclusions $s_0 : X \rightarrowtail Y$ and $s_1 : Y \rightarrowtail Z$, we define their {\it composition} $s_1 \circ s_0$ as the usual composition of the maps between triples of sets.
We denote by $\mathsf{SAC}^\mathsf{inc}_R$ the category of short abstract complexes over $R$ and inclusions.
The product in Definition \ref{202101052236} gives a symmetric monoidal category structure.

{\it A restriction from $X$ to $Y$} is defined by an inclusion from $Y^T$ to $X^T$.
For an inclusion $s : Y^T \rightarrowtail X^T$, we denote by $s^T : X \twoheadrightarrow Y$ the associated restriction, and vice versa.
For two restrictions $r_0 : X \twoheadrightarrow Y$ and $r_1 : Y \twoheadrightarrow Z$, we define their {\it composition} $r_1 \diamond r_0 : X \twoheadrightarrow Z$ by $(r_1 \diamond r_0)^T = r^T_0 \circ r^T_1$.
We denote by $\mathsf{SAC}^\mathsf{res}_R$ the category of short abstract complexes over $R$ and restrictions.
The product in Definition \ref{202101052236} gives a symmetric monoidal category structure.
\end{Defn}

\begin{prop}
\label{202101072044}
An inclusion $s : X \rightarrowtail Y$ induces a chain map $C_\bullet (s) : C_\bullet (X) \to C_\bullet (Y)$ in a covariant way, i.e. $C_\bullet (s_1) \circ C_\bullet (s_0) = C_\bullet (s_1 \circ s_0)$.
Analogously, a restriction $r : X \twoheadrightarrow Y$ induces a chain map $C_\bullet (r) : C_\bullet (Y) \to C_\bullet (X)$ in a covariant way.
\end{prop}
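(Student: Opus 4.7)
The plan is to define $C_\bullet(s)$ as the $R$-linear extension of the underlying injection of cell-sets and to verify the chain-map property by a short computation that invokes both the incidence-preservation clause and the closed conditions from Definition \ref{202007080920}; the restriction case is then reduced to the inclusion case via the transposition functor $(-)^T$ of Definition \ref{202012251514}.

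For an inclusion $s : X \rightarrowtail Y$, set $C_\square(s)$ (for $\square \in \{+,\circ,-\}$) to be the $R$-linear map determined on bases by $x_\square \mapsto s(x_\square)$. To verify $\partial^Y_+ \circ C_+(s) = C_\circ(s) \circ \partial^X_+$, fix $x_+ \in X_+$ and expand
\begin{align*}
\partial^Y_+\bigl(s(x_+)\bigr) \;=\; \sum_{y_\circ \in Y_\circ} [\,s(x_+) : y_\circ\,]\, y_\circ .
\end{align*}
By the second closed condition of Definition \ref{202007080920}, any $y_\circ$ with nonzero coefficient must be of the form $s(x_\circ)$ for a (by injectivity, unique) $x_\circ \in X_\circ$; by incidence preservation one has $[s(x_+):s(x_\circ)] = [x_+:x_\circ]$, so the sum collapses to $\sum_{x_\circ}[x_+:x_\circ]\, s(x_\circ) = C_\circ(s)(\partial^X_+ x_+)$. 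The identity $\partial^Y_- \circ C_\circ(s) = C_-(s) \circ \partial^X_-$ is proved identically using the first closed condition. Covariance $C_\bullet(s_1 \circ s_0) = C_\bullet(s_1) \circ C_\bullet(s_0)$ is immediate, since $C_\bullet(-)$ is just the linear extension of a composition of maps of sets.

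For a restriction $r : X \twoheadrightarrow Y$, apply the inclusion case to $s := r^T : Y^T \rightarrowtail X^T$ to obtain a chain map $C_\bullet(s) : C_\bullet(Y^T) \to C_\bullet(X^T)$. Under the identification $C^\bullet(-) = C_\bullet((-)^T)$ of Definition \ref{202012251514}, this furnishes the chain map $C_\bullet(r)$ asserted by the proposition. Covariance then follows from the defining identity $(r_1 \diamond r_0)^T = r_0^T \circ r_1^T$ combined with the covariance already established for inclusions.

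The main technical point lies entirely in the inclusion case: a priori $\partial^Y(s(x_+))$ could involve cells $y_\circ \notin \mathrm{im}(s)$ that would spoil the chain-map identity. The closed conditions of Definition \ref{202007080920} are tailored precisely to forbid such contributions, and this single observation is the crux of the argument; everything else is bookkeeping about the linear extension of set-level maps.
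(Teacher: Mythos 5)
Your proof is correct and follows essentially the same route as the paper: define $C_\bullet(s)$ by linear extension of the set-level injection, observe that the closed conditions of Definition \ref{202007080920} are exactly what make the chain-map identity hold (the paper only remarks this without the explicit computation you supply), and reduce the restriction case to the inclusion case via the transposition $(-)^T$. No gaps.
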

\begin{proof}
Let $s: X \rightarrowtail Y$ be an inclusion.
Then it induces a chain map $s_\ast : C_\bullet (X) \to C_\bullet (Y)$ by $s_\ast (r (x_\circ )) = (r ( s(x_\circ) ))$.
Here, we identify $C_\circ (X)$ with the $R$-valued functions on the set $X_\circ$.
Note that if $s$ does not satisfies the closed condition then it is ill-defined as a chain map.
If $r : X \twoheadrightarrow Y$ is a restriction, then it induces an inclusion $r^T : Y^T \rightarrowtail X^T$ which induces a chain map $C^\bullet (r^T) : C^\bullet (X^T ) \to C^\bullet (Y^T)$.
By definitions, we obtain $C_\bullet (r) \stackrel{\mathrm{def.}}{=} C^\bullet (r^T) : C_\bullet (X ) \to C_\bullet (Y)$.
\end{proof}

Before we give examples of short abstract complexes, we give an overview of {\it incidence numbers of cells} in a CW-complex \cite{hatcher} \cite{massey}.
Let $q$ be a natural number.
For a $q$-cell $c_q$ and $(q-1)$-cell $c_{q-1}$ of a CW complex, the incidence number $[c_q : c_{q-1}]$ is an integer determined by the coefficient of $c_{q-1}$ in the boundary $\partial c_q$ where they are considered in the cellular chain complex.
The incidence number is zero if the cells $c_q, c_{q-1}$ do not meet but the reverse is not true.
The incidence numbers could be any integer depending on the CW-pair $(L , K)$.
See the following example.
\begin{Example}
\label{202012282309}
Let $n$ be an arbitrary natural number.
Consider a solid $n$-sided polygon $L^\prime$ whose CW-complex structure is determined by one face, $n$ edges and $n$ vertices.
We give an orientation to its boundary, and let $Q_n$ be the quotient CW-complex of $L^\prime$ obtained by identifying all the edges equipped with the orientation.
$L = Q_n$ has one face $d_2$, one edge $d_1$, and one vertex $d_0$ with $[d_1 : d_0] = 0$, $[d_2 : d_1] = n$.
For example, $Q_1$ is the 2-disk and $Q_2$ is the real projective surface.
\end{Example}
If $L$ is a polyhedral complex, then the incidence numbers are one of $0,1,(-1)$ as follows.

\begin{prop}
\label{201907171754}
Let $K$ be a polyhedral complex.
For a $q$-cell $c_q$ and $(q-1)$-cell $c_{q-1}$, the incidence number $[c_q : c_{q-1} ]$ is $0$, $1$, or $(-1)$.
It is $0$ if and only if $c_{q-1}$ is not a face of $c_q$.
If $c_{q-1}$ is a face of $c_q$, then the incidence number is determined by difference of the orientation of $c_{q-1}$ and that of the boundary of $c_q$ : $1$ if the orientations are the same and $(-1)$ otherwise.
\end{prop}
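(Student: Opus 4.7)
The plan is to reduce the proposition to a degree computation on spheres via the standard description of the cellular boundary. Fix characteristic maps $\chi_{c_q} : D^q \to \overline{c_q}$ and $\chi_{c_{q-1}} : D^{q-1} \to \overline{c_{q-1}}$ realizing the chosen orientations, and write $\phi_q = \chi_{c_q}|_{S^{q-1}}$ for the attaching map of $c_q$. By the standard formula for cellular boundaries, $[c_q : c_{q-1}]$ equals the degree of the composite
\[
S^{q-1} \xrightarrow{\phi_q} L^{(q-1)} \twoheadrightarrow L^{(q-1)} / (L^{(q-1)} \setminus c_{q-1}) \cong S^{q-1},
\]
where the final identification is induced by $\chi_{c_{q-1}}$. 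With this formulation, each assertion of the proposition becomes a question about the degree of this map.

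If $c_{q-1}$ is not a face of $c_q$, then $\overline{c_q} \cap c_{q-1} = \emptyset$, so $\phi_q(S^{q-1}) \subset \overline{c_q} \setminus c_q$ avoids $c_{q-1}$ entirely. The composite therefore sends all of $S^{q-1}$ to the collapsed basepoint and has degree $0$, giving the implication ``not a face $\Rightarrow [c_q : c_{q-1}] = 0$''. The converse direction will be a formal consequence of the next paragraph, where $c_{q-1}$ being a face forces the degree to be $\pm 1 \neq 0$.

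Now suppose $c_{q-1}$ is a face of $c_q$. Here the polyhedral hypothesis is crucial: each characteristic map is a homeomorphism onto its closure, so $\overline{c_q}$ is a genuine polytope containing $\overline{c_{q-1}}$ as a codimension-one face. Consequently, the preimage $D := \chi_{c_q}^{-1}(\overline{c_{q-1}}) \subset S^{q-1}$ is a closed $(q-1)$-disk and $\chi_{c_q}|_D : D \to \overline{c_{q-1}}$ is a homeomorphism. After the quotient collapses $S^{q-1} \setminus D$ together with $\partial D$ to the basepoint, the composite factors through a homeomorphism $S^{q-1} \to S^{q-1}$, whose degree is therefore $\pm 1$.

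The main obstacle is identifying this sign with the orientation comparison stated in the proposition. I would handle it by working in a chart around an interior point of $c_{q-1}$ in which $\overline{c_q}$ looks locally like a Euclidean half-space with $\overline{c_{q-1}}$ as its bounding hyperplane. The induced orientation on $\overline{c_{q-1}}$ as a piece of $\partial \overline{c_q}$ is then determined by the outward-normal convention, and the degree equals $+1$ precisely when $\chi_{c_{q-1}}$ transports this induced boundary orientation to the standard orientation of $D^{q-1}$, i.e.\ when the chosen orientation of $c_{q-1}$ agrees with the orientation induced from $\partial c_q$. Carefully tracking this sign through the collapse and the identification $L^{(q-1)}/(L^{(q-1)} \setminus c_{q-1}) \cong S^{q-1}$ is the one genuinely computational step; everything else follows from regularity of the polyhedral CW structure and the defining formula for the cellular boundary.
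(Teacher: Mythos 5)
Your argument is correct: it is the standard degree-theoretic computation of the cellular incidence number, using regularity of the polyhedral structure to see that the attaching map carries a closed $(q-1)$-disk of $S^{q-1}$ homeomorphically onto $\overline{c_{q-1}}$ and collapses the rest, which is exactly the content of the reference (Lemma 7.1 of Massey) that the paper cites in lieu of a proof. The only place you stop short of full detail is the final orientation bookkeeping identifying the sign $\pm 1$ with the comparison of the chosen orientation of $c_{q-1}$ against the boundary orientation induced from $c_q$, but your outward-normal/local-chart reduction is the right way to finish it and introduces no gap in the logic.
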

\begin{proof}
It is proved in Lemma 7.1 \cite{massey} (for regular CW-complexes).
\end{proof}

We give examples of short abstract complexes appearing in topology.

\begin{Example}
\label{202012261747}
Consider a nonnegative integer $q$.
For a finite CW-pair $(L, K)$, we denote by $X_\circ$ the set of $q$-cells of the pair $(L, K)$, i.e. $q$-cells of $K$ which does not lie in $L$.
Similarly, let $X_\pm$ be the set of $(q \pm 1)$-cells of $( L, K )$.
Let $I_+ ( x_+ , x_\circ ) \in R$ and $I_- ( x_\circ , x_-) \in R$ be the incidence number $[x_+ : x_\circ]_\mathrm{cell}$ and $[x_\circ : x_-]_\mathrm{cell}$ between cells respectively.
Then $\Xi_q ( L , K ; R) = ( X_+ , X_\circ , X_- , I_+ , I_- )$ is a short abstract complex over $R$.
It follows from a classical algebraic topology that $H ( \Xi_q ( L , K ; R) ) \cong H_q (L , K ; R)$.
\end{Example}

\begin{Example}
\label{202012261748}
Fix a pointed finite CW-complex $F$ with the basepoint $\ast$.
We define $\Xi^F_q ( L , K ; R) \stackrel{\mathrm{def.}}{=} \Xi_q ( L \times F , K \times F \cup L \times \{ \ast \} ; R)$.
Then $\Xi^F_q ( L , K ; R)$ is a short abstract complex over $R$ which is functorial with respect to the pair $(L, K)$.
Here, the functoriality means that an embedding $i : (L_0 , K_0 ) \hookrightarrow (L_1 , K_1)$ induces an inclusion $\Xi^F_q ( L_0 , K_0 ; R) \rightarrowtail \Xi^F_q ( L_1 , K_1 ; R)$ in a covariant way.
Note that if $F = S^0$, i.e. the pointed 0-sphere, then there exists an obvious isomorphism $\Xi^F_q ( L , K ; R) \cong \Xi_q ( L , K ; R)$ so that this example is a generalization of the previous one.
\end{Example}

\begin{Example}
\label{202012272320}
There is another example induced by a finite CW-pair $(L, K)$.
Let $X^\prime_\circ, X^\prime_+, X^\prime_-$ be the set of $q$-cells, $(q-1)$-cells, $(q+1)$-cells of the pair $(L , K )$.
Let $I^\prime_+ (x_+ , x_\circ ) = [x_\circ : x_+]_\mathrm{cell} , I^\prime_- (x_\circ , x_-) = [x_- : x_\circ]_\mathrm{cell} \in R$ be the incidence numbers between cells.
Then $\Xi^q ( L , K ; R) = ( X^\prime_+ , X^\prime_\circ , X^\prime_- , I^\prime_+ , I^\prime_- )$ is a short abstract complex over $R$.
Note any embedding between pairs induces a restriction (see Definition \ref{202007080920}) between the short abstract complexes in a contravariant way.
We remark that we have $\Xi_q ( L , K ; R)^T = \Xi^q ( L , K ; R)$ by definitions.
Especially, there exists a natural isomorphism $H ( \Xi^q ( L , K ; R) ) \cong H^q (L , K ; R)$ where $H^q (L , K ; R)$ denotes the ordinary cohomology theory with coefficients in $R$.
\end{Example}

\begin{Defn}
Let $A$ be a bicommutative Hopf algebra and $\phi$ be an $R$-action on $A$.
We define a (short) chain complex of $X$ with coefficients in $(A,\phi)$.
It consists of three bicommutative Hopf algebras $C_+ ( X; A , \phi )$, $C_\circ ( X ; A , \phi )$, $C_- ( X ; A , \phi)$ where $C_\square ( X ; A , \phi ) \stackrel{\mathrm{def.}}{=} \bigotimes_{x_\square \in X_\square} A$ for $\square = + , \circ , -$.
Let us recall the last part of Remark \ref{202012251341}.
We define the boundary homomorphism $\partial_+ : C_+ ( X; A , \phi ) \to C_\circ ( X; A , \phi )$ by a Hopf homomorphism whose $( x_\circ , x_+ )$-component is $\phi ( [ x_+ : x_\circ ] ) \in End (A )$.
Analogously, we define the boundary homomorphism $\partial_- : C_\circ ( X; A , \phi ) \to C_- ( X; A , \phi )$.
These data form a chain complex, i.e. $\partial_- \circ \partial_+ = 1$ since $X$ is an abstract complex over $R$.
Denote by $C_\bullet ( X ; A , \phi )$ the chain complex.
We define {\it the homology Hopf algebra of $X$ with coefficients in $(A , \phi)$} by
\begin{align}\notag
H(X ; A , \phi) \stackrel{\mathrm{def.}}{=} H ( C_\bullet ( X ; A , \phi ) ) . 
\end{align}
\end{Defn}

\begin{prop}
\label{202101072053}
The assignment $H(X ; A, \phi)$ is covariant with respect to inclusions (restrictions, resp.) of short abstract complexes.
\end{prop}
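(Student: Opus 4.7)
The plan is to factor the assignment $X \mapsto H(X; A, \phi)$ through a chain-level functor and then apply the homology functor in the abelian category $\mathsf{Hopf}^\mathsf{bc}_k$. The key observation I would establish first is that $C_\bullet(X; A, \phi)$ coincides, termwise, with the image of the $R$-module chain complex $C_\bullet(X)$ under the commutative exponential functor $(A, \phi)^{(-)} : \mathsf{P}_R \to \mathsf{Hopf}^\mathsf{bc}_k$. On objects this is immediate from symmetric monoidality, since $C_\square(X) = \bigoplus_{x_\square} R$ is sent to $\bigotimes_{x_\square} A = C_\square(X; A, \phi)$. On morphisms it follows from the matrix description of Hopf homomorphisms between tensor powers of $A$ recalled in Remark \ref{202012251341}: the boundary $\partial_+$ of $C_\bullet(X)$ is represented by the matrix with entries $[x_+ : x_\circ] \in R$, and its image under $(A, \phi)^{(-)}$ is the Hopf homomorphism represented by the matrix with entries $\phi([x_+ : x_\circ]) \in \mathrm{End}(A)$, which is exactly the boundary in $C_\bullet(X; A, \phi)$; likewise for $\partial_-$.

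Given an inclusion $s : X \rightarrowtail Y$, Proposition \ref{202101072044} produces a chain map $C_\bullet(s) : C_\bullet(X) \to C_\bullet(Y)$ of free $R$-module chain complexes. Applying the functor $(A, \phi)^{(-)}$ to each object and morphism turns the commuting squares into commuting squares in $\mathsf{Hopf}^\mathsf{bc}_k$, yielding a chain map $C_\bullet(s; A, \phi) : C_\bullet(X; A, \phi) \to C_\bullet(Y; A, \phi)$ of bicommutative Hopf algebras. Since $\mathsf{Hopf}^\mathsf{bc}_k$ is abelian by Proposition \ref{202101042026}, homology is a functor from chain complexes to $\mathsf{Hopf}^\mathsf{bc}_k$ (Definition \ref{202012261818}), so this delivers the desired Hopf homomorphism $H(s; A, \phi) : H(X; A, \phi) \to H(Y; A, \phi)$. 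Covariance is then a composition of three compatible functorialities: Proposition \ref{202101072044} gives $C_\bullet(s_1 \circ s_0) = C_\bullet(s_1) \circ C_\bullet(s_0)$; functoriality of $(A, \phi)^{(-)}$ propagates this identity to the Hopf-algebra chain maps; and homology preserves composition of chain maps.

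The restriction case is handled by the same three-step argument, using the second half of Proposition \ref{202101072044} (which, via the transposition of Definition \ref{202012251514}, produces $C_\bullet(r) : C_\bullet(Y) \to C_\bullet(X)$ covariantly in $r$ for a restriction $r : X \twoheadrightarrow Y$). The only point that asks for genuine verification is that the exponential functor sends a chain map in $\mathsf{P}_R$ to a chain map in $\mathsf{Hopf}^\mathsf{bc}_k$, i.e. that the differential condition $\partial_- \circ \partial_+ = 0$ in $\mathsf{P}_R$ is transported to $\partial_- \circ \partial_+ = 1$ in $\mathsf{Hopf}^\mathsf{bc}_k$. This is automatic from the commutative exponential property: the zero morphism in $\mathsf{P}_R$ factors through the monoidal unit, and $(A, \phi)^{(-)}$ preserves this factorization, landing on $\eta_A \circ \epsilon_A$. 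With this bookkeeping in place the proposition reduces to a direct consequence of Propositions \ref{202101042026} and \ref{202101072044}, so I do not expect any substantial obstacle beyond carefully naming the three functors involved and checking the diagrammatic identities at each step.
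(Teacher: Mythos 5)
Your proposal is correct and takes essentially the same route as the paper, whose entire proof reads ``It is immediate from definitions and Proposition \ref{202101072044}''; you have simply supplied the details that the author leaves implicit, namely that the $R$-linear chain map $C_\bullet(s)$ transfers to a chain map in $\mathsf{Hopf}^\mathsf{bc}_k$ and that homology in an abelian category is functorial. Your packaging of the transfer step through the exponential functor $(A,\phi)^{(-)}$ is consistent with how the paper itself uses that functor in Example \ref{202012251553}, though a direct argument (apply $\phi$ entrywise to the matrices of Remark \ref{202012251341} and use that $\phi$ is a ring homomorphism) would work equally well and avoids the caveat that $(A,\phi)^{(-)}$ is only defined up to natural isomorphism.
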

\begin{proof}
It is immediate from definitions and Proposition \ref{202101072044}.
\end{proof}

\begin{Example}
\label{202012252255}
An abelian group $G$ has a natural $\mathbb{Z}$-action, i.e. $(\phi (n) ) (g) = g^n$ for $n \in \mathbb{Z}, g\in G$.
It induces a $\mathbb{Z}$-action $\phi$ on the group Hopf algebra $A = k G$.
For a short abstract complex $X$ over $\mathbb{Z}$, the homology Hopf algebra $H(X ; A, \phi)$ is computed as the group Hopf algebra $k H(X ; G)$ by Example \ref{202012251535}.
If $G$ is a finite group, then we have a $\mathbb{Z}$-action $\phi^\prime$ on the function Hopf algebra $A^\prime = k^G$.
The homology Hopf algebra $H(X ; A^\prime , \phi^\prime )$ is computed as the function Hopf algebra $k^{H(X ; G)}$ by Example \ref{202012251538}.
\end{Example}

\begin{Example}
Suppose that $R$ is a field.
Let $A$ be a bicommutative Hopf algebra equipped with an $R$-action $\phi$.
For a short abstract complex $X$ over $R$, we have a natural isomorphism $H( X ; A , \phi) \cong ( A , \phi )^{H(X)}$ by Example \ref{202012251553}.
Moreover if we choose a basis of $H(X)$, we obtain an isomorphism $H(X) \cong R^{\oplus n(X)}$ where $n(X)$ is the dimension.
Then $( A , \phi )^{H(X)} \cong ( A , \phi )^{R^{\oplus n(X)}} \cong A^{\otimes n(X)}$ implies an isomorphism $H( X ; A , \phi) \cong A^{\otimes n(X)}$ depending on the basis of $H(X)$.
\end{Example}

%%%%%%%%%%%%%%%%%%%%%%%%%%

\section{$(\pm)$-stabilizers induced by bicommutative Hopf algebras}
\label{202007062235}

In this section, we formulate local stabilizers and local Hamiltonian in our framework : the $(+)$-stabilizer, $(-)$-stabilizer and the elementary operator.
They are defined on short abstract complexes by a bicommutative Hopf algebra with an $R$-action.
We study the transposition duality of the local stabilizers and the elementary operator which yields their Poincar\'e-Lefschetz duality in section \ref{201907220234}.

\begin{Defn}
\label{202012282052}
Let $A$ be a bicommutative Hopf algebra.
Let $X$ be a short abstract complex over $R$.
For a short abstract complex $X$ over $R$, we define {\it the total space $V ( X ; A )$ over $X$} by the vector space $\bigotimes_{x_\circ \in X_\circ} A$.
\end{Defn}

\begin{Defn}
\label{202012282048}
Let $A$ be a finite-dimensional bisemisimple bicommutative Hopf algebra with the normalized integral $\sigma_A$ and cointegral $\sigma^A$.
Let $\phi$ be an $R$-action on $A$.
We define {\it the $(+)$-stabilizer} $\mathds{S}^+ ( X ,  x_+ ; A , \phi )$ and {\it the $(-)$-stabilizer} $ \mathds{S}^- ( X ,  x_- ; A , \phi )$ as follows.
For $a \in A$, we obtain an element of $V(X ; A )$ by iterating the comultiplication of $A$.
We make use of the Sweedler notation as follows.
Note that the Sweedler notation below is well-defined since $A$ is bicommutative.
\begin{align}\notag
\Delta^{X_\circ} ( a ) = \left( \cdots \otimes a^{(x_\circ)} \otimes \cdots \right) = \bigotimes_{x_\circ \in X_\circ} a^{(x_\circ)}
\end{align}
By using the Sweedler notation, we define an endomorphism $P = \mathds{S}^+ ( X ,  x_+ ; A , \phi )$ on $V ( X ; A )$ for $x_+ \in X_+$.
\begin{align}\notag
P ( \bigotimes_{x_\circ \in X_\circ} v_{x_\circ} ) \stackrel{\mathrm{def.}}{=}  \bigotimes_{x_\circ \in X_\circ} \left( ( \phi ( [ x_+ : x_\circ ] ) \sigma^{(x_\circ)}_A  ) \cdot v_{x_\circ} \right) .
\end{align}
It could be described by the string diagram in the left part of Figure \ref{202006261118}.
If there is no confusion, we abbreviate $\mathds{S}^+ ( x_+ ;  A, \phi)$ for $\mathds{S}^+ ( X , x_+ ;  A, \phi)$.
In a symmetric fashion, we define an endomorphism $P^\prime = \mathds{S}^- ( X , x_- ; A , \phi )$ on $V ( X ; A )$ for $x_- \in X_-$.
\begin{align}\notag
P^\prime ( \bigotimes_{x_\circ \in X_\circ} v_{x_\circ} ) \stackrel{\mathrm{def.}}{=}  \sigma^A \left( \prod_{x_\circ \in X_\circ} ( \phi ( [ x_\circ : x_- ] ) v^{(2)}_{x_\circ} ) \right) \cdot \bigotimes_{x_\circ \in X_\circ } v^{(1)}_{x_\circ} .
\end{align}
It could be described by the string diagram in the right part of Figure \ref{202006261118}.
If there is no confusion, we abbreviate $\mathds{S}^- ( x_- ;  A, \phi)$ for $\mathds{S}^- ( X , x_- ;  A, \phi)$.
\begin{figure}[ht]
  \includegraphics[width=15cm]{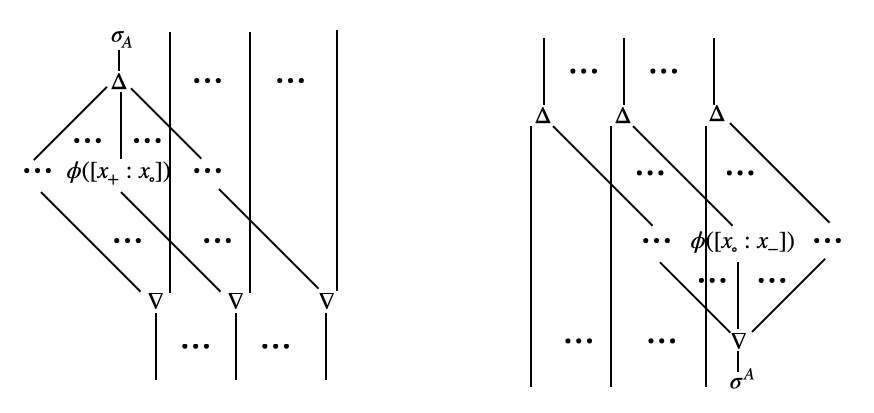}
  \caption{The stabilizers $\mathds{S}^+ ( X, x_+ ;  A , \phi)$ and $\mathds{S}^- ( X, x_- ;  A , \phi)$}
  \label{202006261118}
\end{figure}

{\it The elementary operator of $X$ induced by $(A, \phi)$} is an endomorphism $\mathds{H} ( X ; A , \phi )$ on $V ( X ; A)$ defined by the following equation.
Here, $id$ denotes the identity on the space $V( X ; A )$.
\begin{align}\notag
\mathds{H} ( X ; A , \phi ) \stackrel{\mathrm{def.}}{=} \sum_{x_+ \in X_+} ( id - \mathds{S}^+ ( x_+ ; A , \phi ) ) + \sum_{x_- \in X_-} ( id - \mathds{S}^- ( x_- ; A , \phi ) ) .
\end{align}
We define a subspace $V_0 ( X ; A , \phi)$ of $V ( X ; A )$ by the kernel space of $\mathds{H} ( X ; A , \phi )$, i.e.
\begin{align}\notag
V_0 ( X ; A , \phi) &\stackrel{\mathrm{def.}}{=} \mathrm{Ker} \left( \mathds{H} ( X ; A , \phi ) \right) , \\
&= \{ v \in V ( X ; A ) ~;~ \mathds{H} ( X ; A , \phi )v = 0 \} . \notag
\end{align}
It turns out that $V_0 ( X ; A , \phi)$ is the 0-eigenspace of the elementary operator by Theorem \ref{202006302032}.
\end{Defn}

\begin{remark}
The $(\pm)$-stabilizers and the elementary operator are symmetric or Hermitian with respect to an appropriate pairing on $V(X ; A)$.
See subsection \ref{202007062233}.
\end{remark}

\begin{remark}
\label{202012282203}
Any integer could be considered as an element of a unital ring $R$ by $n \cdot 1_R$ where $1_R$ denotes the unit of $R$.
For the convenience of the readers, we give an explanation about the $\phi$-action of $n \in \mathbb{Z}$.
\begin{align}\notag
( \phi (n) ) v = 
\begin{cases}
\epsilon_A (v) \cdot \eta_A ~&(\text{for } n =0) , \\
v ~&(\text{for } n =1), \\
S_A (v) ~&(\text{for } n =-1) .
\end{cases}
\end{align}
For arbitrary $n \neq 0$, the followings hold where we use the Sweedler notation.
\begin{align}\notag
( \phi (n) ) v = 
\begin{cases}
v^{(1)} v^{(2)} \cdots v^{(n)} ~&(\text{for } n > 0) , \\
S_A( v^{(1)} v^{(2)} \cdots v^{(-n)} ) ~&(\text{for } n < 0)
\end{cases}
\end{align}
\end{remark}

We give some examples based on a finite abelian group $G$.
For $g \in G$, we denote by $\hat{g}$ the $k$-valued function on $G$ such that $\hat{g} ( h ) = 1$ if $g = h$ and $\hat{g} ( h ) = 0$ otherwise.

\begin{Example}
\label{202012261805}
Let $A = kG$ be the group Hopf algebra of a finite abelian group.
Suppose that the characteristic of $k$ is coprime to the order of $G$.
Note that the assumption implies that $A$ is bisemisimple by Example \ref{202101052222}.
Recall the $\mathbb{Z}$-action $\phi$ on $A$ in Example \ref{202012252255}.
Let $X$ be a short abstract complex over $\mathbb{Z}$.
Then the associated stabilizers are computed as follows where $e$ denotes the unit of $G$.
\begin{align}\notag
\mathds{S}^+ ( x_+ ; A , \phi ) &: \otimes_{x_\circ} h_{x_\circ} \mapsto |G|^{-1} \sum_{g\in G} \otimes_{x_\circ} g^{[x_+ : x_\circ]} h_{x_\circ} , \notag \\
\mathds{S}^- ( x_- ; A , \phi ) &:  \otimes_{x_\circ} h_{x_\circ} \mapsto \hat{e} ( \prod_{x_\circ} h^{[x_\circ : x_-]}_{x_\circ} ) \cdot \otimes_{x_\circ} h_{x_\circ} . \notag
\end{align}
\end{Example}

\begin{Example}
Under the same hypothesis in the Example \ref{202012261805}, let $A = k^G$ the function Hopf algebra of $G$.
Consider the canonical $\mathbb{Z}$-action $\phi$.
For a $+$-cell $x_+$ of $X$, let $W(x_+)$ be the set of $\left( t_{x_\circ} \right)_{x_\circ \in X_\circ} \in G^{X_\circ}$ such that $\prod_{x_\circ} t^{[x_+ : x_\circ]}_{x_\circ} = e$.
\begin{align}\notag
\mathds{S}^+ ( x_+ ; A , \phi ) &: \otimes_{x_\circ} \hat{h}_{x_\circ} \mapsto | W(x_+) | \cdot \otimes_{x_\circ} \hat{h}_{x_\circ}  , \notag \\
\mathds{S}^- ( x_- ; A , \phi ) &:  \otimes_{x_\circ} \hat{h}_{x_\circ} \mapsto |G|^{-1} \sum_{t \in G} \sum_{g_{x_\circ} = h_{x_\circ}  t^{-[x_\circ : x_-]}} \otimes_{x_\circ} \hat{g}_{x_\circ} . \notag
\end{align}
\end{Example}

We denote by $A^\vee$ the dual Hopf algebra of $A$.
In other words, its underlying vector space is a dual of $A$ and its Hopf algebra structure is induced by the duals of that of $A$.
An $R$-action $\phi$ on $A$ induces an $R$-action $\phi^\vee$ on the dual $A^\vee$ via the duality.
Then the transposition duality of $(\pm)$-stabilizers and the elementary operator is given by the following proposition.

\begin{prop}
\label{202006302146}
Under the isomorphism of vector spaces $V(X^T ; A ) \cong V( X ; A^\vee )^\vee$, we have $\mathds{S}^+ (X^T ,  x^T_- ; A , \phi ) = \mathds{S}^- ( X , x_- ; A^\vee , \phi^\vee )^\vee$ and $\mathds{S}^- ( X^T , x^T_+ ; A , \phi ) = \mathds{S}^+ ( X , x_+ ; A^\vee , \phi^\vee )^\vee$.
In particular, we have $\mathds{H} ( X^T ; A , \phi ) = \mathds{H} ( X ; A^\vee , \phi^\vee )^\vee$.
\end{prop}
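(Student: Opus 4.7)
The plan is to reduce both identities to the string-diagrammatic observation that the defining diagrams of $\mathds{S}^+$ and $\mathds{S}^-$ in Figure \ref{202006261118} become one another upon vertical reflection together with the standard substitutions $\nabla_A \leftrightarrow \Delta_{A^\vee}$, $\eta_A \leftrightarrow \epsilon_{A^\vee}$, $\sigma_A \leftrightarrow \sigma^{A^\vee}$, $\phi \leftrightarrow \phi^\vee$, combined with the fact that the transposition $X \mapsto X^T$ swaps the incidences in exactly the matching way, $[x^T_-:x^T_\circ]_{X^T} = [x_\circ:x_-]_X$ and $[x^T_\circ:x^T_+]_{X^T} = [x_+:x_\circ]_X$, by Definition \ref{202012251514}.

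First I would record the elementary duality facts. For a finite-dimensional bisemisimple bicommutative Hopf algebra $A$, reflexivity $A^{\vee\vee} \cong A$ together with Proposition \ref{202101041125} identifies the normalized integrals and cointegrals across duality: $\sigma_{A^\vee} = \sigma^A$ and $\sigma^{A^\vee} = \sigma_A$. By the very definition of $\phi^\vee$, the dual of the Hopf endomorphism $\phi(r)$ is $\phi^\vee(r)$. Finally, $V(X^T; A) = \bigotimes_{x_\circ \in X_\circ} A \cong V(X; A^\vee)^\vee$ is the slotwise reflexivity isomorphism, since $X^T_\circ = X_\circ$.

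I would then verify the first equality $\mathds{S}^+(X^T, x^T_-; A, \phi) = \mathds{S}^-(X, x_-; A^\vee, \phi^\vee)^\vee$ by evaluating both sides in Sweedler notation. Using the transposition rule, the left-hand side sends $\bigotimes_{x_\circ} v_{x_\circ}$ to $\bigotimes_{x_\circ} (\phi([x_\circ:x_-])\sigma_A^{(x_\circ)}) \cdot v_{x_\circ}$, where $\Delta^{X_\circ}(\sigma_A) = \bigotimes \sigma_A^{(x_\circ)}$. Pairing the right-hand side with $\bigotimes f_{x_\circ} \in V(X; A^\vee)$ unwinds to $\sigma^{A^\vee}$ applied to the iterated product of $\phi^\vee([x_\circ:x_-]) f^{(2)}_{x_\circ}$'s, contracted slot-by-slot with the $v_{x_\circ}$'s against the $f^{(1)}_{x_\circ}$'s. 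Dualizing converts (i) the iterated multiplication of $A^\vee$ on the $f^{(2)}$-slots into the iterated comultiplication of $A$ applied to $\sigma^{A^\vee} = \sigma_A$, (ii) the slotwise coproduct of $A^\vee$ into the slotwise product of $A$ acting on each $v_{x_\circ}$, and (iii) $\phi^\vee$ into $\phi$; the resulting expression coincides termwise with the left-hand side. The second equality follows symmetrically, by applying the first identity with $(A,\phi,X)$ replaced by $(A^\vee,\phi^\vee,X^T)$ and using $A^{\vee\vee}=A$ and $X^{TT}=X$. The identity for $\mathds{H}$ is then immediate because $X^T_+ = X_-$ and $X^T_- = X_+$ identify the two summations in $\mathds{H}(X^T;A,\phi)$ with the dualized summations of $\mathds{H}(X;A^\vee,\phi^\vee)$.

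The main obstacle I expect is notational rather than conceptual: the iterated coproduct $\Delta^{X_\circ}(\sigma_A)$ is indexed by an unordered finite set, so tracking Sweedler components across every tensor slot through the $\phi$-twist and the duality pairing requires either a fully diagrammatic argument or careful appeal to the bicommutativity of $A$ in order to justify the freedom to rearrange tensor factors. Committing to a precise convention for iterated Sweedler notation at the outset will keep the calculation honest.
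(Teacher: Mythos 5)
Your proposal is correct and follows the same route as the paper, which simply declares the statement immediate from the definitions: the content is exactly the unwinding you describe, namely that transposition swaps the incidence data, duality swaps $\sigma_A \leftrightarrow \sigma^{A^\vee}$ and $\nabla \leftrightarrow \Delta$, and the slotwise pairing converts the $(-)$-stabilizer formula into the $(+)$-stabilizer formula. Your explicit Sweedler computation (and the caveat about fixing a convention for the unordered iterated coproduct, justified by bicommutativity) is a faithful elaboration of what the paper leaves implicit.
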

\begin{proof}
It is immediate from definitions.
\end{proof}

\section{Proof of the main theorem}
\label{202101031448}

We study the eigenspaces of the elementary operator.
In particular, we prove that the 0-eigenspace is naturally isomorphic to the homology Hopf algebra.
Note that our discussion is carried out baed on short abstract complexes, not on a special kind of Hamiltonian.
Such abstractness is useful for the construction of examples in following section.
%%%%%%%%%%%%%%%%%%%%%%%%%%
\subsection{Eigenspaces of the elementary operator}
\label{202007062231}

In this subsection, we construct a natural linear isomorphism between the kernel space of the elementary operator and the homology Hopf algebra.
Firstly, we observe that the stabilizers are a (co)action of the normalized (co)integral through the boundary homomorphism as follows.

\begin{Defn}
\begin{enumerate}
\item
Let $\alpha : A \otimes X \to X$ be a left action of a Hopf algebra $A$ on a vector space $X$.
For linear homomorphism $a : k \to A$, we define an endomorphism $L_\alpha ( a )$ on $X$ by a composition $\alpha \circ ( a \otimes id_X )$.
\item
Let $\beta : Y \to Y \otimes B$ be a right coaction of a Hopf algebra $B$ on a vector space $Y$.
For linear homomorphism $b : B \to k$, we define an endomorphism $R^\beta ( b )$ on $Y$ by a composition $(id_Y \otimes b ) \circ \beta$.
\end{enumerate}
\end{Defn}

\begin{Lemma}
\label{202006291020}
\begin{enumerate}
\item
For $x_+ \in X_+$, we have $\mathds{S}^+ ( x_+ ; A , \phi ) = L_{\alpha_{\partial_+}} ( i_{x_+} \circ \sigma_A )$.
Here, $i_{x_+} : A \to \bigotimes_{x_+ \in X_+} A$ is the inclusion into the $x_+$-component.
\item
For $x_- \in X_-$, we have $\mathds{S}^- ( x_- ; A , \phi ) = R^{\beta_{\partial_-}} ( \sigma^A \circ p_{x_-} )$.
Here, $p_{x_-} : \bigotimes_{x_-\in X_-} A \to A$ is the projection onto the $x_-$-component.
\end{enumerate}
\end{Lemma}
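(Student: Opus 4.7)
The plan is to unfold both sides of each identity by exploiting the fact that $\mathsf{Hopf}^\mathsf{bc}_k$ is an abelian category whose biproduct is the tensor product. The canonical inclusion $i_{x_+} : A \to C_+(X;A,\phi)$ and projection $p_{x_-} : C_-(X;A,\phi) \to A$ of the biproduct are given on elements by $a \mapsto \eta_A \otimes \cdots \otimes a \otimes \cdots \otimes \eta_A$ (placing $a$ in the $x_+$-factor) and by applying $\epsilon_A$ to every tensor factor except the $x_-$-one. Since the abelian-group structure on Hopf-homomorphism sets is convolution (Remark \ref{202012251341}), the matrix decompositions implied by Remark \ref{202012251341} read
\begin{align}\notag
\partial_+ = \sum\nolimits^{*}_{x_+, x_\circ} i_{x_\circ} \circ \phi([x_+ : x_\circ]) \circ p_{x_+}, \qquad \partial_- = \sum\nolimits^{*}_{x_\circ, x_-} i_{x_-} \circ \phi([x_\circ : x_-]) \circ p_{x_\circ},
\end{align}
where $\sum^{*}$ denotes iterated convolution of Hopf homomorphisms.

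For part (1), I would first compute the ``column'' $\partial_+ \circ i_{x_+}$. Using $p_{x_+^\prime} \circ i_{x_+} = \delta_{x_+^\prime, x_+} \cdot \mathrm{id}_A$ from the biproduct axioms, this collapses to $\sum^{*}_{x_\circ} i_{x_\circ} \circ \phi([x_+ : x_\circ])$. Since the iterated convolution of morphisms $A \to C_\circ(X;A,\phi)$ factoring through distinct tensor factors of $C_\circ$ equals the tensor product of their values, invoking the Sweedler notation $\Delta^{X_\circ}(\sigma_A) = \bigotimes_{x_\circ} \sigma_A^{(x_\circ)}$ yields
\begin{align}\notag
(\partial_+ \circ i_{x_+})(\sigma_A) = \bigotimes_{x_\circ \in X_\circ} \phi([x_+ : x_\circ])(\sigma_A^{(x_\circ)}) .
\end{align}
Then $L_{\alpha_{\partial_+}}(i_{x_+} \circ \sigma_A)(v) = \nabla_{C_\circ(X;A,\phi)}\bigl((\partial_+ \circ i_{x_+})(\sigma_A) \otimes v\bigr)$, and substituting $v = \bigotimes v_{x_\circ}$ reproduces the defining formula of $\mathds{S}^+ ( X , x_+ ; A , \phi )$ in Definition \ref{202012282048}.

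For part (2), I would dualize the argument and compute the ``row'' $p_{x_-} \circ \partial_- : C_\circ(X;A,\phi) \to A$. Since $p_{x_-}$ is a Hopf homomorphism it preserves convolution, so $p_{x_-} \circ \partial_- = \sum^{*}_{x_\circ} \phi([x_\circ : x_-]) \circ p_{x_\circ}$. As a Hopf homomorphism out of a tensor product of Hopf algebras, it is multiplicative, so on a simple tensor it factors as $(p_{x_-} \circ \partial_-)(\bigotimes w_{x_\circ}) = \prod_{x_\circ} \phi([x_\circ : x_-])(w_{x_\circ})$, with the counit cancellations from the unused projections discarding the cross terms. Unpacking $R^{\beta_{\partial_-}}(\sigma^A \circ p_{x_-})(v) = (\mathrm{id} \otimes \sigma^A \circ p_{x_-} \circ \partial_-) \circ \Delta_{C_\circ(X;A,\phi)}(v)$ with $\Delta_{C_\circ}(\bigotimes v_{x_\circ}) = (\bigotimes v_{x_\circ}^{(1)}) \otimes (\bigotimes v_{x_\circ}^{(2)})$ (using bicommutativity to permute factors) then matches the formula of $\mathds{S}^- ( X , x_- ; A , \phi )$.

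The main obstacle is bookkeeping: keeping straight which comultiplications, convolutions, and projections act on which tensor factor of $C_+$, $C_\circ$, $C_-$, and consistently invoking bicommutativity to identify multiplication in a tensor-product Hopf algebra with concatenation of factors. Once the biproduct structure and the matrix decomposition of $\partial_\pm$ are carefully set up, both identities reduce to essentially formal Sweedler manipulations.
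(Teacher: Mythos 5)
Your proposal is correct and is essentially the paper's own proof written out in full detail: the paper simply reads the identities off from the string diagrams in Figure \ref{202006261118} together with the definition of $\partial_\pm$ as the matrix of components $\phi([x_+:x_\circ])$ and $\phi([x_\circ:x_-])$, which is exactly the unwinding you perform explicitly via the convolution/biproduct decomposition. The extra bookkeeping you flag (distributing composition over convolution, the counit cancellations killing the cross terms, and cocommutativity making the Sweedler notation $\bigotimes_{x_\circ}\sigma_A^{(x_\circ)}$ well-defined) is all sound and matches Definition \ref{202012282048} on the nose.
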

\begin{proof}
The operator $\mathds{S}^+ ( x_+ ; A , \phi )$ is described by using the first string diagram in Figure \ref{202006261118}.
It is nothing but the operator $L_{\alpha_{\partial_+}} ( i_{x_+} \circ \sigma_A )$ by the definition of $\partial_+$.
The remaining part is proved analogously by using the second part in Figure \ref{202006261118}.
\end{proof}

In the computation of the kernel space of the Hamiltonian in the Kitaev model, the commutativity of stabilizers play an important role.
In our framework, we derive the commutativity from {\it the differential axiom} of chain complexes (see Definition \ref{202012261818}) :

\begin{Lemma}
\label{201907031035}
Consider a short chain complex of bicommutative Hopf algebras $C_+ \stackrel{\partial_+}{\to} C_\circ \stackrel{\partial_-}{\to} C_-$.
The induced left action $(C_+ , \alpha_{\partial_+} , C_\circ)$ and the induced right coaction $(C_\circ , \beta_{\partial_-} , C_-)$ commute with each other in the sense that the diagram below commutes :
\begin{equation}\notag
\begin{tikzcd}
C_+ \otimes C_\circ \ar[r, "id_{C_+} \otimes \beta_{\partial_-}"] \ar[d, "\alpha_{\partial_+}"] & C_+ \otimes C_\circ \otimes C_- \ar[d, "\alpha_{\partial_+} \otimes id_{C_-}"] \\
C_\circ \ar[r, "\beta_{\partial_-}"] & C_\circ \otimes C_-
\end{tikzcd}
\end{equation}
\end{Lemma}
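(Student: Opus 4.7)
The plan is to verify the square on elements in Sweedler notation. Unwinding the definitions of $\alpha_{\partial_+}$ and $\beta_{\partial_-}$, the claim amounts to showing that for $a \in C_+$ and $x \in C_\circ$,
\[
\beta_{\partial_-}\bigl(\partial_+(a) \cdot x\bigr) \;=\; \partial_+(a) \cdot x^{(1)} \otimes \partial_-(x^{(2)}),
\]
since the left-hand side is the top-right path of the diagram applied to $a \otimes x$ and the right-hand side is the bottom-left path.

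The computation I have in mind uses three ingredients in turn. First, since $\Delta_{C_\circ}$ is an algebra homomorphism (the bialgebra axiom for $C_\circ$) and $\partial_-$ is an algebra homomorphism, the left-hand side rewrites as $\partial_+(a)^{(1)} \cdot x^{(1)} \otimes \partial_-(\partial_+(a)^{(2)}) \cdot \partial_-(x^{(2)})$. Next, because $\partial_+$ is a Hopf (hence coalgebra) homomorphism, one has $\partial_+(a)^{(1)} \otimes \partial_+(a)^{(2)} = \partial_+(a^{(1)}) \otimes \partial_+(a^{(2)})$, which lets me pull the Sweedler splitting back into $C_+$. Finally, the differential axiom $\partial_- \circ \partial_+ = \eta_{C_-} \circ \epsilon_{C_+}$ collapses $\partial_-(\partial_+(a^{(2)}))$ into $\epsilon_{C_+}(a^{(2)}) \cdot \eta_{C_-}$, and absorbing this scalar via the counit axiom $a^{(1)} \epsilon_{C_+}(a^{(2)}) = a$ returns precisely $\partial_+(a) \cdot x^{(1)} \otimes \partial_-(x^{(2)})$, matching the right-hand side.

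There is no real obstacle here: the argument is a formal chain of Hopf-algebraic identities, each invoked exactly once. The main care required is notational, namely tracking which Sweedler index refers to which coproduct and using the coalgebra-homomorphism property of $\partial_+$ to migrate the splitting from $C_\circ$ back to $C_+$ before applying the differential axiom. An equivalent element-free proof would express each leg of the square as a composite of the structure maps $\nabla_{C_\circ}, \Delta_{C_\circ}, \partial_+, \partial_-$ drawn as a string diagram and rewrite using the same three ingredients; the bicommutativity of the category does not enter the argument in any essential way beyond the standing hypotheses making the composite $\partial_- \circ \partial_+$ meaningful as the zero morphism of $\mathsf{Hopf}^{\mathsf{bc}}_k$.
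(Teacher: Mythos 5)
Your argument is correct and is essentially the paper's proof: the paper verifies the same square by a string-diagram computation whose only nontrivial input is the differential axiom $\partial_- \circ \partial_+ = \eta_{C_-} \circ \epsilon_{C_+}$, combined with the bialgebra compatibility and the fact that $\partial_\pm$ are Hopf homomorphisms --- exactly the chain of identities in your Sweedler-notation calculation. (The only slip is cosmetic: the side of your displayed equation that you call the ``top-right path'' is in fact the down-then-right composite and vice versa, which has no bearing on the validity of the argument.)
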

\begin{proof}
Note that we have $\partial_- \circ \partial_+ = 1$.
Then the claim follows from the string diagrams in Figure \ref{201907021738}.
\begin{figure}[ht]
  \includegraphics[width=11cm]{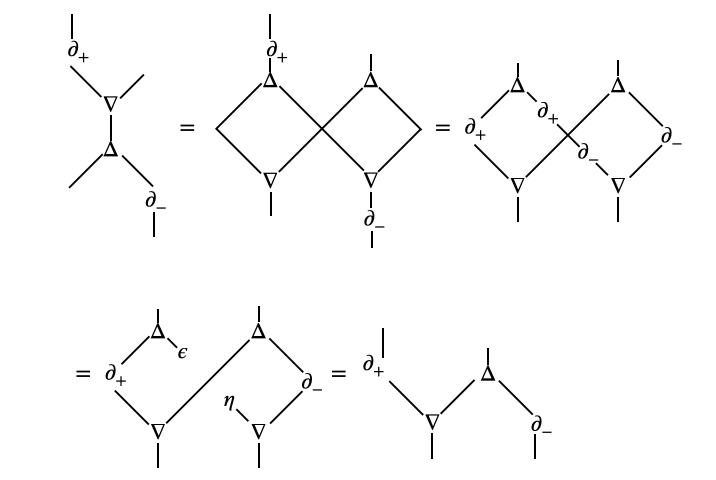}
  \caption{}
  \label{201907021738}
\end{figure}
\end{proof}

\begin{Lemma}
\label{202006291038}
All of the $(\pm)$-stabilizers $\mathds{S}^+ ( x_+ ; A , \phi )$ and $\mathds{S}^- ( x_- ; A , \phi )$ are idempotents which commute with each other.
\end{Lemma}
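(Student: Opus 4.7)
The plan is to reduce every claim to Lemma \ref{202006291020}, which rewrites the stabilizers as $\mathds{S}^+ ( x_+ ; A , \phi ) = L_{\alpha_{\partial_+}} ( i_{x_+} \circ \sigma_A )$ and $\mathds{S}^- ( x_- ; A , \phi ) = R^{\beta_{\partial_-}} ( \sigma^A \circ p_{x_-} )$, combined with Lemma \ref{201907031035} on the commutation of the induced action and coaction along the boundary. This way the problem becomes one about compositions of operators of the form $L_\alpha(a)$ and $R^\beta(b)$, where the key input from the Hopf algebra $A$ is the defining property of the normalized (co)integral.

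For idempotency, I would first observe that the left action axiom gives $L_\alpha(a) \circ L_\alpha(a') = L_\alpha( \nabla_{C_+} \circ (a \otimes a') )$, hence $\mathds{S}^+(x_+ ; A, \phi)^2 = L_{\alpha_{\partial_+}}( i_{x_+}(\sigma_A \cdot \sigma_A) )$. Because $\sigma_A$ is a normalized integral, $\sigma_A \cdot \sigma_A = \epsilon_A(\sigma_A)\,\sigma_A = \sigma_A$, and idempotency of $\mathds{S}^+(x_+ ; A, \phi)$ follows. The corresponding statement for $\mathds{S}^-(x_- ; A, \phi)$ is obtained by the dual computation, using $\sigma^A * \sigma^A = \sigma^A$ for the convolution of the normalized cointegral with itself.

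For commutativity of two stabilizers of the same sign, the same compositional identity gives $\mathds{S}^+(x_+ ; A, \phi) \circ \mathds{S}^+(x'_+ ; A, \phi) = L_{\alpha_{\partial_+}}( (i_{x_+}\sigma_A) \cdot (i_{x'_+}\sigma_A) )$. When $x_+ \neq x'_+$ the two factors live in disjoint tensor components of the (bi)commutative Hopf algebra $C_+(X ; A, \phi)$ and therefore commute, while the case $x_+ = x'_+$ is absorbed into idempotency. The argument for two $(-)$-stabilizers is dual, replacing multiplication in $C_+$ by comultiplication in $C_-$ and using cocommutativity.

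The main obstacle is the mixed commutativity of a $(+)$- and a $(-)$-stabilizer, and this is where Lemma \ref{201907031035} does the real work. That lemma states $\beta_{\partial_-} \circ \alpha_{\partial_+} = (\alpha_{\partial_+} \otimes id_{C_-}) \circ (id_{C_+} \otimes \beta_{\partial_-})$. A short Sweedler-notation diagram chase then shows that for any $a : k \to C_+$ and $b : C_- \to k$ both composites $R^{\beta_{\partial_-}}(b) \circ L_{\alpha_{\partial_+}}(a)$ and $L_{\alpha_{\partial_+}}(a) \circ R^{\beta_{\partial_-}}(b)$ send $y \in C_\circ$ to $\alpha_{\partial_+}(a \otimes y^{(0)}) \cdot b(y^{(1)})$ with $\beta_{\partial_-}(y) = y^{(0)} \otimes y^{(1)}$. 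Specializing to $a = i_{x_+} \circ \sigma_A$ and $b = \sigma^A \circ p_{x_-}$ yields the desired commutation of $\mathds{S}^+(x_+ ; A, \phi)$ and $\mathds{S}^-(x_- ; A, \phi)$; the only bookkeeping to check is that the string-diagrammatic identity behind Lemma \ref{201907031035} genuinely transfers to equality of these two composites, which can be read off directly from Figure \ref{201907021738}.
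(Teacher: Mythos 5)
Your proposal is correct and follows essentially the same route as the paper: both reduce the stabilizers to the operators $L_{\alpha_{\partial_+}}(i_{x_+}\circ\sigma_A)$ and $R^{\beta_{\partial_-}}(\sigma^A\circ p_{x_-})$ via Lemma \ref{202006291020}, derive idempotency from $\sigma_A\cdot\sigma_A=\sigma_A$ (and its convolution dual), obtain same-sign commutativity from commutativity of the tensor-product Hopf algebra, and settle the mixed case by the action--coaction commutation of Lemma \ref{201907031035}. Your version simply spells out the Sweedler-notation verification that the paper leaves implicit.
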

\begin{proof}
Note that the normalized integral $\sigma_A$ is an idempotent in $A$.
By Lemma \ref{202006291020}, the operator $\mathds{S}^+ ( x_+ ; A , \phi)$ is an idempotent.
Moreover, the operators $\mathds{S}^+ ( x_+ ; A , \phi)$ commute with each other since the Hopf algebra $A$ is commutative.
Analogously, the operators $\mathds{S}^- ( x_- ; A , \phi)$ are idempotents which commute with each other.
All that remain is to prove that $\mathds{S}^+ ( x_+ ; A , \phi)= L_{\alpha_{\partial_+}} ( i_{x_+} \circ \sigma_A )$ and $\mathds{S}^- ( x_- ; A , \phi) = R^{\beta_{\partial_-}} ( \sigma^A \circ p_{x_-} )$ commute with each other.
It follows from Lemma \ref{201907031035}.
\end{proof}

\begin{theorem}
\label{202006302032}
Let $A$ be a finite-dimensional bisemisimple bicommutative Hopf algebra with an $R$-action $\phi$.
For a short abstract complex $X$ over $R$, the following statements hold.
\begin{enumerate}
\item
The eigenspace of the elementary operator $\mathds{H} ( X ; A , \phi)$ gives a direct sum decomposition of $V(X ; A)$.
\item
$0 \in k$ is an eigenvalue of $\mathds{H} ( X ; A , \phi)$.
Furthermore, there exists a natural isomorphism of vector spaces between the 0-eigenspace $V_0 ( X ; A , \phi)$ and the homology Hopf algebra $H(X ; A, \phi)$.
\end{enumerate}
\end{theorem}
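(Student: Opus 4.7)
The plan is to reduce part (1) to simultaneous spectral decomposition of commuting idempotents and part (2) to the identification of the kernel of the elementary operator with the homology Hopf algebra via the equivalence between invariant and stabilized (co)actions in the bisemisimple setting (Proposition~\ref{202101041125}). For (1), Lemma~\ref{202006291038} tells us that the operators $\mathds{S}^+(x_+)$ and $\mathds{S}^-(x_-)$ are pairwise commuting idempotents, and hence so are the summands $(id - \mathds{S}^\pm)$ of $\mathds{H}(X;A,\phi)$. For any finite family $\{Q_j\}_{j \in J}$ of commuting idempotents on a vector space one obtains a direct sum decomposition $V(X;A) = \bigoplus_{S \subseteq J} V_S$, where $V_S$ is the joint subspace on which $Q_j$ acts as the identity for $j \in S$ and as $0$ for $j \notin S$. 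Since $\mathds{H}(X;A,\phi)$ acts on $V_S$ as multiplication by $|S|$, each eigenspace is the direct sum of the corresponding $V_S$'s, proving (1).

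For (2), the $0$-eigenspace $V_0(X;A,\phi)$ is the simultaneous $1$-eigenspace of all stabilizers. Using Lemma~\ref{202006291020} together with the integral identity $a \sigma_A = \epsilon_A(a) \sigma_A$ and the fact that $C_+(X;A,\phi) = \bigotimes_{x_+} A$ is generated as an algebra by the sub-Hopf algebras $i_{x_+}(A)$, I would verify that $\mathds{S}^+(x_+) v = v$ for every $x_+$ if and only if $v$ lies in the invariant space $\alpha_{\partial_+} \backslash \backslash V(X;A)$. The dual argument using the cointegral identifies the simultaneous $1$-eigenspace of the operators $\mathds{S}^-(x_-)$ with the stabilized space $V(X;A) \backslash \beta_{\partial_-}$. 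Hence $V_0(X;A,\phi) = (\alpha_{\partial_+} \backslash \backslash V(X;A)) \cap (V(X;A) \backslash \beta_{\partial_-})$ as subspaces of $V(X;A)$.

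Next, Proposition~\ref{202101041128} identifies $V(X;A) \backslash \beta_{\partial_-}$ with the underlying vector space of $\mathrm{Ker_H}(\partial_-)$. The differential axiom $\partial_- \circ \partial_+ = 1$ forces $\partial_+$ to factor as $\mathrm{ker_H}(\partial_-) \circ \bar{\partial}_+$, so the $C_+$-action $\alpha_{\partial_+}$ restricts to an action $\alpha_{\bar{\partial}_+}$ on $\mathrm{Ker_H}(\partial_-)$ whose invariant space is precisely the intersection above. Since $A$ is bisemisimple, so is its tensor power $C_+(X;A,\phi)$; Proposition~\ref{202101041125} then ensures that the canonical map $_{\alpha_{\bar{\partial}_+}} \gamma$ is a linear isomorphism onto the stabilized space $\alpha_{\bar{\partial}_+} \backslash \mathrm{Ker_H}(\partial_-)$, which by Proposition~\ref{202101041128} is the underlying space of $\mathrm{Cok_H}(\bar{\partial}_+) = H(X;A,\phi)$ in view of Definition~\ref{202012261818}. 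Since the zero object $k$ of $\mathsf{Hopf}^\mathsf{bc}_k$ has a one-dimensional underlying vector space, $V_0(X;A,\phi)$ is always at least one-dimensional, and $0$ is indeed an eigenvalue.

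The step I expect to require the most care is the one in the second paragraph: passing from the condition $\mathds{S}^+(x_+) v = v$ for every individual $x_+$ to full invariance of $v$ under the tensor-product action of $\bigotimes_{x_+} A$ requires combining the integral identity with multiplicativity across the tensor factors, and the coaction side demands an analogous dual argument. Once this identification is in place the rest is an orchestration of results already established in Sections~\ref{202007062155} and \ref{202007062154}, and naturality of the final isomorphism follows term-by-term from the naturality of each intermediate construction.
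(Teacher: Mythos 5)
Your proposal is correct and follows essentially the same route as the paper: part (1) via the commuting-idempotent decomposition from Lemma \ref{202006291038}, and part (2) by identifying the simultaneous $1$-eigenspace with the invariant space of $\alpha_{\partial_+}$ intersected with the stabilized space of $\beta_{\partial_-}$, then invoking Propositions \ref{202101041128} and \ref{202101041125} to pass to $\alpha_{\bar{\partial}_+} \backslash \mathrm{Ker_H}(\partial_-) \cong \mathrm{Cok_H}(\bar{\partial}_+) = H(X;A,\phi)$. The only cosmetic difference is that the paper packages the ``fixed by every $\mathds{S}^+(x_+)$'' condition as invariance under the single product operator $\prod_{x_+}\mathds{S}^+(x_+) = L_{\alpha_{\partial_+}}(\sigma_{C_+(X;A)})$, whereas you argue cell-by-cell using generation of $C_+(X;A,\phi)$ by the tensor factors; these are equivalent.
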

\begin{proof}
By Lemma \ref{202006291038}, the elementary operator $\mathds{H} ( X ; A , \phi )$ is a sum of commutative idempotents.
It proves the first claim.

From now on, we show that the kernel of the linear endomorphism $\mathds{H} ( X ; A , \phi )$ is naturally isomorphic to the underlying vector space of homology Hopf algebra $H(X ; A , \phi)$.
In fact, we have
\begin{align}\notag
&\mathrm{Ker} ( \mathds{H} ( X ; A , \phi ) )\\
&= 
\{ v \in V( X ; A) ~;~ \mathds{S}^+ ( x_+  ; A , \phi) v = v , ~ \mathds{S}^- ( x_-  ; A , \phi) v = v ,~ x_+ \in X_+ , ~ x_- \in X_-  \} , \notag \\
&=
\{ v \in V(X ; A ) ~;~ \prod_{x_+ \in X_+} \mathds{S}^+ ( x_+ ; A , \phi )v = v , ~ \prod_{x_- \in X_-} \mathds{S}^- ( x_- ; A , \phi )v = v \} . \notag
\end{align}
Note that Lemma \ref{202006291020} implies $\prod_{x_+ \in X_+} \mathds{S}^+ ( x_+ ; A , \phi ) = L_{\alpha_{\partial_+}} ( \sigma_{C_+ ( X ; A)} )$ and $\prod_{x_- \in X_-} \mathds{S}^- ( x_- ; A , \phi ) = R^{\beta_{\partial_-}} ( \sigma^{C_- ( X ; A)} )$.
The kernel $\mathrm{Ker} ( \mathds{H} ( X ; A , \phi ) )$ is the invariant subspace under the action of the normalized integral $\sigma_{C_+ ( X ; A)}$ and the coaction of normalized cointegral $\sigma^{C_- ( X ; A)}$.
Therefore, we obtain a linear isomorphism since the underlying vector space of $C_\circ ( X ; A , \phi )$ is $V(X ; A , \phi)$ :
\begin{align}\notag
\mathrm{Ker} ( \mathds{H} ( X ; A , \phi ) ) \cong   \alpha_{\bar{\partial}_+} \backslash \backslash \left( C_\circ ( X ; A , \phi )  \backslash \beta_{\partial_-} \right) .
\end{align}
Here, $\bar{\partial}_+ : C_+ ( X ; A, \phi) \to \mathrm{Ker_H} ( \partial_- )$ is the induced homomorphism where we identify $\mathrm{Ker_H} ( \partial_- ) \cong C_\circ ( X ; A , \phi )  \backslash \beta_{\partial_-}$ by Proposition \ref{202101041128}.
The Hopf algebra $C_+ ( X ; A , \phi)$ has a normalized integral so that $_{\bar{\partial}_+}\gamma : \alpha_{\bar{\partial}_+} \backslash \backslash \mathrm{Ker_H} ( \partial_- ) \to  \alpha_{\bar{\partial}_+} \backslash \mathrm{Ker_H} ( \partial_- )$ is an isomorphism by Proposition \ref{202101041125}.
Hence, we obtain 
\begin{align}\notag
\mathrm{Ker} ( \mathds{H} ( X ; A , \phi ) )  \cong  \alpha_{\bar{\partial}_+} \backslash \mathrm{Ker_H} ( \partial_- ) .
\end{align}
By Proposition \ref{202101041128}, we have $\alpha_{\bar{\partial}_+} \backslash \mathrm{Ker_H} ( \partial_- ) \cong \mathrm{Cok_H} ( \bar{\partial}_+ ) = H( X ; A, \phi)$.
It completes the proof.
\end{proof}

%%%%%%%%%%%%%%%%%%%%%%%%
\subsection{Comparison of the pairings}
\label{202007062233}

It is known that a finite-dimensional Hopf algebra has a Frobenius form \cite{pareigis1971hopf} \cite{collins2019hopf}, i.e. a nondegenerate bilinear form $\mathrm{e} : A \times A \to k$ such that $\mathrm{e} ( ab, c) = \mathrm{e} ( a, bc)$.
It induces a bilinear form on its subspace, in particular the 0-eigenspace.
In subsection \ref{202101022034}, we prove that the 0-eigenspace {\it equipped with the induced bilinear form} does not depend on the CW-complex structure.
Furthermore, we prove that 0-eigenspace equipped with the bilinear form is isomorphic to the homology Hopf algebra equipped with the Frobenius form.

In subsection \ref{202101021119}, we focus on the case that $A$ is a Hopf $\ast$-algebra and $\phi$ is compatible with the $\ast$-structure.
In that case, we obtain a Hermitian Frobenius form, i.e. a nondegenerate Hermitian form $\mathrm{e} : A \times A \to \mathbb{C}$ such that $\mathrm{e} ( ab, c) = \mathrm{e} ( a, b^\ast c)$.
Moreover, if $A$ is a Hopf C$^\ast$-algebra then the pairing is an inner product.
We prove a theorem analogous to the above one.

\subsubsection{The symmetric bilinear forms over an arbitrary field $k$}
\label{202101022034}

For a finite-dimensional Hopf algebra $A$ with an integral $\sigma \neq 0$ and a cointegral $\sigma^\prime \neq 0$, it is known that $\sigma^\prime \circ \sigma \in k$ is invertible \cite{BKLT}.
For a finite-dimensional bisemisimple Hopf algebra $A$, we define $\mathrm{vol}^{-1} (A) = \sigma^A \circ \sigma_A$ where $\sigma_A$ is the normalized integral and $\sigma^A$ is the normalized cointegral.
Then $\mathrm{vol}^{-1} (A)$ is invertible.

\begin{Defn}
\label{202101041016}
Let $A$ be a finite-dimensional bisemisimple bicommutative Hopf algebra over $k$.
Let $\mathrm{e}_A : A \times A \to k$ be a symmetric bilinear form defined by
\begin{align}\notag
\mathrm{e}_A ( x, y) \stackrel{\mathrm{def.}}{=} \mathrm{vol}^{-1} (A)^{-1} \cdot \sigma^A ( xy ) .
\end{align}
\end{Defn}

\begin{Example}
\label{202012271516}
Consider the group Hopf algebra $A = k G$ where the characteristic of $k$ is coprime to the order of $G$.
Then $\mathrm{vol}^{-1} (A) = \hat{e} ( |G|^{-1} \sum_{g \in G} g ) = |G|^{-1}$ so that we have $\mathrm{e}_A ( g, h) = |G| \hat{e} ( gh)$ for $g , h \in G \subset A$.
\end{Example}

\begin{Example}
\label{202012271517}
Consider the function Hopf algebra $A = k^G$.
Again we obtain $\mathrm{vol}^{-1} (A) = |G|^{-1}$ so that we obtain $\mathrm{e}_A ( \hat{g} , \hat{h} ) =  \hat{e} (g^{-1} h)$.
\end{Example}

\begin{Example}
Consider $D_1$ in Example \ref{202101052257}.
We identify $\mathbb{F}_9 = \mathbb{F}_3 [\omega] / ( \omega^2 + 1)$ and put $a = (1,0,0) , b_1 = (0,1,0), b_2 = (0,\omega , 0) , c_1 = (0,0,1), c_2= (0,0,\omega) \in \mathbb{F}_3 \oplus \mathbb{F}_9 \oplus \mathbb{F}_9$.
For $x,y = a, b_1, b_2, c_1, c_2$, $e_{D_1} (x, y)$ is computed as follows.
\begin{center}
\begin{tabular}{|c|c|c|c|c|c|}
\hline
   & $a$ & $b_1$ & $b_2$ & $c_1$ & $c_2$ \\ \hline
$a$ & -1 & 0 & 0 & 0 & 0 \\
$b_1$ & 0 & 1 & 0 & 0 & 0 \\
$b_2$ & 0 & 0 & -1 & 0 & 0 \\
$c_1$ & 0 & 0 & 0 & 1 & 0 \\
$c_2$ & 0 & 0 & 0 & 0 & -1 \\
\hline
\end{tabular}
\end{center}
\end{Example}

\begin{prop}
The bilinear form $\mathrm{e}_A$ is a Frobenius form and we have $\iota ( \dim A ) = \mathrm{vol}^{-1} (A)^{-1}$ where $\iota : \mathbb{Z} \to k ; 1 \mapsto 1$.
\end{prop}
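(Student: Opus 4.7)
The plan is to verify the Frobenius axioms and the dimension identity in sequence. Symmetry $\mathrm{e}_A(x,y) = \mathrm{e}_A(y,x)$ follows from commutativity of $\nabla_A$, and the associativity relation $\mathrm{e}_A(ab,c) = \mathrm{e}_A(a,bc)$ follows from associativity of $\nabla_A$, since both sides equal $\mathrm{vol}^{-1}(A)^{-1} \sigma^A(abc)$. Nondegeneracy of the pairing $(x,y) \mapsto \sigma^A(xy)$ is the classical theorem of Larson--Sweedler and Pareigis (\cite{pareigis1971hopf}, \cite{collins2019hopf}): any nonzero cointegral on a finite-dimensional Hopf algebra induces a nondegenerate Frobenius form. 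Multiplying by the invertible constant $\mathrm{vol}^{-1}(A)^{-1}$ preserves nondegeneracy.

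For the identity $\iota(\dim A) = \mathrm{vol}^{-1}(A)^{-1}$, my plan is a double computation of the trace of the left multiplication operator $L_{\sigma_A}: A \to A$, $b \mapsto \sigma_A b$. On the one hand, the integral relation $b \cdot \sigma_A = \epsilon_A(b) \sigma_A$ shows $L_{\sigma_A}$ is a rank-one operator with image $k \cdot \sigma_A$, so its trace equals $\epsilon_A(\sigma_A) = 1$ by the normalization condition. On the other hand, choose a basis $\{e_i\}$ of $A$ with dual basis $\{f_i\}$ under the Frobenius pairing $\sigma^A(xy)$ (which is nondegenerate by the previous paragraph), and set the Casimir $C := \sum_i e_i f_i \in A$. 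The standard Frobenius-algebra identity $\mathrm{tr}(L_a) = \sigma^A(a C)$ combined with the integral property yields $\mathrm{tr}(L_{\sigma_A}) = \sigma^A(\sigma_A \cdot C) = \epsilon_A(C) \cdot \sigma^A(\sigma_A) = \epsilon_A(C) \cdot \mathrm{vol}^{-1}(A)$. Equating the two computations reduces the desired identity to $\epsilon_A(C) = \iota(\dim A)$.

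The cleanest route to $\epsilon_A(C) = \iota(\dim A)$ is base change to an algebraic closure $\bar{k}$. Since integrals and cointegrals base change, $\bar{A} := A \otimes_k \bar{k}$ is a bisemisimple bicommutative Hopf algebra over $\bar{k}$; by the remark following Example~\ref{202101052222}, $\bar{A} \cong \bar{k}^H$ for some finite abelian group $H$ whose order is coprime to $\mathrm{char}\,k$. A direct computation with the normalized (co)integrals of $\bar{k}^H$ (as in Example~\ref{202101052222}) yields $\epsilon(C) = |H| = \dim_{\bar{k}} \bar{A}$ in $\bar{k}$. Since $\epsilon_A(C)$ and $\iota(\dim_k A)$ both lie in $k \subseteq \bar{k}$ and agree over $\bar{k}$, the identity holds over $k$. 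The main obstacle is precisely this identification: base change is clean but appeals to the classification of bisemisimple bicommutative Hopf algebras over algebraically closed fields. A self-contained alternative would interpret $\chi_{\mathrm{reg}}(a) := \mathrm{tr}(L_a) \in A^\vee$ as a cointegral of $A$ (using the tensor-absorption isomorphism $V \otimes A \cong V^{\mathrm{triv}} \otimes A$ of left $A$-modules, which is available since bicommutativity forces $S_A^2 = \mathrm{id}_A$) and invoke uniqueness of cointegrals in the bisemisimple setting to conclude $\chi_{\mathrm{reg}} = \iota(\dim A) \cdot \sigma^A$.
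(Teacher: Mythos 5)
Your proof is correct, but it takes a genuinely different route from the paper's. The paper constructs an explicit copairing $\mathrm{i}_A = (S_A \otimes id_A)\circ \Delta_A \circ \sigma_A$ and checks that $(\mathrm{e}_A, \mathrm{i}_A)$ satisfies the snake identities; this yields nondegeneracy directly (no appeal to Larson--Sweedler) and identifies the scalar $\mathrm{e}_A \circ \mathrm{i}_A$ both as the categorical dimension $\iota(\dim A)$ of the self-dual object $A$ in $\mathsf{Vec}_k$ and, via a one-line application of the antipode axiom $\nabla_A \circ (S_A \otimes id_A)\circ \Delta_A = \eta_A \circ \epsilon_A$, as $\mathrm{vol}^{-1}(A)^{-1}$. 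You instead cite the classical Frobenius property for nondegeneracy and prove the dimension identity by computing $\mathrm{tr}(L_{\sigma_A})$ twice, reducing to $\epsilon_A(C) = \iota(\dim A)$ for the Casimir $C$, which you settle by base change to $\bar{k}$ and the classification $\bar{A} \cong \bar{k}^H$. Both are two-way evaluations of essentially the same scalar (your Casimir is, up to the snake identities, the paper's $\mathrm{i}_A$ followed by multiplication), and all your intermediate steps check out: $L_{\sigma_A}$ is rank one with trace $\epsilon_A(\sigma_A)=1$, the Frobenius trace formula is applied correctly in the commutative setting, and normalized (co)integrals do base change so that Proposition~\ref{202101041125} gives bisemisimplicity of $\bar{A}$. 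What the paper's argument buys is self-containedness: everything follows from the Hopf axioms in a few lines. What yours buys is modularity, at the cost of resting the final step on the Remark after Example~\ref{202101052222}, which the paper asserts without proof; if you want to avoid that dependency, your alternative sketch (the regular character $\chi_{\mathrm{reg}} = \mathrm{tr}(L_{(-)})$ is a cointegral, hence a scalar multiple of $\sigma^A$, and comparing values at $\eta_A$ and at $\sigma_A$ gives $\iota(\dim A)\cdot \mathrm{vol}^{-1}(A) = 1$) is the cleaner self-contained route --- it is the bicommutative case of the classical trace formula $\mathrm{tr}(S^2) = \epsilon(\Lambda)\lambda(1)$ --- but as written it still needs the claim that $\chi_{\mathrm{reg}}$ satisfies the cointegral condition against \emph{all} of $A^\vee$, not just against characters, which deserves a proof or a precise reference.
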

\begin{proof}
There exists a linear homomorphism $\mathrm{i}_A : k \to A \otimes A$ such that $\mathrm{e}_A ,\mathrm{i}_A$ forms a self-duality of $A$.
Such $\mathrm{i}_A$ is given by the composition $\left(  (S \otimes id_A ) \circ \Delta_A \circ \sigma_A \right)$ or, following the Sweedler notation, $\mathrm{i}_A (1) = ( \sum S(\sigma^{(1)}_A ) \otimes \sigma^{(2)}_A )$.
It is directly verified from the axioms of Hopf algebras and (co)integrals that $\mathrm{e}_A,\mathrm{i}_A$ form a self-duality where $\mathrm{e}_A : A \otimes A \to k ; a \otimes b \mapsto \mathrm{e}_A ( a , b)$ with abuse of notations.
Especially, $\mathrm{e}_A$ is nondegenerate.
It is obvious that $\mathrm{e}_A ( a b, c ) = \mathrm{e}_A ( a, bc)$ by definitions.
The {\it categorical dimension} $\mathrm{e}_A \circ \mathrm{i}_A$ should coincide with $\iota ( \dim A )$ since $\mathrm{e}_A,\mathrm{i}_A$ gives a self-duality.
We have $\mathrm{vol}^{-1} ( A) \cdot (\mathrm{e}_A \circ \mathrm{i}_A) = \sigma^A \circ \nabla_A \circ (S \otimes id_A ) \circ \Delta_A \circ \sigma_A = (\sigma^A \circ \eta_A) \cdot (\epsilon_A \circ \sigma_A) = 1$ so that $\mathrm{e}_A \circ \mathrm{i}_A = \mathrm{vol}^{-1} ( A)^{-1}$.
It completes the proof.
\end{proof}

\begin{Lemma}
\label{202007011130}
Let $A,B$ be a finite-dimensional bisemisimple bicommutative Hopf algebras.
Let $\xi : A \to B$ be a Hopf homomorphism and $\alpha_\xi : A \otimes B \to B$ be the induced left action.
Under the isomorphism $_\xi\gamma : \alpha_\xi \backslash \backslash B \to \alpha_\xi \backslash B = \mathrm{Cok_H} ( \xi )$, the restriction bilinear form of $\mathrm{e}_B$ coincides with the bilinear form $\mathrm{e}_{\mathrm{Cok_H} (\xi)}$.
\end{Lemma}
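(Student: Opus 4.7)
The plan is to introduce the idempotent $e := \xi(\sigma_A) \in B$ and use it to transport the Frobenius structure between $B$ and $C := \mathrm{Cok_H}(\xi) = \alpha_\xi \backslash B$. Write $\pi := \mathrm{cok_H}(\xi) : B \twoheadrightarrow C$. Since $\sigma_A^2 = \epsilon_A(\sigma_A)\sigma_A = \sigma_A$, the element $e$ is idempotent, and the integral relation $a\sigma_A = \epsilon_A(a)\sigma_A$ implies that multiplication by $e$ is a projector onto the invariant subspace $\alpha_\xi \backslash\backslash B = eB$, with complementary summand $(1-e)B = \xi(\ker\epsilon_A)\cdot B = \ker\pi$. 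For $x\in eB$ one has $x(1-e)=0$, so the two summands are orthogonal under $\mathrm{e}_B$; the problem reduces to identifying $\mathrm{e}_B|_{eB\times eB}$ with $\mathrm{e}_C$ transported through $\pi|_{eB}$.

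The crux of the argument is to show that the linear functional $\bar\sigma:C\to k$ defined by $\bar\sigma(\pi(b)) := \sigma^B(eb)$ (well-defined because $\sigma^B(e\cdot-)$ vanishes on $\ker\pi$) is itself a cointegral of $C$. I would evaluate the cointegral identity $\sigma^B(x^{(1)})\,x^{(2)} = \sigma^B(x)\,\eta_B$ at $x=eb$ and expand the comultiplication
$$\Delta_B(eb) = \xi(\sigma_A^{(1)})b^{(1)}\otimes \xi(\sigma_A^{(2)})b^{(2)}.$$
Applying $\pi$ to the second tensor factor and using the defining identity $\pi\circ\xi = \eta_C\circ\epsilon_A$ of the cokernel (the zero Hopf morphism, cf.\ Remark \ref{202012251341}), the factor $\xi(\sigma_A^{(2)})$ becomes the scalar $\epsilon_A(\sigma_A^{(2)})$, which then collapses with $\sigma_A^{(1)}$ back to $\sigma_A$ by the counit axiom. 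What remains is exactly $\sigma^B(eb^{(1)})\,\pi(b^{(2)}) = \sigma^B(eb)\,\eta_C$, i.e.\ the cointegral condition on $\bar\sigma$.

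Since the space of cointegrals of a finite-dimensional Hopf algebra is one-dimensional, $\bar\sigma = \lambda\,\sigma^C$ for some scalar $\lambda\in k$ (here $C$ is bisemisimple as a quotient of the bisemisimple $B$, so $\sigma^C$ exists). To determine $\lambda$, I plan to evaluate at $b=\sigma_B$: from $e\sigma_B = \epsilon_B(e)\sigma_B = \sigma_B$ and the fact that $\pi(\sigma_B)$ is the normalized integral $\sigma_C$ (because $\epsilon_C\circ\pi = \epsilon_B$), one gets $\lambda\,\mathrm{vol}^{-1}(C) = \bar\sigma(\sigma_C) = \sigma^B(\sigma_B) = \mathrm{vol}^{-1}(B)$. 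Finally, for invariants $x,x'\in eB$ the identity $xx'=exx'$ yields
$$\mathrm{e}_B(x,x') = \mathrm{vol}^{-1}(B)^{-1}\sigma^B(e\cdot xx') = \mathrm{vol}^{-1}(B)^{-1}\bar\sigma(\pi(xx')) = \mathrm{vol}^{-1}(C)^{-1}\sigma^C(\pi(x)\pi(x')) = \mathrm{e}_C(\pi(x),\pi(x')),$$
which is the claim.

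The main obstacle is the cointegral computation for $\bar\sigma$, since $eB$ is not a sub-coalgebra of $B$ and a cointegral cannot tautologically be pushed through a Hopf quotient. The calculation works only because the ``extra'' contribution of $\Delta_B(e)$ is killed after applying $\pi$ --- this is precisely where the hypothesis that $\pi$ is a cokernel of $\xi$ (equivalently, $\pi\circ\xi$ is the zero Hopf morphism) is used in an essential way. The orthogonal splitting, the existence of $\sigma_C = \pi(\sigma_B)$, and the scalar normalization are all routine once this step is in hand.
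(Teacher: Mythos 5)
Your proof is correct, and it takes a genuinely different route from the paper's. The paper proves the lemma by importing machinery from its companion paper \cite{kim2019integrals}: it works with the normalized generator integral $\mu_{\mathrm{cok_H}(\xi)}$ along the cokernel map, and derives the key scalar identity $\mathrm{vol}^{-1}(B)^{-1}\,\sigma^B\circ\mu_{\mathrm{cok_H}(\xi)} = \mathrm{vol}^{-1}(\mathrm{Cok_H}(\xi))^{-1}\,\sigma^{\mathrm{Cok_H}(\xi)}$ from the composition rule for such integrals (Theorem 12.1 there) together with the multiplicativity of inverse volumes along short exact sequences (Corollary 12.3 there); the fact that $\mu_{\mathrm{cok_H}(\xi)}\circ\mathrm{cok_H}(\xi)$ is multiplication by $e=\xi(\sigma_A)$ is likewise quoted (Lemma 9.3 there). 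Your argument reconstructs exactly this identity from scratch: your functional $\bar\sigma = \sigma^B(e\cdot -)$ descended to $C$ is, in effect, $\sigma^B\circ\mu_{\mathrm{cok_H}(\xi)}$, and you replace the two cited results by (i) a direct Sweedler computation showing $\bar\sigma$ is a cointegral of $C$ --- where the collapse of $\pi(\xi(\sigma_A^{(2)}))$ to $\epsilon_A(\sigma_A^{(2)})\eta_C$ via $\pi\circ\xi = \eta_C\circ\epsilon_A$ is indeed the essential use of the cokernel property --- and (ii) the one-dimensionality of the space of cointegrals plus the normalization $\pi(\sigma_B)=\sigma_C$, $e\sigma_B=\sigma_B$ to fix the scalar as $\mathrm{vol}^{-1}(B)\,\mathrm{vol}^{-1}(C)^{-1}$. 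What your approach buys is self-containedness and elementarity (only Larson--Sweedler-type uniqueness of cointegrals is needed, no theory of integrals along homomorphisms); what the paper's approach buys is brevity and uniformity with the rest of its TQFT construction, which reuses the same $\mu_\xi$ formalism. All the supporting claims you make (the idempotent splitting $B = eB\oplus(1-e)B$ with $(1-e)B=\ker\pi$, the orthogonality of the two summands under $\mathrm{e}_B$ via commutativity, well-definedness of $\bar\sigma$, and bisemisimplicity of the quotient) check out.
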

\begin{proof}
We apply some results in \cite{kim2019integrals}.
Denote by $\mu_\varphi$ the normalized generator integral along $\varphi$ for a Hopf homomorphism $\varphi$.
We first prove that $\mathrm{vol}^{-1} (B)^{-1} \cdot \sigma^B \circ \mu_{\mathrm{cok_H}(\xi)} = \mathrm{vol}^{-1} ( \mathrm{Cok_H} ( \xi ) )^{-1}  \cdot \sigma^{\mathrm{Cok_H}(\xi)}$.
Note that we have $\sigma^C = \mu_{\eta_C}$ for a Hopf algebra $C$.
Since $\eta_{\mathrm{cok_H} (\xi)} = \mathrm{cok_H} ( \xi ) \circ \eta_B$, we have $\sigma^B \circ \mu_{\mathrm{cok_H}(\xi)} = \mu_{\eta_B} \circ \mu_{\mathrm{cok_H}(\xi)} = \mathrm{vol}^{-1} ( \mathrm{Im_H} ( \xi)) \cdot \mu_{\eta_{\mathrm{Cok}_H (\xi)}} = \mathrm{vol}^{-1} ( \mathrm{Im_H} ( \xi ) ) \cdot \sigma^{\mathrm{cok_H}(\xi)}$ by the composition rule (see Theorem 12.1 \cite{kim2019integrals}).
Here, $\mathrm{Im_H} ( \xi) = \mathrm{Ker_H} ( \mathrm{cok_H} (\xi))$.
Since we have $\mathrm{vol}^{-1} ( \mathrm{Im_H} ( \xi ) ) = \mathrm{vol}^{-1} ( B ) \cdot \mathrm{vol}^{-1} ( \mathrm{Cok_H} ( \xi ) )^{-1}$ (see Corollary 12.3 \cite{kim2019integrals}), we obtain $\mathrm{vol}^{-1} ( B )^{-1} \cdot \sigma^B \circ \mu_{\mathrm{cok_H}(\xi)} = \mathrm{vol}^{-1} ( \mathrm{Cok_H} ( \xi ) )^{-1} \cdot \sigma^{\mathrm{Cok_H}(\xi)}$.

Let $x,y \in \alpha_\xi \backslash \backslash B$.
Put $z = x \cdot y$.
Then $\xi ( \sigma_A ) \cdot z =  ( \xi ( \sigma_A ) \cdot  x ) \cdot y = x \cdot y = z$ so that $z \in \alpha_\xi \backslash \backslash B$.
By Lemma 9.3 \cite{kim2019integrals}, we have $\mu_{\mathrm{cok_H}(\xi)} \circ \mathrm{cok_H} ( \xi) = L_{\alpha_\xi} ( \sigma_A )$.
Hence, we have $\mu_{\mathrm{cok_H}(\xi)} (_\xi \gamma (z)) = \mu_{\mathrm{cok_H}(\xi)} \left( \mathrm{cok_H} ( \xi ) z \right) = \left( L_{\alpha_\xi} ( \sigma_A ) \right) ( z ) = \xi ( \sigma_A ) z = z$.
By the above observation, we obtain $\mathrm{e}_B ( x , y ) = \mathrm{vol}^{-1} ( B )^{-1} \cdot \sigma^B ( z ) = \mathrm{vol}^{-1} ( B )^{-1} \cdot \sigma^B \circ \mu_{\mathrm{cok_H}(\xi)} ( _\xi \gamma (z) ) = \mathrm{vol}^{-1} ( \mathrm{Cok_H} ( \xi ) )^{-1} \cdot \sigma^{\mathrm{Cok_H}(\xi)} ( _\xi \gamma (z) ) = \mathrm{e}_{\mathrm{Cok_H}(\xi)} ( _\xi \gamma (x) , ~_\xi \gamma (y) )$.
It completes the proof.
\end{proof}

\begin{Lemma}
\label{202012271540}
Let $A$ be a finite-dimensional bisemisimple bicommutative Hopf algebra with an $R$-action $\phi$.
For a short abstract complex $X$ over $R$, the $(\pm)$-stabilizers are symmetric with respect to the induced symmetric bilinear form.
\end{Lemma}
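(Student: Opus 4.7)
The plan is to treat the $(+)$ and $(-)$ cases by two parallel arguments, the first short and the second structural.

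For $\mathds{S}^+(x_+; A, \phi)$, Lemma \ref{202006291020} identifies it with $L_{\alpha_{\partial_+}}(i_{x_+} \circ \sigma_A)$, which by inspection is simply multiplication by the element $r := (\partial_+ \circ i_{x_+})(\sigma_A)$ in the commutative algebra $V(X; A) = C_\circ(X; A, \phi)$. The Frobenius property $\mathrm{e}(ab, c) = \mathrm{e}(a, bc)$ combined with commutativity yields $\mathrm{e}(r u, v) = \mathrm{e}(u r, v) = \mathrm{e}(u, r v)$, so $\mathds{S}^+(x_+)$ is symmetric without further work.

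For $\mathds{S}^-(x_-; A, \phi)$, I would first rewrite the operator in a more conceptual form. The composition $q := p_{x_-} \circ \partial_- : V(X; A) \to A$ is a Hopf algebra homomorphism, being the composition of the Hopf morphisms $\partial_-$ and the tensor-factor projection $p_{x_-}$. Letting $\beta_q := (\mathrm{id} \otimes q) \circ \Delta_{V(X;A)}$ denote the induced right $A$-coaction on $V(X; A)$, one checks from Lemma \ref{202006291020} that $\mathds{S}^-(x_-) = R^{\beta_q}(\sigma^A)$. My aim is to show that this is the orthogonal projection, with respect to $\mathrm{e}_{V(X; A)}$, onto the subspace of $\beta_q$-coinvariants; any orthogonal projection with respect to a symmetric bilinear form is automatically self-adjoint.

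To realise $R^{\beta_q}(\sigma^A)$ as the projection onto coinvariants, I would check that it fixes coinvariants (immediate from $\sigma^A(\eta_A) = 1$) and has image inside the coinvariants, the latter following from the cointegral identity $(\mathrm{id} \otimes \sigma^A) \circ \Delta_A = \sigma^A(-) \cdot \eta_A$ together with coassociativity. Using cosemisimplicity of $A$ (Proposition \ref{202101041125}), $V(X;A)$ decomposes as an $A$-comodule into isotypic components; every non-trivial simple $A$-comodule has zero coinvariants, so the kernel of $R^{\beta_q}(\sigma^A)$ is exactly the sum of non-trivial isotypic components, and the decomposition coinvariants plus non-trivial isotypic sum is the one along which this projection acts.

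The main obstacle is to verify that this decomposition is orthogonal with respect to $\mathrm{e}_{V(X; A)}$. For this I would first show that $\mathrm{e}_{V(X; A)}$ is $A$-invariant under the diagonal $A$-coaction on $V(X;A) \otimes V(X; A)$; via $\mathrm{e}_{V(X;A)} = \mathrm{vol}^{-1}(V(X; A)) \cdot \sigma^{V(X; A)}(\cdot\, \cdot)$ this reduces to the identity $(\sigma^{V(X;A)} \otimes \mathrm{id}_A)(\beta_q(uv)) = \sigma^{V(X;A)}(uv) \cdot \eta_A$, which in turn follows from $\beta_q$ being an algebra homomorphism (because $q$ is a Hopf morphism) combined with the cointegral condition on $\sigma^{V(X;A)}$. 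Once the invariance is established, a Schur-lemma style argument—any $A$-comodule map from a sum of non-trivial simple comodules to the trivial comodule $k$ must vanish—yields the orthogonality of the coinvariants and the non-trivial isotypic sum, finishing the proof.
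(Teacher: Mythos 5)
Your proposal is correct. For the $(+)$-stabilizer your argument is exactly the paper's: identify $\mathds{S}^+(x_+;A,\phi)$ via Lemma \ref{202006291020} with multiplication by the fixed element $\partial_+(i_{x_+}(\sigma_A))$ of the commutative algebra $V(X;A)$, and conclude from the Frobenius property of $\mathrm{e}_{V(X;A)}$. For the $(-)$-stabilizer, however, you take a genuinely different route. The paper simply asserts that the $(-)$-case is ``analogous,'' meaning the mirror-image direct computation with $\sigma^{V(X;A)}$ and the cointegral conditions (equivalently, one can invoke the transposition duality of Proposition \ref{202006302146} together with the compatibility of $\mathrm{e}_A$ and $\mathrm{e}_{A^\vee}$). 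You instead realise $\mathds{S}^-(x_-;A,\phi)=R^{\beta_q}(\sigma^A)$ for the Hopf homomorphism $q=p_{x_-}\circ\partial_-$, show it is the projection onto the $\beta_q$-coinvariants, and prove that this projection is orthogonal by combining invariance of the form under the diagonal coaction with cosemisimplicity and a Schur-type argument. This is heavier than the two-line computation the paper intends, but it is sound and buys a conceptual explanation: the $(-)$-stabilizer is self-adjoint \emph{because} it is the orthogonal projector onto coinvariants with respect to a coaction-invariant form, and idempotency comes for free. Two small points you should make explicit: (i) identifying $\ker R^{\beta_q}(\sigma^A)$ with the sum of the non-trivial isotypic components requires knowing that $R^{\beta_q}(\sigma^A)$ is a comodule endomorphism (so that it preserves isotypic components); this follows from the cointegral identity $(\sigma^A\otimes\mathrm{id}_A)\circ\Delta_A=\eta_A\otimes\sigma^A$ but is not stated in your outline; and (ii) no algebraic closure of $k$ is needed for the Schur step, since a nonzero comodule map out of a simple comodule into the trivial one would force that simple to be trivial.
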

\begin{proof}
Let $A^\prime = \bigotimes_{x_\circ} A = V(X ; A)$.
Let $P = \mathds{S}^+ ( x_+ ; A , \phi )$.
For $v, w \in V(X ;  A) (= A^\prime)$, we have
\begin{align}\notag
\mathrm{vol}^{-1} (A^\prime) \cdot e_{A^\prime} ( P x , y ) &= \sigma^{A^\prime} ( (P x) y ) , \\
&= \sigma^{A^\prime} ( ( i_{x_+} (\sigma_A ) x ) y ) ~~~(\because \mathrm{Lemma} \ref{202006291020})  , \\
&= \sigma^{A^\prime} ( x ( i_{x_+} (\sigma_A ) y) ) , \\
&= \sigma^{A^\prime} ( x ( P y) ) ~~~(\because \mathrm{Lemma} \ref{202006291020}) , \\
&= \mathrm{vol}^{-1} (A^\prime) \cdot e_{A^\prime} ( x, Py ) .
\end{align}
Analogously, one can prove that $\mathds{S}^- ( x_- ; A , \phi )$ is symmetric.
\end{proof}

\begin{theorem}
\label{202007061137}
Let $A$ be a finite-dimensional bisemisimple bicommutative Hopf algebra with an $R$-action $\phi$.
For a short abstract complex $X$ over $R$, the following statements hold.
\begin{enumerate}
\item
The eigenspaces of the elementary operator $\mathds{H} ( X ; A , \phi)$ give an orthogonal decomposition of $V(X ; A)$ with respect to the symmetric bilinear form.
\item
$0 \in k$ is an eigenvalue of $\mathds{H} ( X ; A , \phi)$.
Furthermore, there exists a natural isomorphism of vector spaces equipped with bilinear forms between the 0-eigenspace $V_0 ( X ; A , \phi)$ and the homology Hopf algebra $H(X ; A, \phi)$.
\end{enumerate}
\end{theorem}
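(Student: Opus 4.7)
The plan is to upgrade Theorem \ref{202006302032} by tracking the symmetric bilinear forms through the identifications used there, relying on Lemma \ref{202012271540} (symmetry of stabilizers) for orthogonality and on Lemma \ref{202007011130} (together with its dual for kernel Hopf algebras) for the Frobenius-form compatibility of the final identification with the homology Hopf algebra.

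For part (1), I would argue that $\mathds{H}(X;A,\phi)$ is symmetric with respect to $\mathrm{e}_{V(X;A)}$. Indeed, Lemma \ref{202006291038} expresses it as a sum of commuting idempotents, and each summand $\mathds{S}^{\pm}$ is symmetric by Lemma \ref{202012271540}. The standard linear-algebra reasoning then gives orthogonality of distinct eigenspaces: if $\mathds{H}v=\lambda v$ and $\mathds{H}w=\mu w$, then $\lambda\, \mathrm{e}(v,w)=\mathrm{e}(\mathds{H}v,w)=\mathrm{e}(v,\mathds{H}w)=\mu\,\mathrm{e}(v,w)$, so $\lambda\neq\mu$ forces $\mathrm{e}(v,w)=0$. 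Combined with the direct-sum decomposition already provided by Theorem \ref{202006302032}(1), this gives the orthogonal decomposition asserted in (1).

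For part (2), the existence of $0$ as an eigenvalue and the abstract vector-space isomorphism $V_0(X;A,\phi)\cong H(X;A,\phi)$ are inherited from Theorem \ref{202006302032}(2). The new content is that this isomorphism is an isometry for the Frobenius forms. I would unwind the identifications used there as the composite
\[
V_0(X;A,\phi)\hookrightarrow \mathrm{Ker_H}(\partial_-)\xrightarrow{\mathrm{cok_H}(\bar\partial_+)} \mathrm{Cok_H}(\bar\partial_+)=H(X;A,\phi),
\]
and check the form is preserved at each step. The second arrow is exactly the situation of Lemma \ref{202007011130} applied to $\bar\partial_+:C_+(X;A,\phi)\to \mathrm{Ker_H}(\partial_-)$, so the restriction of $\mathrm{e}_{\mathrm{Ker_H}(\partial_-)}$ to the invariants coincides with $\mathrm{e}_{\mathrm{Cok_H}(\bar\partial_+)}$. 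The first arrow requires the corresponding statement for the sub-Hopf-algebra inclusion $\mathrm{Ker_H}(\partial_-)\hookrightarrow C_\circ(X;A,\phi)$, namely that restriction of $\mathrm{e}_{C_\circ(X;A,\phi)}$ agrees with $\mathrm{e}_{\mathrm{Ker_H}(\partial_-)}$. Naturality of the resulting isometry in $X$ is inherited from the naturality already present in Theorem \ref{202006302032} and Lemma \ref{202007011130}.

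The main obstacle is establishing this kernel analogue of Lemma \ref{202007011130}. I expect the proof to follow the same template as in Lemma \ref{202007011130}, but with the roles of integral and cointegral exchanged: apply the composition rule for integrals along Hopf homomorphisms from \cite{kim2019integrals} to the inclusion $\mathrm{Ker_H}(\partial_-)\hookrightarrow C_\circ(X;A,\phi)$, obtaining a relation between $\sigma^{C_\circ}$ restricted to $\mathrm{Ker_H}(\partial_-)$ and $\sigma^{\mathrm{Ker_H}(\partial_-)}$ up to a factor of the form $\mathrm{vol}^{-1}(\mathrm{Im_H}(\partial_-))$, which by Corollary 12.3 of \cite{kim2019integrals} equals the volume ratio making the normalization constants in $\mathrm{e}_{C_\circ}$ and $\mathrm{e}_{\mathrm{Ker_H}(\partial_-)}$ cancel. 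Alternatively, the statement can be deduced from Lemma \ref{202007011130} itself by dualizing: the self-duality of $A$ provided by the Frobenius form turns the sub-Hopf-algebra inclusion into the dual of a Hopf quotient, so the cokernel case handles the kernel case after identifying duals.
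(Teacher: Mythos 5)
Your proposal is correct and follows essentially the same route as the paper: part (1) is obtained from Lemma \ref{202006291038} together with Lemma \ref{202012271540} and the standard orthogonality of eigenspaces of a symmetric operator, and part (2) tracks the bilinear form through the chain of identifications in the proof of Theorem \ref{202006302032}, invoking Lemma \ref{202007011130} for the passage to $\mathrm{Cok_H}(\bar\partial_+)$. The only divergence is that the paper dismisses the compatibility of the restricted form on $\mathrm{Ker_H}(\partial_-)\cong C_\circ(X;A,\phi)\backslash\beta_{\partial_-}$ with its intrinsic Frobenius form as obvious, whereas you correctly isolate it as the one step needing an argument; either of your two suggested fixes (the cointegral analogue of the volume computation via the composition rule of \cite{kim2019integrals}, or dualizing Lemma \ref{202007011130} through the self-duality) is a legitimate way to supply it.
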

\begin{proof}
Recall that the operator $\mathds{H} ( X ; A , \phi)$ is a sum of symmetric idempotents which commute with each other by Lemma \ref{202006291038}.
Hence the eigenspaces give an orthogonal decomposition of $V( X ; A)$.

We prove the second claim.
Recall the proof of Theorem \ref{202006302032}.
It is obvious that the bilinear form is preserved  under the isomorphism $\mathrm{Ker_H} ( \partial_- ) \cong C_\circ ( X ; A , \phi )  \backslash \beta_{\partial_-}$.
By Lemma \ref{202007011130}, the isomorphism $_{\bar{\partial}_+}\gamma : \alpha_{\bar{\partial}_+} \backslash \backslash \mathrm{Ker_H} ( \partial_- ) \to  \alpha_{\bar{\partial}_+} \backslash \mathrm{Ker_H} ( \partial_- ) \cong \mathrm{Cok} ( \bar{\partial}_+ )$ preserves the induced bilinear form.
It completes the proof.
\end{proof}

\subsubsection{The Hermitian forms}
\label{202101021119}

\begin{Defn}
Let $B$ be a Hopf algebra over $\mathbb{R}$ equipped with an anti-automorphism $\tau$ such that $\tau^2 = id_B$.
We define its {\it complexification} $B_\mathbb{C}$ by a Hopf $\ast$-algebra over $\mathbb{C}$ defined as follows.
The underlying vector space of $B_\mathbb{C}$ is $\mathbb{C} \otimes_\mathbb{R} A$ and the structure maps (i.e. the (co)multiplication, the (co)unit) are given by the coefficient extension to $\mathbb{C}$.
Here, $A$ denotes the underlying vector space of the Hopf algebra $A$.
The $\ast$-structure is given by
\begin{align}\notag
(\lambda \otimes x)^\ast \stackrel{\mathrm{def.}}{=} \bar{\lambda} \otimes \tau (x) .
\end{align}
\end{Defn}

\begin{remark}
\label{202101021132}
Any bisemisimple bicommutative Hopf $\ast$-algebra $A$ arises in this way.
In fact, any bisemisimple bicommutative Hopf algebra over $\mathbb{C}$ should be a $\mathbb{C}$-valued function Hopf algebra so that its $\ast$-structure is induced by a complexification associated with an appropriate $\tau$.
\end{remark}

\begin{remark}
\label{202101021145}
Finite-dimensional Hopf C$^\ast$-algebras are Hopf $^\ast$-algebras with a norm satisfying the C$^\ast$-identity.
By Remark \ref{202101021132}, bisemisimple and bicommutative Hopf C$^\ast$-algebras are $\mathbb{C}$-valued function Hopf algebra with the maximum norm.
\end{remark}

\begin{Defn}
\label{202101021144}
For $x,y \in B$ and $\lambda, \kappa \in \mathbb{C}$, we define a pairing on $A= B_\mathbb{C}$ as follows.
\begin{align}\notag
\mathrm{e}^\ast_A ( \lambda \otimes x , \kappa \otimes y ) \stackrel{\mathrm{def.}}{=} (\bar{\lambda} \cdot \kappa ) ~ \mathrm{e}_B ( \tau ( x ), y) \in \mathbb{C} .
\end{align}
It is obviously a Hermitian form on $A$.
In other words, it is antilinear (linear, resp.) with respect to the first (second, resp.) argument and Hermitian symmetric, i.e.
\begin{align}\notag
\overline{\mathrm{e}^\ast_A ( v , w )} = \mathrm{e}^\ast_A ( w , v ) .
\end{align}
\end{Defn}

\begin{remark}
Note that $id_\mathbb{C} \otimes \sigma^B$ is a normalized integral of $A$ which respects the $\ast$-structure.
We have $\mathrm{e}^\ast_A ( v , w ) = \mathrm{e}_A ( v^\ast , w)$ where $\mathrm{e}_A$ is the bilinear form in Definition \ref{202101041016}.
\end{remark}

\begin{remark}
Recall Remark \ref{202101021145}.
In that case, it is easy to verify that the Hermitian form is a Hermitian inner product.
Especially, the total space $V(X ; A)$ for any $X$ becomes a Hilbert space.
\end{remark}

\begin{remark}
The Hermitian form is not positive definite in general.
See the next example.
\end{remark}

\begin{Example}
Recall Example \ref{202012271516}.
For $B = \mathbb{R} G$ with $\tau = id_B$, the Hermitian form $\mathrm{e}^\ast_A$ is not positive definite.
For example, $\mathrm{e}^\ast_A ( g , g ) = 0$ if $g^2 \neq e$.
\end{Example}

\begin{Example}
Recall Example \ref{202012271517}.
For $B = \mathbb{R}^G$ with $\tau = id_B$, the Hermitian form $\mathrm{e}^\ast_A$ is positive definite so that it gives a Hermitian inner product.
\end{Example}

\begin{Defn}
For a complex vector space $V$, an antilinear automorphism $v \mapsto \overline{v}$ is called {\it a conjugate} if $\overline{\overline{v}} = v$.
Let $X$ be a short abstract complex over $R$.
Let $A$ be a complexification as above.
Then the vector space $V( X ; A)$ associated with $X$ is equipped with a conjugate by
\begin{align}
\overline{\otimes_{x_\circ} v_{x_\circ}} \stackrel{\mathrm{def.}}{=} \otimes_{x_\circ} v^\ast_{x_\circ}.
\end{align}
\end{Defn}

\begin{Defn}
Let $B$ be a finite-dimensional bisemisimple bicommutative Hopf algebra over $\mathbb{R}$ with an (anti-)automorphism $\tau$ with $\tau^2 = id_B$.
Let $\psi$ be an $R$-action compatible with $\tau$, i.e. $\psi (r) \circ \tau = \tau \circ \psi (r ) , r \in R$.
We define the complexification of $\psi$ by
\begin{align}\notag
\psi_\mathbb{C} ( r ) \stackrel{\mathrm{def.}}{=} id_\mathbb{C} \otimes \psi (r) .
\end{align}
\end{Defn}

\begin{Lemma}
\label{202012271552}
Let $X$ be a short abstract complex over $R$.
The stabilizers with respect to the complexification of $B$ are the complexification of the stabilizers with respect to $B$.
In other words, for $x_+  \in X_+$ and $x_- \in X_-$ we have
\begin{align}\notag
\mathds{S}^+ ( x_+ ; B_\mathbb{C} , \psi_\mathbb{C} ) = id_\mathbb{C} \otimes \mathds{S}^+ ( x_+ ; B , \psi ) , \\
\mathds{S}^- ( x_- ; B_\mathbb{C} , \psi_\mathbb{C} ) = id_\mathbb{C} \otimes \mathds{S}^- ( x_- ; B , \psi ) \notag
\end{align}
Moreover, the $(\pm)$-stabilizers and the elementary operator preserve the conjugate.
In particular, they are Hermitian with respect to the Hermitian form.
\end{Lemma}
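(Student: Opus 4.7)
\medskip

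\noindent\textbf{Proof plan for Lemma \ref{202012271552}.}
The plan is to split the lemma into three assertions and dispatch each in turn: (i) the identification of the stabilizers of the complexification with the complexification of the original stabilizers; (ii) compatibility with the conjugate; (iii) Hermiticity with respect to $\mathrm{e}^\ast_A$. The first assertion is essentially a bookkeeping of how scalar extension interacts with the structural data used to build the $(\pm)$-stabilizers. The structure maps of $B_\mathbb{C}$ (unit, counit, multiplication, comultiplication, antipode) are obtained from those of $B$ by $\mathbb{C}$-linear extension, and by uniqueness of a normalized integral and cointegral in a finite-dimensional bisemisimple Hopf algebra (Proposition \ref{202101041125}), we have $\sigma_{B_\mathbb{C}}=1\otimes\sigma_B$ and $\sigma^{B_\mathbb{C}}=id_\mathbb{C}\otimes\sigma^B$. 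Combined with $\psi_\mathbb{C}(r)=id_\mathbb{C}\otimes\psi(r)$ by definition, substitution into the defining formulas of $\mathds{S}^{\pm}(x_\pm;B_\mathbb{C},\psi_\mathbb{C})$ in Definition \ref{202012282048} yields exactly $id_\mathbb{C}\otimes\mathds{S}^{\pm}(x_\pm;B,\psi)$ under the canonical identification $V(X;B_\mathbb{C})\cong\mathbb{C}\otimes_\mathbb{R}V(X;B)$.

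For the second assertion, the conjugate on $V(X;A)=V(X;B_\mathbb{C})$ decomposes as complex conjugation on the scalar factor tensored with $\tau^{\otimes X_\circ}$ on the $V(X;B)$ factor, so in view of (i) it suffices to show that $\tau^{\otimes X_\circ}$ commutes with the $(\pm)$-stabilizers over $B$. The key input is that $\tau$ fixes the normalized integral and cointegral: since $B$ is bicommutative, $\tau$ is a Hopf automorphism; hence $\tau(\sigma_B)$ is again a normalized integral, forcing $\tau(\sigma_B)=\sigma_B$ by uniqueness, and dually $\sigma^B\circ\tau=\sigma^B$. Applying $\tau^{\otimes X_\circ}$ to the Sweedler expansion $\Delta^{X_\circ}(\sigma_B)=\bigotimes_{x_\circ}\sigma_B^{(x_\circ)}$ and using that $\tau$ is a coalgebra morphism gives $\Delta^{X_\circ}(\sigma_B)$ back; compatibility of $\tau$ with $\psi$ (the hypothesis $\psi(r)\circ\tau=\tau\circ\psi(r)$) and with multiplication then yields $\tau^{\otimes X_\circ}\circ\mathds{S}^+(x_+;B,\psi)=\mathds{S}^+(x_+;B,\psi)\circ\tau^{\otimes X_\circ}$, and the dual argument handles $\mathds{S}^-$. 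The elementary operator is a linear combination of these and the identity, so it too preserves the conjugate.

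Finally, to deduce Hermiticity with respect to $\mathrm{e}^\ast_A$, combine the conjugate-preservation with Lemma \ref{202012271540}: for an operator $T$ that is symmetric for $\mathrm{e}_{V(X;A)}$ and satisfies $T\overline{v}=\overline{Tv}$, one has
\begin{align}\notag
\mathrm{e}^\ast_A(Tv,w)=\mathrm{e}_A((Tv)^\ast,w)=\mathrm{e}_A(Tv^\ast,w)=\mathrm{e}_A(v^\ast,Tw)=\mathrm{e}^\ast_A(v,Tw),
\end{align}
so $T$ is Hermitian. Applying this to $T=\mathds{S}^\pm$ and to $T=\mathds{H}$ concludes the lemma. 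The only delicate point I anticipate is the first paragraph: one must verify carefully that the canonical identifications at the level of $V(X;\cdot)$ are compatible with the iterated Sweedler notation used in Definition \ref{202012282048}, but this is a direct check since comultiplication commutes with scalar extension.
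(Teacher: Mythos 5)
Your proposal is correct and follows essentially the same route as the paper's (sketched) proof: the first claim by unwinding definitions and the behaviour of the normalized (co)integral under scalar extension, conjugate-preservation from the compatibility of $\psi$ with $\tau$, and Hermiticity by the identical four-step computation combining conjugate-preservation with Lemma \ref{202012271540}. The details you supply beyond the paper's sketch (uniqueness of the normalized integral and cointegral forcing $\tau(\sigma_B)=\sigma_B$ and $\sigma^B\circ\tau=\sigma^B$) are correct fill-ins of steps the paper leaves implicit.
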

\begin{proof}
We sketch the proof.
The first claim is immediate from definitions.
The $(\pm)$-stabilizers (and the elementary operator) preserve the conjugate since $\psi$ is compatible with $\tau$.
The final claim is proved by combining the first claim with Lemma \ref{202012271540}.
In fact, we have $\mathrm{e}^\ast_A ( \mathds{S}^{\pm} v , w ) = \mathrm{e}_A ( \overline{\mathds{S}^{\pm} v} , w ) = \mathrm{e}_A ( \mathds{S}^\pm \overline{v} , w ) = \mathrm{e}_A ( \overline{v} , \mathds{S}^\pm w ) = \mathrm{e}^\ast_A (v , \mathds{S}^\pm w)$.
\end{proof}

\begin{theorem}
\label{202012271301}
Let $A = B_\mathbb{C}$ and $\phi = \psi_\mathbb{C}$ be the complexification.
For a short abstract complex $X$ over $R$, the following statements hold.
\begin{enumerate}
\item
The eigenspaces of the elementary operator $\mathds{H} ( X ; A , \phi)$ give an orthogonal decomposition of $V(X ; A)$ with respect to the Hermitian form.
\item
$0 \in k$ is an eigenvalue of $\mathds{H} ( X ; A , \phi)$.
Furthermore, there exists a natural isomorphism of vector spaces equipped with Hermitian form between the 0-eigenspace $V_0 ( X ; A , \phi)$ and the homology Hopf $\ast$-algebra $H(X ; A, \phi)$.
\end{enumerate}
\end{theorem}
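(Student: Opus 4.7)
The plan is to reduce the Hermitian statement to the bilinear statement of Theorem \ref{202007061137} via the complexification structure, which is precisely the reason the hypothesis $A = B_\mathbb{C}$, $\phi = \psi_\mathbb{C}$ is made.

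First I would dispose of the orthogonal decomposition. By Lemma \ref{202012271552}, each $(\pm)$-stabilizer is Hermitian with respect to $\mathrm{e}^\ast_A$, and by Lemma \ref{202006291038} the family of all $(\pm)$-stabilizers consists of mutually commuting idempotents. Their simultaneous spectral decomposition is automatically orthogonal with respect to any Hermitian form for which they are all Hermitian, and the elementary operator $\mathds{H}(X;A,\phi)$, being a sum of these commuting Hermitian idempotents, inherits this same orthogonal eigendecomposition of $V(X;A)$. This directly yields (1).

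For (2), the existence of $0$ as an eigenvalue is already Theorem \ref{202006302032}. The remaining content is that the natural isomorphism $V_0(X;A,\phi) \cong H(X;A,\phi)$ produced in the proof of Theorem \ref{202006302032} intertwines $\mathrm{e}^\ast_A$ on the left with the Hermitian Frobenius form on the homology Hopf $\ast$-algebra on the right. My approach is to first run Theorem \ref{202007061137} at the level of the real data $(B,\psi)$: this yields a natural isomorphism $V_0(X;B,\psi) \cong H(X;B,\psi)$ of real vector spaces which preserves the bilinear Frobenius forms $\mathrm{e}_{(-)}$. Then I would complexify. By Lemma \ref{202012271552} the stabilizers (and hence the elementary operator) on $V(X;A) = \mathbb{C}\otimes_\mathbb{R} V(X;B)$ are the $\mathbb{C}$-linear extensions of those on $V(X;B)$, so $V_0(X;A,\phi) = \mathbb{C}\otimes_\mathbb{R} V_0(X;B,\psi)$. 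On the algebraic side, the chain complex $C_\bullet(X;A,\phi)$ is the complexification of $C_\bullet(X;B,\psi)$, and since Hopf (co)kernels in $\mathsf{Hopf}^{\mathsf{bc}}_\mathbb{R}$ complexify to Hopf (co)kernels in $\mathsf{Hopf}^{\mathsf{bc}}_\mathbb{C}$, the homology Hopf $\ast$-algebra satisfies $H(X;A,\phi) = (H(X;B,\psi))_\mathbb{C}$ with the $\ast$-structure induced by the conjugate. Thus the real isomorphism complexifies to a $\mathbb{C}$-linear natural isomorphism $V_0(X;A,\phi) \cong H(X;A,\phi)$ which moreover intertwines the conjugates on both sides.

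Finally, I would check that the Hermitian form is preserved. By Definition \ref{202101021144}, $\mathrm{e}^\ast_A(v,w) = \mathrm{e}_A(v^\ast,w)$ where $\mathrm{e}_A$ is the $\mathbb{C}$-bilinear extension of $\mathrm{e}_B$, and the Hermitian Frobenius form on $H(X;A,\phi)$ admits the same description in terms of its underlying real bilinear form and the complexified $\ast$-structure. Because the complexified isomorphism preserves both the complex bilinear forms (by the bilinear statement combined with $\mathbb{R}$-linearity) and the conjugates (as noted above), it automatically preserves the Hermitian forms. The main obstacle I anticipate is the bookkeeping required to verify that the formation of stabilized/invariant spaces, kernel and cokernel Hopf algebras, and the canonical isomorphism $_{\bar\partial_+}\gamma$ used in the proof of Theorem \ref{202006302032} all commute with the complexification functor $(-)_\mathbb{C}$ and carry the induced $\ast$-structure in the way asserted; this is essentially a naturality check for each step in the argument of Theorem \ref{202007061137}, facilitated by Remark \ref{202101021132} which guarantees that every bisemisimple bicommutative Hopf $\ast$-algebra in sight actually arises as such a complexification.
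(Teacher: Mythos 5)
Your proposal is correct and follows essentially the same route as the paper: part (1) via the Hermitian property of the stabilizers from Lemma \ref{202012271552}, and part (2) by applying Theorem \ref{202007061137} to the real pair $(B,\psi)$, complexifying the resulting form-preserving isomorphism, and using compatibility with the conjugate together with the identity $\mathrm{e}^\ast_A(v,w)=\mathrm{e}_A(v^\ast,w)$ to transfer preservation of the bilinear form to preservation of the Hermitian form. The only difference is one of emphasis: you flag the naturality/complexification bookkeeping explicitly, which the paper dismisses as obvious.
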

\begin{proof}
By Lemma \ref{202012271552}, the elementary operator is Hermitian so that the the first claim is proved.
By Theorem \ref{202007061137}, we have a $\mathbb{R}$-linear isomorphism $f : V_0 ( X ; B, \psi) \to H(X ; B , \psi)$ preserving the symmetric bilinear form.
It is obvious that the isomorphism preserves the $\mathbb{Z}/2$-action induced by $\tau$.
It induces a $\mathbb{C}$-linear isomorphism $g = (id_\mathbb{C} \otimes f) : V_0 ( X ; A, \phi) \to H(X ; A , \phi)$ preserving the conjugate where we use $V_0 (X ; A) \cong \mathbb{C} \otimes_\mathbb{R} V_0 (X ; B)$ and $H ( X ; A , \phi) \cong H( X ; B , \psi)_\mathbb{C}$.
Let $e,e^\prime$ be the symmetric bilinear form on each side.
Note that the Hermitian form is induced by applying the conjugate to the first variable.
Then we obtain $e^\prime (\overline{g(v)} , g(w) ) = e^\prime ( g(\overline{v}) , g(w)  ) = e (\overline{v} , w)$.
It completes the proof.
\end{proof}

%%%%%%%%%%%%%%%%%%%%%%%%%%%%
\section{Application to topology}
\label{202101051043}
In the preceding sections, we define the $(\pm)$-stabilizers and the elementary operator associated with a short abstract complex, and study the eigenspaces on an abstract level.
In this section, we propose a formulation of {\it local stabilizer models} and {\it topological local stabilizer models}, and attempt to justify the terminologies by applying the previous results to topology.

\subsection{Local stabilizer model}
\label{201907220230}

\begin{Defn}
\label{202012291052}
{\it A local stabilizer model (LSM) over $R$} is a symmetric monoidal functor from the category of pointed finite CW-complexes and embeddings $\mathsf{CW}^\mathsf{fin,emb}_\ast$ to $\mathsf{SAC}^\mathsf{inc}_R$ (see Definition \ref{202007080920}) preserving pushouts.
We unpack the definition as follows.
It consists of an assignment of $\Xi ( K )$ to a pointed finite CW-complex $K$.
Moreover, an inclusion $\Xi ( i )  : \Xi ( K_0 ) \rightarrowtail \Xi ( K_1 )$ is given if $K_0$ is a subcomplex of $K_1$ where $i$ denotes the embedding.
They satisfy the following axioms.
\begin{itemize}
\item
The assignment is covariant, i.e. for embeddings $K_0 \stackrel{i_0}{\hookrightarrow} K_1 \stackrel{i_1}{\hookrightarrow} K_2$ we have $\Xi ( i_1 ) \circ \Xi ( i_0 ) = \Xi ( i_1 \circ i_0 )$.
Moreover, $\Xi (id_K )$ is the identity.
\item
The short abstract complex assigned to the one-point space $\ast$ (with the basepoint itself) is the trivial one, i.e. $\Xi ( \ast ) = (  \emptyset ,  \emptyset , \emptyset, \emptyset , \emptyset )$.
\item
The assignment is {\it monoidal} in the following sense.
We have a natural isomorphism $\Xi (K ) \times \Xi (K^\prime ) \cong \Xi ( K \vee K^\prime)$ where $\vee$ is the wedge sum of pointed spaces.
The natural isomorphism is compatible with the associators, unitors and symmetries.
\item
The assignment is {\it local} in the sense that the following diagram forms a pushout diagram where the arrow represents the induced injections.
\begin{equation}
\notag
\begin{tikzcd}
\Xi ( K_0 \cap K_1 ) \ar[r, rightarrowtail] \ar[d, rightarrowtail] & \Xi ( K_0 ) \ar[d, rightarrowtail] \\
\Xi ( K_1 ) \ar[r, rightarrowtail] & \Xi ( K_0 \cup K_1 )
\end{tikzcd}
\end{equation}
Equivalently, the following induced chain complex of $R$-modules is exact.
\begin{align}\notag
0 \to C_\bullet ( \Xi ( K_0 \cap K_1 ) ) \to C_\bullet ( \Xi ( K_0 ) ) \oplus C_\bullet ( \Xi ( K_1 ) ) \to C_\bullet ( \Xi ( K_0 \cup K_1 ) ) \to 0 .
\end{align}
\end{itemize}
\end{Defn}

\begin{remark}
The reasons that we formulate LSM's on pointed finite CW-complexes are as follows.
\begin{itemize}
\item
The category of pointed finite CW-complexes contains the category finite CW-complexes without a basepoint.
In fact, the assignment of $K^+ = K \amalg \{ \ast \}$ with the basepoint $\ast$ to $K$ without a basepoint induces a functor.
\item
The $(\pm)$-stabilizers and elementary operaotr could be dealt with even on finite CW-pairs, which allows us a relative theory naturally.
In fact, a finite CW-pair $(L , K)$ induces a pointed finite CW-complex $(L / K , K/ K)$.
\end{itemize}
\end{remark}

\begin{remark}
The inclusion $\Xi (i)$ induces a Hopf homomorphism $H( \Xi ( K_0 ) ; A , \phi ) \to H( \Xi ( K_1 ) ; A , \phi )$ in a functorial way.
By Theorem \ref{202006302032}, an embedding $i$ induces a map between 0-eigenspaces.
\end{remark}

We give two remarks about the locality axiom.

\begin{remark}
The locality axiom implies that the $(\pm)$-stabilizers and the elementary operator associated with $\Xi (K)$ in Definition \ref{202012282048}, in a rough sense, are determined by the {\it local} information of the CW-complex $K$.
This observation raises a question what is the minimal information to classify LSM's and topological LSM's (see Definition \ref{202012291544}).
\end{remark}

\begin{remark}
The locality axiom itself is important to prove the following Mayer-Vietoris exact sequence of homology Hopf algebras.
The Mayer-Vietoris exact sequence is important for the corresponding homology Hopf algebras (equivalently, 0-eigenspaces) to extend to a Brown functor and TQFT.
See subsection \ref{202101021342}.
\end{remark}

\begin{prop}
\label{202101021609}
For a LSM $\Xi$ over $R$, the inclusions induce the following chain complex of bicommutative Hopf algebras which is exact in $\mathsf{Hopf}^\mathsf{bc}_k$.
\begin{align}\notag
H ( \Xi ( K_0 \cap K_1 ) ; A , \phi) \to H ( \Xi ( K_0 ) ; A , \phi ) \otimes H ( \Xi ( K_1 ) ; A , \phi ) \to H ( \Xi ( K_0 \cup K_1 ) ; A , \phi )  .
\end{align}
\end{prop}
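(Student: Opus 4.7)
The plan is to deduce the exact sequence from the locality axiom, the additivity of the functor $(A, \phi)^{(-)}$, and the standard long exact sequence in homology available in the abelian category $\mathsf{Hopf}^\mathsf{bc}_k$ (Proposition \ref{202101042026}).

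First, I would apply the locality axiom of Definition \ref{202012291052} to obtain the short exact sequence of short chain complexes of $R$-modules
\begin{align}\notag
0 \to C_\bullet(\Xi(K_0 \cap K_1)) \to C_\bullet(\Xi(K_0)) \oplus C_\bullet(\Xi(K_1)) \to C_\bullet(\Xi(K_0 \cup K_1)) \to 0.
\end{align}
Each chain module $C_\square(\Xi(\cdot))$ is finitely generated and free over $R$ (hence projective), so this sequence splits termwise for each degree $\square \in \{+, \circ, -\}$.

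Next, I would transport this split short exact sequence to $\mathsf{Hopf}^\mathsf{bc}_k$ via the commutative exponential functor $(A, \phi)^{(-)}$. Because this functor converts $\oplus$ into $\otimes$ and the tensor product realises the biproduct of $\mathsf{Hopf}^\mathsf{bc}_k$ (Remark \ref{202012251341}), $(A, \phi)^{(-)}$ is additive; since additive functors preserve split short exact sequences, applying it degreewise produces a short exact sequence of short chain complexes in $\mathsf{Hopf}^\mathsf{bc}_k$
\begin{align}\notag
1 \to C_\bullet(\Xi(K_0 \cap K_1); A, \phi) \to C_\bullet(\Xi(K_0); A, \phi) \otimes C_\bullet(\Xi(K_1); A, \phi) \to C_\bullet(\Xi(K_0 \cup K_1); A, \phi) \to 1.
\end{align}

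Finally, I would feed this short exact sequence of chain complexes into the standard zig-zag construction (snake lemma) in the abelian category $\mathsf{Hopf}^\mathsf{bc}_k$, yielding a long exact sequence in homology. Reading off the three consecutive terms centred at the middle $\circ$-degree gives precisely the sequence asserted in the proposition, with exactness at $H(\Xi(K_0); A, \phi) \otimes H(\Xi(K_1); A, \phi)$. The step requiring the most care is the second one: once additivity of $(A, \phi)^{(-)}$ is used to promote the termwise splitness into termwise short exactness in $\mathsf{Hopf}^\mathsf{bc}_k$, the rest is a formal consequence of the abelian structure on $\mathsf{Hopf}^\mathsf{bc}_k$.
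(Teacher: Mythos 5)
Your proposal is correct and follows the same route the paper intends: the paper's proof is the single line ``It is immediate from the locality axiom,'' and your argument is exactly the unpacking of that claim --- termwise splitness of the locality short exact sequence (the quotient terms being finitely generated free, hence projective), preservation of split exactness by the additive functor $(A,\phi)^{(-)}$, and the long exact homology sequence in the abelian category $\mathsf{Hopf}^\mathsf{bc}_k$. The only detail worth making explicit is the identification $C_\bullet(\Xi(K); A,\phi) \cong (A,\phi)^{C_\bullet(\Xi(K))}$, which holds by construction of the boundary maps via $\phi$ applied to incidence numbers.
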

\begin{proof}
It is immediate from the locality axiom.
\end{proof}

We formulate a `topologicality' of LSM's in our framework as follows.

\begin{Defn}
\label{202012291544}
Let $A$ be a finite-dimensional bisemisimple bicommutative Hopf algebra with an $R$-action $\phi$.
A LSM $\Xi$ is {\it topological with respect to $(A, \phi)$} if the following conditions hold.
\begin{itemize}
\item
Let $i_t : K \hookrightarrow K \wedge [0,1]^+$ be the embedding $i_t (x) = [ x , t ]$ for $t=0,1$.
Then we have $H(\Xi (i_0) ; A , \phi) = H( \Xi (i_1) ; A , \phi )$.
\item
An embedding $i : K_0 \hookrightarrow K_1$ which is a homotopy equivalence induces an isomorphism $H( \Xi ( i); A , \phi) : H ( \Xi ( K_0 ) ; A , \phi ) \to H ( \Xi ( K_1 ) ; A , \phi )$.
\end{itemize}

A LSM $\Xi$ is {\it topological} if it is topological with respect to any $(A, \phi)$.
\end{Defn}

\begin{remark}
\label{202101030318}
The readers might wonder whether one of the axioms is redundant for our discussion or not.
Only the second axiom is necessary to prove the homotopy invariance of 0-eigenspaces in Theorem \ref{202101021229}.
On the other hand, the path-integrals in Definition \ref{202101030308} is a homotopy invariant of cospan if and only if both of the axioms are satisfied.
See Proposition \ref{202101022059}.
\end{remark}

\begin{Example}
Let $K$ be a pointed finite CW-complex.
We define $\Xi^\prime_q ( K ) = ( X_+ , X_\circ , X_- , I_+ , I_- )$ by $X_+ = X_-  = \emptyset$, $I_+ = I_- = \emptyset$.
We set $X_\circ$ be the set of $q$-cells of the pointed CW-complex $K$, i.e. $q$-cells of the unpointed CW-complex $K$ which is not the basepoint.
It is obvious that $\Xi^\prime_q$ gives a LSM over any ring $R$.
Note that this is an example which is {\it not topological}.
In fact, we have $V_0 \left( \mathds{H} ( \Xi^\prime_q ( K ) ; A , \phi \right) \cong H ( \Xi^\prime_q ( K ) ; A , \phi ) \cong \bigotimes A$ where the tensor product is taken over $q$-cells.
\end{Example}

\begin{Example}
\label{202101021222}
The Example \ref{202012261747} gives a LSM over $R$ by $\Xi ( K ) = \Xi_q ( K , \{ \ast \} ; R)$.
Here, $\ast$ denotes the basepoint of $K$.
Moreover it topological.
\end{Example}

\begin{Example}
\label{202007061120}
More generally, the Examples \ref{202012261748} gives a topological LSM over $R$ by $\Xi ( K ) = \Xi^F_q ( K , \{ \ast \} ; R)$.
\end{Example}

\begin{remark}
One may formulate a {\it contravariant} LSM over $R$ by modifying the functoriality and locality axioms.
We often call the LSM in Definition \ref{202012291052} by a {\it covariant} LSM to avoid confusion.
A {\it contravariant LSM over $R$} $\Xi^\prime$ assigns a short abstract complex $\Xi^\prime (K)$ to $K$ and a restriction $\Xi^\prime ( i ) : \Xi^\prime ( K_1) \twoheadrightarrow \Xi^\prime (K_0)$ (see Definition \ref{202007080920}) to an embedding $i$ such that the transpositions $\Xi^\prime (K)^T$ and $\Xi^\prime (i)^T$ give a covariant LSM over $R$.
Especially, $\Xi^\prime ( i_0) \diamond \Xi^\prime (i_1) = \Xi^\prime (i_1 \circ i_0)$.
\end{remark}

\begin{Example}
\label{202101062347}
The Example \ref{202012272320} gives a contravariant LSM over $R$.
\end{Example}

\begin{theorem}
\label{202101021229}
Let $\Xi$ be a (not necessarily topological) LSM over $R$.
\begin{itemize}
\item
The eigenspaces of the elementary operator $\mathds{H} ( \Xi ( K ) ; A , \phi )$ give an orthogonal decomposition of the total space with respect to the symmetric bilinear form.
\item
The 0-eigenspace $V_0 ( \Xi ( K ) ; A , \phi )$ has a Hopf algebra structure which is finite-dimensional, bisemisimple and bicommutative.
The bilinear form on$V_0 ( \Xi ( K ) ; A , \phi )$ coincides with the one induced by the Hopf algebra structure.
\item
If $\Xi$ is topological , then the 0-eigenspace $V_0 ( \Xi ( K ) ; A , \phi )$ equipped with the Hopf algebra structure is a homotopy invariant of $K$.
To be precise, for a pointed homotopy equivalence $f : K_0 \to K_1$ there exists a Hopf isomorphism $f_\ast : V_0 \left( \Xi ( K_0 ) ; A , \phi  \right) \to V_0 \left( \Xi ( K_1 ) ; A , \phi \right)$.
\end{itemize}
\end{theorem}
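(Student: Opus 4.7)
The first claim is a direct instantiation of Theorem \ref{202007061137}(1): applying it to the short abstract complex $X = \Xi(K)$ yields the orthogonal decomposition of $V(\Xi(K); A)$ into eigenspaces of $\mathds{H}(\Xi(K); A, \phi)$ with respect to the symmetric bilinear form. No additional argument is needed beyond observing that $\Xi(K)$ is indeed an object of $\mathsf{SAC}^\mathsf{inc}_R$.

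For the second claim, the plan is to transport structure through the natural isomorphism $V_0(\Xi(K); A, \phi) \cong H(\Xi(K); A, \phi)$ supplied by Theorem \ref{202007061137}(2). The right-hand side is, by Definition \ref{202012261818}, a bicommutative Hopf algebra, and the cited theorem asserts the isomorphism preserves the bilinear forms; on the right the form is the Frobenius form $\mathrm{e}_H$ induced by the Hopf structure (Definition \ref{202101041016}), so the form on $V_0$ coincides with the Frobenius form of the transported Hopf structure. Finite-dimensionality is immediate, since $V(\Xi(K); A) = \bigotimes_{x_\circ} A$ is a finite tensor product of finite-dimensional spaces. Bicommutativity is inherited from the target category $\mathsf{Hopf}^\mathsf{bc}_k$. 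Bisemisimplicity is the only nontrivial point: one observes that $C_\circ(\Xi(K); A, \phi)$ is itself bisemisimple (the tensor products of the normalized integral and cointegral of $A$ furnish those of the tensor product), and $H(\Xi(K); A, \phi) = \mathrm{Cok_H}(\bar{\partial}_+)$ with $\bar{\partial}_+$ landing in $\mathrm{Ker_H}(\partial_-) \subset C_\circ(\Xi(K); A, \phi)$, after which one invokes closure of bisemisimplicity under kernels and cokernels in $\mathsf{Hopf}^\mathsf{bc}_k$, as developed in \cite{kim2019integrals} and \cite{kim2020homology}.

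For the third claim, the strategy is to realize an abstract pointed homotopy equivalence by a zig-zag of embeddings using the pointed mapping cylinder. Given $f : K_0 \to K_1$, by pointed cellular approximation we may replace $f$ within its homotopy class by a cellular map; form the pointed mapping cylinder $M_f$, a pointed finite CW-complex containing $K_0$ and $K_1$ as subcomplexes via canonical embeddings $j_0 : K_0 \hookrightarrow M_f$ and $j_1 : K_1 \hookrightarrow M_f$. The embedding $j_1$ is always a deformation retract and hence a homotopy equivalence, while $j_0$ is a homotopy equivalence because $f$ is one. Applying the second axiom of topologicality (Definition \ref{202012291544}) to both embeddings, the Hopf homomorphisms $H(\Xi(j_0); A, \phi)$ and $H(\Xi(j_1); A, \phi)$ are isomorphisms in $\mathsf{Hopf}^\mathsf{bc}_k$. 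Transporting through the natural isomorphism of part (2), which is compatible with the Hopf structure on both sides, produces the required Hopf isomorphism
\[
f_\ast \;\stackrel{\mathrm{def.}}{=}\; V_0(\Xi(j_1); A, \phi)^{-1} \circ V_0(\Xi(j_0); A, \phi).
\]

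The main obstacle is the bisemisimplicity of the homology Hopf algebra in part (2), which must be imported from the closure properties of $\mathsf{Hopf}^\mathsf{bc}_k$ developed in the cited companion papers. A secondary technicality is to ensure that the mapping cylinder construction in part (3) remains within $\mathsf{CW}^\mathsf{fin,emb}_\ast$, i.e.\ to perform pointed cellular approximation and equip $M_f$ with a finite pointed CW-structure in which $K_0, K_1$ are genuine subcomplexes; this is standard but worth explicit mention.
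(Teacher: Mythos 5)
Your proposal is correct and follows essentially the same route as the paper: parts (1) and (2) are read off from Theorem \ref{202007061137}, and part (3) uses the mapping cylinder to replace $f$ by a zig-zag of embeddings that are homotopy equivalences and then inverts one of the induced isomorphisms. The only difference is that you spell out the bisemisimplicity of the homology Hopf algebra, which the paper leaves implicit here and only justifies later (Proposition \ref{202101042120}, via the abelian subcategory $\mathsf{Hopf}^\mathsf{bc,vol}_k$) — a worthwhile addition rather than a divergence.
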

\begin{proof}
The first and second claims are immediate from Theorem \ref{202007061137}.

Suppose that $\Xi$ is topological.
It suffices to prove that $H ( \Xi ( K ) ; A , \phi )$ is a homotopy invariant.
Suppose that $f : K_0 \to K_1$ is a homotopy equivalence which is not necessarily an embedding.
Its mapping cylinder gives a pointed finite CW-complex $M(f)$ with embeddings $K_0 \stackrel{i}{\hookrightarrow} M(f) \stackrel{j}{\hookleftarrow} K_1$ which are homotopy equivalences.
Since $\Xi$ is topological and $j$ is a pointed homotopy equivalence, the induced homomorphisms $\Xi ( i ), \Xi (j)$ are isomorphisms.
Hence, we obtain an isomorphism $f_\ast = \Xi (j)^{-1} \circ \Xi (i) : H ( \Xi (K_0  ) ; A , \phi ) \to H ( \Xi (K_1 ) ; A , \phi )$.
\end{proof}

\begin{remark}
\label{202101022117}
By applying Theorem \ref{202012271301}, we obtain a theorem more familiar with the context of physics.
To be precise, for $( A, \phi) = (B_\mathbb{C} , \psi_\mathbb{C} )$ in subsubsection \ref{202101021119}, the 0-eigenspace has a Hopf $\ast$-algebra structure whose Hermitian forms coincide with each other.
If $\Xi$ is topological, then the Hopf $\ast$-algebra structure is a homotopy invariant.
The case of Hopf $C^\ast$-algebra could be dealt with as a special case (see Remark \ref{202101021145}).
\end{remark}

\begin{remark}
The Hopf isomorphism $f_\ast$ is covariant in the sense that $(g \circ f )_\ast = g_\ast \circ f_\ast$.
It follows from Lemma \ref{202012291530}, i.e. $Y(J(g)) \circ Y(J(f)) = Y(J(g \circ f))$ where $Y(J(f)) = f_\ast$ by definitions.
\end{remark}

%%%%%%%%%%%%%%%%%%%%%%%%%%%%%%%%%%%%
\subsection{Computations of $(\pm)$-stabilizers}
\label{202101040012}

In this subsection, we describe the $(\pm)$-stabilizers associated with some LSM.
Fix a finite-dimensional bisemisimple bicommutative Hopf algebra $A$ with an $R$-action $\phi$.
Let $K$ be a pointed finite CW-complex.
Denote by $N_q$ the set of $q$-cells of $K$ which is not the basepoint.

Recall the LSM in Example \ref{202101021222}.
Let $W_q = \Xi_q ( K , \{ \ast \} ; R)$ where $\ast$ is the basepoint of $K$.
By definitions, $V(W_q ; A)$ is the tensor product $\bigotimes_{N_q} A$.
For a vector $w \in A$ and a linear functional $z$ on $A$, we define endomorphisms $\mathds{A}^w (c_{q+1})$ and $\mathds{B}^z (c_{q-1})$ as follows.
We use symbols $c_q$ to represent $q$-cells of $K$.
\begin{align}\notag
\mathds{A}^w (c_{q+1}) ( v ) &=  \bigotimes_{c_q} ( \phi ( [ c_{q+1} : c_q ] ) w^{(c_q)} ) v_{c_q} ,  \\
\mathds{B}^z (c_{q-1}) (v) &= z \left( \prod_{c_q} \left( \phi ([c_q : c_{q-1}]) v^{(2)}_{c_q} \right) \right) \cdot \bigotimes_{c_q} v^{(1)}_{c_q}  . \notag
\end{align}
Note that $\mathds{A}^{\sigma_A} (c_{q+1} ) = \mathds{S}^+ ( W_q , c_{q+1} ; A , \phi)$ and $\mathds{B}^{\sigma^A} (c_{q-1} ) = \mathds{S}^- ( W_q , c_{q-1} ; A , \phi)$ by definitions.
Here, $[ c : c^\prime ]$ denotes the incidence number between cells as before.
Especially they are integers so that to compute the $\phi$-action it suffices to refer to Remark \ref{202012282203}.
By Theorem \ref{202101021229} and Remark \ref{202101022117}, the eigenspaces of Hamiltonian $\sum_{c_{q+1}} (id - \mathds{A}^{\sigma_A} (c_{q+1}) ) + \sum_{c_{q-1}} ( id - \mathds{B}^{\sigma^A} (c_{q-1} ) )$ gives an orthogonal decomposition of $V(W_q ; A)$ with respect to an appropriate pairing and its 0-eigenspace is a homotopy invariant.
Moreover, the 0-eigenspace is naturally isomorphic to the ordinary homology Hopf algebra $H_q ( K , \ast ; A)$.

We have a family of LSM's parametrized by pointed finite CW-complex $F$ (see Example \ref{202007061120}).
Let $X = \Xi^F_q (K, \{ \ast \} ; R)$.
In the following subsubsections, we give a concrete description of the total space $V(X ; A)$ and the stabilizers by using those of $W_0, W_1 , W_2, \cdots$.

As a preparation, we define two endomorphisms $\mathds{A}^w_n (c_q )$ and $\mathds{B}^z_n ( c_q )$ on $V(W_q ; A)$ as follows.
Here, $n \in \mathbb{Z}$ and $v = \otimes_{c_q\in N_q} v_{c_q} \in V(X ; A)$.
\begin{align}\notag
\mathds{A}^w_n (c_q ) ( v) &= \left( ( \phi ( n ) w ) \cdot v_{c_q} \right) \otimes \left( \bigotimes_{c^\prime_q \neq c_q} v_{c^\prime_q} \right)  , \\
\mathds{B}^z_n ( c_q ) v &= z \left( \phi ( n )  v^{(2)}_{c_q} \right) \cdot \left( v^{(1)}_{c_q} \otimes \bigotimes_{c^\prime_q \neq c_q} v_{c^\prime_q} \right)  . \notag
\end{align}

For a $q$-cell $c_q$ of $K$ and a $r$-cell $d_r$ of $F$, we denote by $c_q \times d_r$ the product cell.
We note that the incidence numbers between product cells are computed as follows.
\begin{align}
[ c_q \times d_r : c_{q-1} \times d_r ] &= [ c_q : c_{q-1} ] , \notag \\
[ c_q \times d_r : c_q \times d_{r-1} ] &= (-1)^q [ d_r : d_{r-1} ] . \notag
\end{align}

\subsubsection{The case of $F=\ast$}

Let $F$ be a one-point space with the basepoint itself.
By definitions, we have $X \cong \Xi_q ( K , K ; R )$ which is the empty short abstract complex.
In this case, the associated vector space $V(X ; A) = k$, i.e. the ground field itself, and the stabilizers are trivial so that the elementary operator is zero.

\subsubsection{The case of $F = S^n$}

Let $F$ be the $n$-sphere $S^n = d_0 \cup d_n$ equipped with an $n$-cell $d_n$ and the basepoint $d_0$ (see the left side of Figure \ref{202101022151} for $n=1$).
By definitions, we have $X = \Xi_q ( K \times S^n , K \times \{ d_0 \} \cup \{ \ast_K \} \times S^n  ; R)$.
We have an obvious isomorphism $X \cong \Xi_{q-n} (K , \{ \ast_K \} ; R)$.
Hence, the total space $V (X ; A)$ is naturally isomorphic to $V ( W_{q-n} ; A)$.
Under the isomorphism, the $(\pm)$-stabilizers are computed as follows.
\begin{align}\notag
\mathds{S}^+ ( X , c_{q-n+1} \times d_n ; A , \phi ) &= \mathds{A}^{\sigma_A} (c_{q-n+1}) , \\
\mathds{S}^- ( X , c_{q-n-1} \times d_n  ; A  , \phi ) &= \mathds{B}^{\sigma^A} (c_{q-n-1} ) . \notag
\end{align}
Here, $c_q \times d_{q^\prime}$ is the $(q+q^\prime)$-cell of $K \times S^n$ defined by the product of $c_q , d_{q^\prime}$.
Hence, the elementary operator coincides with $\mathds{H} ( X ; A, \phi) = \sum_{c_{q-n+1}} ( id -\mathds{A}^{\sigma_A} (c_{q-n+1})) + \sum_{c_{q-n-1}} ( id - \mathds{B}^{\sigma^A} (c_{q-n-1} ))$.
By Theorem \ref{202101021229}, we see that the 0-eigenspace (equipped with the symmetric or Hermitian form) is isomorphic to the $q$-th ordinary homology Hopf algebra $H_q ( K \times S^n , K \times \{ d_0 \} \cup \{ \ast_K \} \times S^n ; A) \cong \widetilde{H}_q ( K \wedge S^n ; A) \cong \widetilde{H}_{q-n} ( K ; A)$ where $\wedge$ is the smash product and $\widetilde{H}_\bullet$ is the reduced homology theory.

We remark that for other CW-complex structures of $n$-sphere, $X \not\cong \Xi_{q-n} (  K , \{ \ast_K \} ; R)$ in general.
For example, consider $n=1$.
Let $S^1 = d_{0,0} \cup d_{0,1} \cup d_{1,0} \cup d_{1,1}$ equipped with 1-cells $d_{1,0}, d_{1,1}$ and 0-cells $d_{0,0}, d_{0,1}$ (see the right side of Figure \ref{202101022151}).
Let $d_{0,0}$ be the basepoint.
By definitions, we have a natural isomorphism $V(X ; A) \cong V( W_q ; A) \otimes V(Y^{q-1} ; A ) \otimes V( Y^{q-1} ; A)$ under which we have,
\begin{align}\notag
\mathds{S}^+ ( X , c_{q+1} \times d_{0,1} ; A , \phi) &= \mathds{A}^{\sigma_A} (c_{q+1} ) \otimes id \otimes id , \\
\mathds{S}^+ ( X , c_{q} \times d_{1,0} ; A , \phi) &= \mathds{A}^{\sigma^{(1)}_A}_{(-1)^q} (c_q ) \otimes \mathds{A}^{\sigma^{(2)}_A} (c_q ) \otimes id , \notag \\
\mathds{S}^+ ( X , c_{q} \times d_{1,1} ; A , \phi) &=\mathds{A}^{\sigma^{(1)}_A}_{(-1)^q} (c_q ) \otimes id \otimes \mathds{A}^{\sigma^{(2)}_A} (c_q ) , \notag \\
\mathds{S}^- ( X , c_{q-1} \times d_{0,1} ; A , \phi) &= \mathds{B}^{(\sigma^A)^{(1)}} (c_{q-1} ) \otimes \mathds{B}^{(\sigma^A)^{(2)}}_{(-1)^{q-1}} ( c_{q-1} ) \otimes \mathds{B}^{(\sigma^A)^{(3)}}_{(-1)^{q-1}} ( c_{q-1} ) , \notag \\
\mathds{S}^- ( X , c_{q-2} \times d_{1,0} ; A , \phi) &= id \otimes \mathds{B}^{\sigma^A} ( c_{q-2} ) \otimes id , \notag \\
\mathds{S}^- ( X , c_{q-2} \times d_{1,1} ; A , \phi) &= id \otimes id \otimes \mathds{B}^{\sigma^A} ( c_{q-2} ) . \notag
\end{align}
Here, $\Delta_{A^\vee} (\sigma^A) = (\sigma^A)^{(1)} \otimes (\sigma^A)^{(2)}$ and $(\Delta_{A^\vee} \otimes id ) \circ \Delta_{A^\vee}  (\sigma^A) = (\sigma^A)^{(1)} \otimes (\sigma^A)^{(2)} \otimes (\sigma^A)^{(3)}$ mean the Sweedler notation related with the dual Hopf algebra $A^\vee$.
Note that by Theorem \ref{202101021229} the 0-eigenspace of the elementary operator is isomorphic to $\widetilde{H}_{q-n} ( K ; A)$.

\begin{figure}[ht]
  \includegraphics[width=12cm]{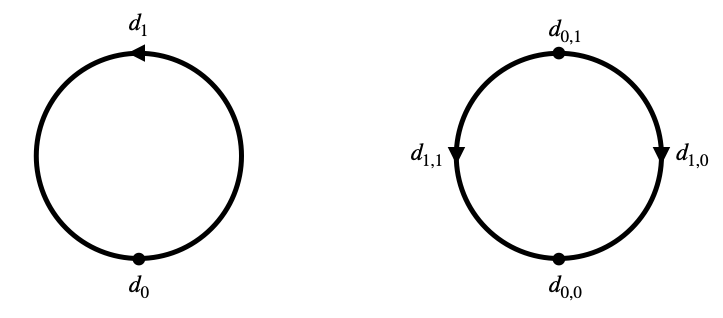}
  \caption{Some CW-complex structures of 1-sphere $S^1$}
  \label{202101022151}
\end{figure}

\subsubsection{The case of $F = Q_n$}

Recall the CW-complex $Q_n$ in Example \ref{202012282309}.
Consider the 0-cell $d_0$ as the basepoint of $Q_n$.
By definitions, we have $V(X ; A) \cong V(W_{q-1} ; A) \otimes V(W_{q-2} ; A)$.
In the physical sense, the qudit $A$ is assigned to $(q-1), (q-2)$-cells of $K$.
Under the isomorphism, the $(\pm)$-stabilizers are computed as follows.
In particular, the stabilizers are associated with $q, (q-1), (q-2), (q-3)$-cells of $K$.

\begin{align}\notag
\mathds{S}^+ ( X , c_q \times d_1 ; A , \phi) &= \mathds{A}^{\sigma_A} ( c_q ) \otimes id , \\
\mathds{S}^+ ( X , c_{q-1} \times d_2 ; A , \phi) &= \mathds{A}^{\sigma^{(1)}_A}_{(-1)^{q-1}n} (c_{q-1} ) \otimes \mathds{A}^{\sigma^{(2)}_A} ( c_{q-1} ) , \notag \\
\mathds{S}^- ( X , c_{q-2} \times d_1 ; A , \phi) &=\mathds{B}^{(\sigma^A)^{(1)}} (c_{q-2} ) \otimes \mathds{B}^{(\sigma^A)^{(2)}}_{(-1)^q n} (c_{q-2} ) ,\notag  \\
\mathds{S}^- ( X , c_{q-3} \times d_2 ; A , \phi) &=id \otimes \mathds{B}^{\sigma^A} (c_{q-3}) .\notag
\end{align}
By definitions, the elementary operator coincides with
\begin{align}\notag
\mathds{H} ( X ; A, \phi) =& 
\sum_{c_q} ( id - \mathds{A}^{\sigma_A} ( c_q ) \otimes id ) + \sum_{c_{q-1}} ( id - \mathds{A}^{\sigma^{(1)}_A}_{(-1)^{q-1}n} (c_{q-1} ) \otimes \mathds{A}^{\sigma^{(2)}_A} ( c_{q-1} ) ) \\
 &+ \sum_{c_{q-2}} ( id - \mathds{B}^{(\sigma^A)^{(1)}} (c_{q-2} ) \otimes \mathds{B}^{(\sigma^A)^{(2)}}_{(-1)^q n} (c_{q-2} ) ) + \sum_{c_{q-3}} ( id- id \otimes \mathds{B}^{\sigma^A} (c_{q-3}) ) . \notag
\end{align}
By Theorem \ref{202101021229} and Remark \ref{202101022117}, we see that the 0-eigenspace equipped with the symmetric bilinear or Hermitian form is isomorphic to the $q$-th ordinary homology Hopf algebra $H_q ( K \times Q_n , K \times \{ d_0 \} \cup \{ \ast_K \} \times Q_n ; A) \cong \widetilde{H}_q ( K \wedge Q_n ; A)$ of the smash product of $K$ and $Q_n$.

\begin{Example}
For the convenience of the readers, we give an application to group Hopf algebra $A = kG$..
Recall that the normalized integral and cointegral are given by $\sigma_A = |G|^{-1} \sum_{g \in G} g$ and $\sigma^A = \hat{e}$.
Let $\phi$ be any $R$-action on $A$.
Then we obtain
\begin{align}\notag
\mathds{S}^+ ( X , c_q \times d_1 ; A , \phi) &= |G|^{-1}  \sum_{g\in G} \mathds{A}^{g} ( c_q ) \otimes id , \\
\mathds{S}^+ ( X , c_{q-1} \times d_2 ; A , \phi) &= |G|^{-1} \sum_{g \in G} \mathds{A}^{g}_{(-1)^{q-1}n} (c_{q-1} ) \otimes \mathds{A}^{g} ( c_{q-1} ) , \notag \\
\mathds{S}^- ( X , c_{q-2} \times d_1 ; A , \phi) &= \sum_{g \in G} \mathds{B}^{\hat{g}} (c_{q-2} ) \otimes \mathds{B}^{\hat{g^{-1}}}_{(-1)^q n} (c_{q-2} ) ,\notag  \\
\mathds{S}^- ( X , c_{q-3} \times d_2 ; A , \phi) &=id \otimes \mathds{B}^{\hat{e}} (c_{q-3}) .\notag
\end{align}
\end{Example}
%%%%%%%%%%%%%%%%%%%%%%%%%%%%%%%%%%%%

\subsection{Reproduction of Kitaev's local stabilizers}
\label{201907220232}

In this subsection, we apply the results in subsection \ref{201907220230} to polyhedral complex which is a special class of CW-complex.
We interpret the bicommutative Kitaev's model as a part of a contravariant LSM in Example \ref{202101062347}.
Let $K$ be a pointed polyhedral complex with dimension lower than $2$ (see Definition 2.39 \cite{Koz}).
Let $X = \Xi^1 ( K , \{ \ast \} ; \mathbb{Z} )$ where $\ast$ is the basepoint.

\begin{remark}
For a polyhedral surface $K^\prime$ without a basepoint (which is the usual setting for the toric code), we set $K = K^\prime\amalg\{\ast\}$ by adding a disjoint point $\ast$ and consider it as a basepoint.
\end{remark}

Let $A$ be a finite-dimensional bisemisimple bicommutative Hopf algebra over $k$.
Denote by $\phi$ the canonical $\mathbb{Z}$-action.
See Remark \ref{202012282203}.

\begin{figure}[ht]
  \includegraphics[width=14.0cm]{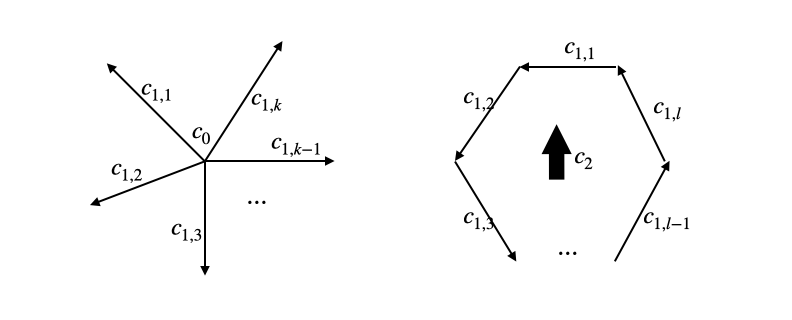}
  \caption{}
  \label{201907171740}
\end{figure}

Let $c_0  \in X_+$, i.e. $c_0$ a $0$-cell which is not a basepoint.
We compute the operator $\mathds{S}^+ ( X , c_0 ; A , \phi)$.
Suppose that edges on $K$ containing $c_0$ are $c_{1,1}, c_{1,2} , \cdots c_{1,k}$.
Then we have $[c_{1,r} : c_0]$ is $1$ or $(-1)$ due to Proposition \ref{201907171754}.
For example if $c_0$ is the source of $c_{1,r}$'s as Figure \ref{201907171740}, then $[c_{1,r} : c_0] =1$ due to the relations of their orientations.
By definitions, we have
\begin{align}\notag
\mathds{S}^+ ( X , c_0 ; A , \phi) ( \bigotimes_{c_1} v_{c_1} ) = \bigotimes_{c_1} \left( \sigma^{(c_1)}_A \cdot v_{c_1} \right) .
\end{align}

Analogously, we compute the operator $\mathds{S}^- ( X , c_2 ; A , \phi)$ for $c_2 \in X_-$, i.e. $c_2$ a 2-cell.
Suppose that faces of $c_2$ are $c_{1,1}, c_{1,2} ,\cdots , c_{1,l}$.
Then we have $[c_2 : c_{1,r} ]$ is $1$ or $(-1)$ due to Proposition \ref{201907171754}.
For example, if the orientation of $c_{1,r}$ coincides with that of the boundary of $c_2$ as Figure \ref{201907171740}, then $[c_2 : c_{1,r} ] =1$.
Hence, we obtain
\begin{align}\notag
\mathds{S}^- ( X , c_2 ; A , \phi ) ( \bigotimes_{c_1} v_{c_1} ) = \sigma^A ( \prod_{c_1} v^{(2)}_{c_1} ) \cdot \bigotimes_{c_1} v^{(1)}_{c_1} .
\end{align}

In the literature, the stabilizers on surfaces are constructed from a semisimple (not necessarily bicommutative) Hopf algebra over some field with characteristic zero.
In \cite{Kit}, the stabilizers are defined based on group (Hopf) algebras over $\mathbb{C}$.
Consider $k = \mathbb{C}$ and $A = \mathbb{C} G$ where $G$ is a finite group.
Then the Kitaev's stabilizers $\mathbf{A} (c_0)$ and $\mathbf{B} (c_2)$ are defined.
Here, we use the bold font to avoid confusion with the Hopf algebra $A$.
If $G$ is abelian, then we have
\begin{align}\notag
\mathbf{A} (c_0) = \mathds{S}^+ ( X , c_0 ; \mathbb{C}G , \phi ),~~ \mathbf{B}(c_2) = \mathds{S}^- ( X , c_2 ; \mathbb{C}G , \phi ) .
\end{align}
In \cite{meusburger}, the stabilizers are constructed from a finite-dimensional semisimple Hopf algebra $A$ over a field with characteristic zero.
The stabilizers are given by $\mathbf{A}^{\sigma_A}_{c_0}$ and $\mathbf{B}^{\sigma^A}_{c_2}$.
Note that $A$ is bisemisimple since the ground field $k$ is of characteristic zero.
If $A$ is bicommutative, then we have
\begin{align}\notag
\mathbf{A}^{\sigma_A}_{c_0} = \mathds{S}^+ ( X , c_0 ; A , \phi ), ~~ \mathbf{B}^{\sigma^A}_{c_2} = \mathds{S}^- ( X , c_2 ; A , \phi ) .
\end{align}
In \cite{BMCA}, Hopf C$^\ast$-algebra is the main input.
A bicommutative bisemisimple Hopf C$^\ast$-algebra should be the function Hopf C$^\ast$-algebra of a finite abelian group.
Hence, this case is implied by the previous one.
Similarly, the stabilizers in \cite{BalKir} are realized implicitly by the previous case since the ground field is $\mathbb{C}$.

%%%%%%%%%%%%%%%%%%%%%%%%%%%%

\subsection{Extension of 0-eigenspaces to TQFT}
\label{202101021342}

We give a generalization of the relationship between Kitaev model and Turaev-Viro TQFT based on bicommutative settings : we prove that 0-eigenspaces associated with a topological LSM extends to a projective TQFT.
We make use of some integrals along Hopf homomorphisms to construct the morphisms corresponding to cobordisms.
The results in this subsection heavily depend on our other papers \cite{kim2019integrals} \cite{kim2020family}.

In \cite{kim2019integrals}, we introduce a new notion of integral along bimonoid homomorphisms in a symmetric monoidal category.
We give an overview by restricting ourselves to the tensor category vector spaces.

\begin{Defn}
Let $A,B$ be Hopf algebras and $\xi : A\to B$ be a Hopf homomorphism.
A linear homomorphism $\mu : B \to A$ is a {\it right integral along} $\xi$ if $\mu (b) \cdot a = \mu ( b \cdot \xi (a ) )$ and $\mu ( b^{(1)} ) \otimes b^{(2)} = \mu (b)^{(1)} \otimes \xi ( \mu (b)^{(2)} )$.
A linear homomorphism $\mu : B \to A$ in $\mathcal{C}$ is a {\it left integral along} $\xi$ if $a \cdot \mu (b) = \mu ( \xi (a) \cdot b )$ and $b^{(1)} \otimes \mu (b^{(2)} ) = \xi ( (\mu (b))^{(1)} ) \otimes \mu (b)^{(2)}$.
A linear homomorphism $\mu : B \to A$ is an {\it integral along} $\xi$ if it is a right integral along $\xi$ and a left integral along $\xi$.
An integral $\mu$ along $\xi$ is {\it normalized} if $\xi ( \mu ( \xi ( a ) ) ) = \xi ( a )$.
An integral $\mu$ along $\xi$ is {\it a generator} if $\mu^\prime ( \xi ( \mu ( b ) ) ) = \mu ( b ) = \mu ( \xi ( \mu^\prime ( b ) ) )$ for any integral $\mu^\prime$ along $\xi$.
\end{Defn}

\begin{prop}
\label{202101042121}
Let $A,B$ be finite-dimensional bisemisimple bicommutative Hopf algebras.
For any Hopf homomorphism $\xi : A \to B$, there exists a unique normalized generator integral $\mu_\xi$ along $\xi$.
\end{prop}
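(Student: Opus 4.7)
The plan is to reduce the claim to two special cases---surjections and injections in the abelian category $\mathsf{Hopf}^\mathsf{bc}_k$---and then combine them via the composition rule for normalized integrals from \cite{kim2019integrals}.

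First, I would use that $\mathsf{Hopf}^\mathsf{bc}_k$ is abelian (Proposition \ref{202101042026}) to factor $\xi = \iota \circ \pi$, where $\pi : A \twoheadrightarrow I$ is the canonical surjection onto the image Hopf algebra $I := \mathrm{Ker_H}(\mathrm{cok_H}(\xi))$ and $\iota : I \hookrightarrow B$ is the image inclusion. As a subquotient, $I$ remains finite-dimensional, bisemisimple and bicommutative, so its normalized (co)integrals exist by Proposition \ref{202101041125}.

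For the surjective factor $\pi$, let $\kappa : K \hookrightarrow A$ denote the kernel inclusion, with normalized integral $\sigma_K$ of $K$. Using the identification $I \cong \alpha_\kappa \backslash A$ from Proposition \ref{202101041128}, I would define $\mu_\pi : I \to A$ by $\mu_\pi([a]) := \kappa(\sigma_K) \cdot a$; this is well-defined because $\kappa(\sigma_K)$ is a central idempotent annihilating the relevant augmentation ideal, and the (co)integral axioms for $\sigma_K$ translate directly into the integral axioms for $\mu_\pi$. Dually, from the normalized cointegral of $\mathrm{Cok_H}(\iota)$ I would construct an integral $\mu_\iota : B \to I$ along $\iota$. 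A short Sweedler calculation then shows that each of $\mu_\pi,\mu_\iota$ is in fact a normalized generator integral along its respective Hopf map.

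I would then assemble $\mu_\xi$ by composing $\mu_\pi$ with $\mu_\iota$ and invoking the composition rule for normalized integrals (Theorem 12.1 of \cite{kim2019integrals}, already used in the proof of Lemma \ref{202007011130}). That rule guarantees the composite is again a normalized generator integral along $\xi = \iota \circ \pi$, provided one absorbs the volume factor $\mathrm{vol}^{-1}(I)$ to keep $\xi \circ \mu_\xi \circ \xi = \xi$. For uniqueness, suppose $\mu,\mu'$ are both normalized generator integrals along $\xi$. Applying the generator property symmetrically yields $\mu' \circ \xi \circ \mu = \mu$ and $\mu \circ \xi \circ \mu' = \mu'$, while the specialization $\mu' = \mu$ gives the idempotence $\mu \circ \xi \circ \mu = \mu$; combined with normalization $\xi \circ \mu \circ \xi = \xi = \xi \circ \mu' \circ \xi$ and the $A$-bimodule structure carried by integrals, these relations force $\mu = \mu'$.

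The main obstacle is the composition step: showing that the composite of generator integrals along $\pi$ and along $\iota$ is again a \emph{generator} (not merely an integral) along $\xi$ requires the full structural results of \cite{kim2019integrals}, rather than following from definitions alone. By contrast, the individual constructions of $\mu_\pi$ and $\mu_\iota$, and the uniqueness argument, are essentially formal consequences of bisemisimplicity together with Propositions \ref{202101041125} and \ref{202101041128}.
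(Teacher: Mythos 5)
Your proposal is essentially correct, but it takes a genuinely different and more constructive route than the paper. The paper's own proof is a two-line reduction: it observes that a finite-dimensional bisemisimple bicommutative Hopf algebra has \emph{finite volume} in the sense of \cite{kim2019integrals} (normalized integral and cointegral exist by Proposition \ref{202101041125}, and $\mathrm{vol}^{-1}(A)=\sigma^A\circ\sigma_A$ is invertible by \cite{BKLT}), and then invokes the main theorem of that reference wholesale. You instead rebuild the statement from the epi--mono factorization $\xi=\iota\circ\pi$ in the abelian category $\mathsf{Hopf}^\mathsf{bc}_k$, with explicit formulas $\mu_\pi([a])=\kappa(\sigma_K)\cdot a$ and the dual formula for $\mu_\iota$ via the cointegral of $\mathrm{Cok_H}(\iota)$; these constructions are sound (well-definedness on the quotient $\alpha_\kappa\backslash A$ follows from the integral condition on $\sigma_K$, and Proposition \ref{202101041128} supplies the identifications you need). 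Your uniqueness argument also works, and in fact is even simpler than you state: if $\mu,\mu'$ are both generators, then $\mu=\mu'\circ\xi\circ\mu=\mu'$ directly from the two generator identities, with no appeal to normalization. What your version buys is transparency about where each hypothesis is used; what it costs is the composition step. There you should be careful about a mild circularity: if Theorem 12.1 of \cite{kim2019integrals} is only an identity relating normalized generator integrals that are already known to exist, then quoting it to \emph{produce} $\mu_\xi$ begs the question. To close the argument self-containedly you would need to verify directly that $\mu_\pi\circ\mu_\iota$ satisfies the integral, normalization and generator axioms along $\iota\circ\pi$ (note that since $\pi$ is an epimorphism and $\iota$ a monomorphism, the scalar discrepancy you worry about is $1$, as in Proposition \ref{202012280910}, so no volume factor actually needs to be absorbed). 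You flag this yourself as the main obstacle, and it is the one place where your proof is not yet complete without leaning on the same external reference the paper cites.
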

\begin{proof}
A finite-dimensional bisemisimple bicommutative Hopf algebra has a finite volume in the sense of \cite{kim2019integrals}.
In fact, Proposition \ref{202101041125} implies the existence of a normalized integral and cointegral.
Moreover, $\mathrm{vol}^{-1} (A) = \sigma^A \circ \sigma_A$ is invertible by \cite{BKLT}.
Then by the main theorem of \cite{kim2019integrals} we obtain the result.
\end{proof}

Fix a (not necessarily topological) LSM $\Xi$ over $R$.
Let $A$ be a finite-dimensional bisemisimple bicommutative Hopf algebra with an $R$-action $\phi$.
For a pointed finite CW-complex $K$, let $E (K) = H( \Xi ( K ) ; A , \phi )$.
If $K$ is a pointed subcomplex of $L$, say that the embedding is $i : K \hookrightarrow L$, then we have an inclusion $\Xi (i) : \Xi (K ) \rightarrowtail \Xi ( L )$ which induces a Hopf homomorphism $E (i) : E (K) \to E (L)$.

\begin{remark}
For a contravariant LSM $\Xi^\prime$, we consider $E (K) = H( \Xi^\prime (K) ; A , \phi)$.
An inclusion $i : K \hookrightarrow L$ induces a restriction $\Xi^\prime (i) : \Xi^\prime (L) \twoheadrightarrow \Xi^\prime (K)$ which contravariantly induces $E(i): E(L) \to E(K)$ by Proposition \ref{202101072053}.
The discussion below could be applied to such a contravariant functor $E$ with appropriate modifications.
\end{remark}

\begin{prop}
\label{202101042120}
The Hopf algebra $E(K)$ is a finite-dimensional bisemisimple bicommutative Hopf algebra.
\end{prop}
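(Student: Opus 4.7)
The plan is to reduce the proposition to the theorems already proved for short abstract complexes. Since $\Xi$ is a LSM over $R$, $\Xi(K)$ is a short abstract complex over $R$, and by definition $E(K) = H(\Xi(K); A, \phi)$ is the homology Hopf algebra of the short chain complex $C_+(\Xi(K); A, \phi) \stackrel{\partial_+}{\to} C_\circ(\Xi(K); A, \phi) \stackrel{\partial_-}{\to} C_-(\Xi(K); A, \phi)$ inside $\mathsf{Hopf}^\mathsf{bc}_k$. The three properties will be checked separately.

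Bicommutativity of $E(K)$ is immediate from its definition as a cokernel of a kernel in the abelian category $\mathsf{Hopf}^\mathsf{bc}_k$ (Definition \ref{202012261818}). For finite-dimensionality, I would observe that each $C_\square(\Xi(K); A, \phi) = A^{\otimes \Xi(K)_\square}$ is a finite tensor product of the finite-dimensional Hopf algebra $A$; by Proposition \ref{202101041128} the kernel Hopf algebra $\mathrm{Ker_H}(\partial_-)$ is realized as the stabilized subspace $C_\circ \backslash \beta_{\partial_-}$ of $C_\circ(\Xi(K); A, \phi)$, and $E(K) = \mathrm{Cok_H}(\bar{\partial}_+)$ is realized as a quotient of that stabilized subspace; hence $\dim E(K) \leq \dim A^{\otimes \Xi(K)_\circ} < \infty$.

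For bisemisimplicity, the cleanest route is to invoke Theorem \ref{202101021229}: its second bullet asserts that the 0-eigenspace $V_0(\Xi(K); A, \phi)$ carries a finite-dimensional bisemisimple bicommutative Hopf algebra structure, and this structure is by construction the one inherited via the natural isomorphism $V_0(\Xi(K); A, \phi) \cong H(\Xi(K); A, \phi) = E(K)$ provided by Theorem \ref{202006302032}. Transporting the structure along the isomorphism therefore yields bisemisimplicity of $E(K)$.

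The main obstacle, should one prefer a self-contained argument that bypasses the 0-eigenspace, would be to show directly that the class of finite-dimensional bisemisimple objects in $\mathsf{Hopf}^\mathsf{bc}_k$ is closed under subobjects and quotients in the abelian-category sense. This can be done by checking that the (co)restriction of the normalized integral of $C_\circ(\Xi(K); A, \phi)$ (obtained as an iterated tensor product of $\sigma_A$) gives a normalized integral on the sub-Hopf-algebra $\mathrm{Ker_H}(\partial_-)$ and descends to one on $\mathrm{Cok_H}(\bar{\partial}_+)$, and dually for the normalized cointegral, after which Proposition \ref{202101041125} delivers bisemisimplicity; however, appealing to Theorem \ref{202101021229} packages this verification more efficiently.
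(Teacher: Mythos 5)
Your decomposition into bicommutativity, finite-dimensionality and bisemisimplicity is sensible, and the first two parts are fine: bicommutativity is automatic for homology computed in the abelian category $\mathsf{Hopf}^\mathsf{bc}_k$, and the dimension bound via Proposition \ref{202101041128} (the kernel realized as a stabilized subspace of $C_\circ(\Xi(K);A,\phi)$, the cokernel as a quotient of it) is exactly right. The problem is the bisemisimplicity step. You invoke Theorem \ref{202101021229}, but the bisemisimplicity assertion in that theorem is not independently established there: its proof only says the claims are ``immediate from Theorem \ref{202007061137}'', and Theorem \ref{202007061137} merely supplies the isomorphism $V_0(\Xi(K);A,\phi)\cong H(\Xi(K);A,\phi)$ compatible with the bilinear forms --- it says nothing about the homology Hopf algebra being bisemisimple. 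The bisemisimplicity of the Hopf structure transported onto the $0$-eigenspace is precisely the content of the present proposition, so your primary route is circular in substance, even though Theorem \ref{202101021229} happens to be stated earlier in the paper.

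The paper's own proof discharges this by citing \cite{kim2019integrals} for the fact that the full subcategory $\mathsf{Hopf}^\mathsf{bc,vol}_k$ of finite-dimensional bisemisimple bicommutative Hopf algebras is an abelian subcategory of $\mathsf{Hopf}^\mathsf{bc}_k$; since each $C_\square(\Xi(K);A,\phi)=A^{\otimes \Xi(K)_\square}$ lies in that subcategory, so does the homology. Your closing paragraph --- that the normalized integral of $C_\circ(\Xi(K);A,\phi)$ pushes forward along the surjection onto $\mathrm{Cok_H}(\bar{\partial}_+)$ and the normalized cointegral restricts to $\mathrm{Ker_H}(\partial_-)$, after which Proposition \ref{202101041125} applies --- is exactly the right idea and is essentially an unpacking of that citation. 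It should be promoted from an aside to the actual argument (or replaced by a direct appeal to the abelian-subcategory result of \cite{kim2019integrals}), rather than left as a remark while the main weight rests on Theorem \ref{202101021229}.
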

\begin{proof}
Let $\mathsf{Hopf}^\mathsf{bc,vol}_k$ be the category of finite-dimensional bisemisimple bicommutative Hopf algebras (equivalently, bicommutative Hopf algebras with a finite volume) and Hopf homomorphisms.
The category $\mathsf{Hopf}^\mathsf{bc,vol}_k$ is a abelian subcategory of $\mathsf{Hopf}^\mathsf{bc}_k$ \cite{kim2019integrals}.
Thus, $E(K)$ is an object of $\mathsf{Hopf}^\mathsf{bc,vol}_k$ since it is defined by a homology Hopf algebra of a chain complex in the abelian category $\mathsf{Hopf}^\mathsf{bc,vol}_k$.
\end{proof}

From now on, we construct a path-integral of $E$ along a cobordism by using the integrals along Hopf homomorphisms.
More generally, we consider a cospan of embeddings instead of cobordisms, which is technically useful.

\begin{Defn}
\label{202101022139}
{\it A cospan of embeddings} is a quintuple $\Lambda = (L ; K_0 , f_0 ; K_1 , f_1)$ where $L,K_0,K_1$ are pointed finite CW-complexes and $f_0 : K_0 \hookrightarrow L$, $f_1 : K_1 \hookrightarrow L$ are embeddings.
For a cospan $\Lambda = (L ; K_0 , f_0 ; K_1 , f_1)$, we call $f_0, f_1$ {\it the left arm and the right arm} respectively.

Consider two cospans of embeddings $\Lambda_0 = (L ; K_0 , f_0 ; K_1 , f_1)$ and $\Lambda_1 = (L^\prime ; K_0 , f^\prime_0 ; K_1 , f^\prime_1)$ with the same source $K_0$ and the target $K_1$.
The cospans $\Lambda_0, \Lambda_1$ are {\it homotopy equivalent} if there exists a pointed homotopy equivalence $h : L \to L^\prime$ such that $h \circ f_0 \simeq f^\prime_0$ and $h \circ f_1 \simeq f^\prime_1$.
We write $\Lambda_0 \simeq \Lambda_1$.
\end{Defn}

\begin{Defn}
\label{202101030308}
Consider a cospan of embeddings $\Lambda = (L ; K_0 , f_0 ; K_1 , f_1)$.
We define {\it a path-integral along $\Lambda$} by $(\hat{\mathsf{PI}} (E)) ( \Lambda ) \stackrel{\mathrm{def.}}{=} \mu_{E(f_1)} \circ E(f_0)$ which is a linear homomorphism from $E(K_0)$ to $E(K_1)$.
Note that $\mu_{E(f_1)}$ exists due to Proposition \ref{202101042121}, \ref{202101042120}.
\end{Defn}

For simplicity, let $Y = \hat{\mathsf{PI}} (E)$.
In the following propositions, we give some basic properties.

\begin{prop}
For an embedding $f : K \hookrightarrow L$, let $\iota ( f ) = ( L ; K , f ; L , id_L)$.
Then we have $Y( \iota ( f) ) = E (f)$.
\end{prop}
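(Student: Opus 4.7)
The plan is to unfold the definition of the path-integral on the particular cospan $\iota(f)$, invoke functoriality of $E$ to trivialize the right arm, and then identify the integral along an identity Hopf homomorphism with the identity itself.

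First, by Definition \ref{202101030308} applied to $\iota(f) = (L; K, f; L, id_L)$, we have
\begin{equation}\notag
Y(\iota(f)) = \mu_{E(id_L)} \circ E(f) : E(K) \to E(L).
\end{equation}
Since $E = H(\Xi(-); A, \phi)$ is the composition of the functor $\Xi$ (Definition \ref{202012291052}) with the covariant assignment $H(-; A, \phi)$ of Proposition \ref{202101072053}, it is functorial on embeddings, and in particular $E(id_L) = id_{E(L)}$. Hence the whole argument reduces to proving the auxiliary identity
\begin{equation}\notag
\mu_{id_B} = id_B
\end{equation}
for any finite-dimensional bisemisimple bicommutative Hopf algebra $B$ (which is the case of $B = E(L)$, bisemisimple bicommutative by Proposition \ref{202101042120}).

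The auxiliary identity will be established by checking that $id_B$ satisfies the defining properties of the normalized generator integral along $id_B$ and then appealing to the uniqueness clause of Proposition \ref{202101042121}. The integral conditions (both left and right) for $\mu = id_B$ and $\xi = id_B$ degenerate to tautologies: the algebra-side condition becomes $b \cdot a = b \cdot a$, and the coalgebra-side condition becomes $\Delta(b) = \Delta(b)$. Normalization $\xi(\mu(\xi(a))) = \xi(a)$ likewise reduces to $a = a$. For the generator property one notes that any integral $\mu'$ along $id_B$ is a right $B$-module homomorphism with $\Delta(\mu'(1)) = \mu'(1)\otimes 1$, so $\mu'(1) \in k\cdot \eta_B$ and $\mu' = \lambda \cdot id_B$ for some $\lambda \in k$; the generator equations $\mu'(id_B(id_B(b))) = id_B(b) = id_B(id_B(\mu'(b)))$ now read $\lambda b = b = \lambda b$, which is compatible in the sense needed for $id_B$ to be the distinguished (normalized) generator in the one-dimensional space of integrals. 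Combining everything yields $Y(\iota(f)) = id_{E(L)} \circ E(f) = E(f)$, as claimed.

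The only conceptual step is the identification $\mu_{id_B} = id_B$; I expect this to be essentially bookkeeping, since all the axioms collapse to trivial identities once $\xi$ is taken to be the identity. The mild subtlety is ensuring that the \emph{normalized generator} clause is correctly handled, but this is dealt with by invoking uniqueness from Proposition \ref{202101042121} together with the explicit verification that $id_B$ is a normalized integral; an alternative route is to observe more generally that for any Hopf isomorphism $\xi : A \to B$ the map $\xi^{-1}$ satisfies the integral and normalization axioms along $\xi$, specializing to $\xi = id_B$ for the case at hand.
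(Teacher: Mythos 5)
Your proposal is correct and follows essentially the same route as the paper: the paper's proof is the one-line observation that for a Hopf isomorphism $\xi$ the normalized generator integral $\mu_\xi$ is $\xi^{-1}$, which is exactly your "alternative route" specialized to $\xi = id_{E(L)}$, combined with unwinding Definition \ref{202101030308}. Your explicit verification that $id_B$ satisfies the integral, normalization and generator axioms is just a spelled-out version of what the paper leaves implicit (the uniqueness clause of Proposition \ref{202101042121} then closes the argument, as you note).
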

\begin{proof}
Note that for a Hopf isomorphism $\xi$, the normalized generator integral $\mu_\xi$ should be the inverse of $\xi$.
Then the claim follows from the definition.
\end{proof}

\begin{prop}
\label{202012280910}
Let $\Lambda_1 = (L^\prime ; K_1 , g_1 ; K_2 , g_2)$, $\Lambda_0 = (L ; K_0 , f_0 ; K_1 , f_1)$ be cospans of embeddings of pointed finite CW-complexes and $\Lambda_1 \circ \Lambda_0$ be the composition.
In other words, $\Lambda_1 \circ \Lambda_0 = ( L^\prime \vee_{K_1} L ; K_0 , j_0 \circ f_0 ; K_2 , j_2 \circ f_2)$ where $L^\prime \vee_{K_1} L$ is the finite CW-complex obtained by gluing $L,L^\prime$ along the subcomplex $K_1$ and $j_0 , j_2$ are the embeddings from $L,L^\prime$ into $L^\prime \vee_{K_1} L$ respectively.
Then there exists a unique $0 \neq \lambda \in k$ such that
\begin{align}\notag
Y ( \Lambda_1 ) \circ Y( \Lambda_0 ) = \lambda \cdot Y ( \Lambda_1 \circ \Lambda_0 ) .
\end{align}
Furthermore, if $E(g_2)$ is an epimorphism or $E(j_2)$ is a monomorphism in $\mathsf{Hopf}^\mathsf{bc}_k$, then $\lambda = 1$.
\end{prop}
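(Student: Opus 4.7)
The plan is to unfold the defining formulas on both sides of the claim and reduce the identity $Y(\Lambda_1) \circ Y(\Lambda_0) = \lambda \cdot Y(\Lambda_1 \circ \Lambda_0)$ to a single algebraic equality between linear maps $E(L) \to E(K_2)$. This equality will then be obtained by combining two ingredients: a base-change (push-pull) identity for normalized generator integrals associated to the pushout square produced by the locality axiom, and the composition rule for integrals along Hopf homomorphisms proved in \cite{kim2019integrals}.

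First I would unfold
\[ Y(\Lambda_0)=\mu_{E(f_1)}\circ E(f_0),\quad Y(\Lambda_1)=\mu_{E(g_2)}\circ E(g_1),\quad Y(\Lambda_1\circ\Lambda_0)=\mu_{E(j_2)\circ E(g_2)}\circ E(j_0)\circ E(f_0), \]
using the functoriality of $E$ together with Proposition \ref{202101042121} to guarantee existence and uniqueness of each normalized generator integral. Cancelling the common factor $E(f_0)$ on the right reduces the claim to producing a unique nonzero scalar $\lambda \in k$ with
\[ \mu_{E(g_2)}\circ E(g_1)\circ \mu_{E(f_1)} = \lambda\cdot \mu_{E(j_2)\circ E(g_2)}\circ E(j_0). \]

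Next I would use the locality axiom: Proposition \ref{202101021609} identifies the image of $E$ on the pushout square $K_1 \rightrightarrows L, L' \to L' \vee_{K_1} L$ with an exact Mayer--Vietoris sequence in $\mathsf{Hopf}^\mathsf{bc,vol}_k$. I expect this to yield a base-change identity
\[ E(g_1)\circ \mu_{E(f_1)} = \lambda_1\cdot \mu_{E(j_2)}\circ E(j_0), \]
with $\lambda_1$ an invertible ratio of volume invariants attached to the kernel and image Hopf algebras in the square; it should reduce to $1$ when $E(j_2)$ is a monomorphism, since then $\mu_{E(j_2)}$ inverts $E(j_2)$ on its image and the push-pull square collapses. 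Then I would invoke the composition rule of \cite{kim2019integrals} (the same result cited in Lemma \ref{202007011130}) to rewrite
\[ \mu_{E(g_2)}\circ \mu_{E(j_2)} = \lambda_2\cdot \mu_{E(j_2)\circ E(g_2)}, \]
with $\lambda_2$ again an invertible volume factor, equal to $1$ precisely when $E(g_2)$ is an epimorphism. Multiplying the two identities and setting $\lambda = \lambda_1\lambda_2$ yields the required equality with $\lambda \neq 0$, and uniqueness of $\lambda$ follows from the nonvanishing of $\mu_{E(j_2)\circ E(g_2)}\circ E(j_0)$.

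The main obstacle will be proving the base-change identity in the correct generality. Proposition \ref{202101021609} as stated delivers only middle-exactness of the Mayer--Vietoris sequence of Hopf algebras, whereas push-pull is normally extracted from an actual pushout square in $\mathsf{Hopf}^\mathsf{bc,vol}_k$. Bridging this gap will require working directly with the short exact sequence of chain complexes underlying the locality axiom, tracking volumes through the resulting long exact sequence, and then combining this with the composition and universality formulas of \cite{kim2019integrals} to identify $\lambda_1$ and $\lambda_2$ precisely and to verify their collapse to $1$ under the epimorphism or monomorphism hypothesis on $E(g_2)$ or $E(j_2)$.
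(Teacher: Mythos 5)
Your overall architecture --- unfold $Y$, reduce to an identity on $E(L)$, then combine a push--pull identity across the glued square with the composition rule for normalized generator integrals --- is exactly the content hidden inside the paper's one-line citation of Theorem 12.1 of \cite{kim2019integrals}: the paper simply quotes that theorem to get $Y(\Lambda_1)\circ Y(\Lambda_0)=\lambda\cdot Y(\Lambda_1\circ\Lambda_0)$ with $\lambda=\langle \mathrm{cok_H}(E(g_2))\circ\mathrm{ker_H}(E(j_2))\rangle$, and then observes that this inverse volume is $1$ once either the cokernel or the kernel is trivial. So the skeleton of your plan is right.

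However, your identification of the two scalars is off in a way that breaks the final claim. The base-change identity $E(g_1)\circ\mu_{E(f_1)}=\lambda_1\cdot\mu_{E(j_2)}\circ E(j_0)$ must hold with $\lambda_1=1$ unconditionally: the middle-exactness supplied by Proposition \ref{202101021609} says precisely that the square $E(K_1)\to E(L),\,E(K_1)\to E(L')\to E(L'\vee_{K_1}L)$ is an exact square, and for an exact square one checks (e.g.\ on group Hopf algebras, where $\mu_\xi$ averages over fibres) that $\mathrm{Ker_H}(E(j_2))$ is the image of $\mathrm{Ker_H}(E(f_1))$ under $E(g_1)$ and the two normalizations cancel exactly; no volume correction appears and no monomorphism hypothesis is needed. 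All of $\lambda$ then comes from the composition rule $\mu_{E(g_2)}\circ\mu_{E(j_2)}=\langle\mathrm{cok_H}(E(g_2))\circ\mathrm{ker_H}(E(j_2))\rangle\cdot\mu_{E(j_2)\circ E(g_2)}$, and this scalar is $1$ when \emph{either} $\mathrm{Cok_H}(E(g_2))$ \emph{or} $\mathrm{Ker_H}(E(j_2))$ is trivial --- not ``precisely when $E(g_2)$ is an epimorphism'' as you write. With your assignments ($\lambda_1=1$ only if $E(j_2)$ is mono, $\lambda_2=1$ only if $E(g_2)$ is epi), the product $\lambda_1\lambda_2$ is guaranteed to be $1$ only under the conjunction of the two hypotheses, so the disjunctive conclusion of the proposition would not follow. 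Moreover, you yourself flag the base-change identity as the unproven ``main obstacle''; as it stands the proposal does not establish it, even though it is exactly the step where the exactness coming from the locality axiom enters, and where the scalar must be pinned to $1$ for the statement to come out in the form asserted.
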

\begin{proof}
By Theorem 12.1 in \cite{kim2019integrals}, we have $Y ( \Lambda_1 ) \circ Y( \Lambda_0 ) = \lambda \cdot Y ( \Lambda_1 \circ \Lambda_0 )$ for $\lambda = \langle \mathrm{cok_H} ( E(g_2) ) \circ \mathrm{ker_H} ( E(j_2) ) \rangle$ where $\langle - \rangle$ denotes the inverse volume of Hopf homomorphisms.
Especially, if $E(g_2)$ is an epimorphism or $E(j_2)$ is a monomorphism in $\mathsf{Hopf}^\mathsf{bc}_k$, then $\langle \mathrm{cok_H} ( E(g_2) ) \circ \mathrm{ker_H} ( E(j_2) ) \rangle$ where $\langle - \rangle$ should be $1 \in k$ by definition.
\end{proof}

Note that the homomorphism $E(i)$ is only defined for an embedding $i$.
If $\Xi$ is topological, then it extends to any pointed maps between finite pointed CW-complexes.
As a preparation, we introduce the following cospan of embeddings induced by pointed maps.

\begin{Defn}
Let $f : K \to L$ be a pointed map which is not necessarily an embedding.
We define a cospan of embeddings $J( f)$ by $( M(f) ; L , K ; L ; r )$ where $M(f) = ( K \wedge [0,1]^+ ) \vee_{K} L$ denotes the mapping cylinder of $f$ obtained by identifying $(x,0) \sim x$ for $x \in K$.
$l,r$ denote the canonical embeddings.
\end{Defn}

\begin{Lemma}
\label{202012291433}
Let $f_0, f_1 : K \hookrightarrow L$ be embeddings which are isotopic to each other, i.e. there exists an embedding $f : K \wedge [0,1]^+ \hookrightarrow L$ such that $f\circ i_0 = f_0$ and $f \circ i_1 = f_1$.
If the LSM $\Xi$ is topological with respect to $(A, \phi)$, then $E (f_0 ) = E( f_1 )$.
\end{Lemma}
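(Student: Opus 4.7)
The plan is to reduce the claim directly to the first axiom of Definition \ref{202012291544} (homotopy invariance with respect to the two ``endpoint'' embeddings $i_t : K \hookrightarrow K \wedge [0,1]^+$) together with the functoriality of the assignment $E = H(\Xi(-); A, \phi)$, which is guaranteed by Proposition \ref{202101072053} (and ultimately by the functoriality built into the definition of a LSM in Definition \ref{202012291052}).

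Concretely, I would proceed as follows. First, since an isotopy $f : K \wedge [0,1]^+ \hookrightarrow L$ realizing $f_0, f_1$ exists by hypothesis, and since $i_0, i_1$ and $f$ are all embeddings in $\mathsf{CW}^\mathsf{fin,emb}_\ast$, both factorizations $f_0 = f \circ i_0$ and $f_1 = f \circ i_1$ are compositions of arrows in $\mathsf{CW}^\mathsf{fin,emb}_\ast$. Applying the covariant functor $E$ to these factorizations yields
\begin{align}\notag
E(f_0) = E(f) \circ E(i_0), \qquad E(f_1) = E(f) \circ E(i_1).
\end{align}
Next, since $\Xi$ is assumed topological with respect to $(A, \phi)$, the first bullet of Definition \ref{202012291544} gives exactly $E(i_0) = E(i_1)$. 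Substituting this equality into the two displayed expressions gives $E(f_0) = E(f_1)$, which is the desired conclusion.

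There is essentially no obstacle here: the lemma is a direct corollary of the endpoint axiom of topologicality combined with functoriality, and the second axiom of Definition \ref{202012291544} is not even required. The only small subtlety worth noting is that one must invoke topologicality at the level of the induced Hopf homomorphisms $E(i_t) = H(\Xi(i_t); A, \phi)$ rather than at the level of $\Xi$ itself, but this is precisely the form in which the axiom is stated.
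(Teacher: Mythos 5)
Your proposal is correct and coincides with the paper's own proof: both apply functoriality to the factorizations $E(f_t) = E(f) \circ E(i_t)$ and then invoke the first axiom of Definition \ref{202012291544} to get $E(i_0) = E(i_1)$. Your additional observation that the second axiom of topologicality is not needed here is accurate and consistent with Remark \ref{202101030318}.
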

\begin{proof}
Note that $E( f_t ) = E( f \circ i_t) = E(f) \circ E(i_t)$ by the functoriality of $E$.
Since $\Xi$ is topological, we have $E (i_0 ) = E(i_1 )$ so that we obtain $E( f_0 ) = E( f_1 )$.
\end{proof}

\begin{Defn}
{\it A $\mathsf{Hopf}^\mathsf{bc}_k$-valued (covariant) Brown functor $E^\prime$} is a symmetric monoidal functor from the category of pointed finite CW-complexes and pointed maps to $\mathsf{Hopf}^\mathsf{bc}_k$.
It satisfies the following axioms.
\begin{itemize}
\item
It is a homotopy invariant :
A pointed homotopy of maps $f_0 \simeq f_1$ implies $E^\prime (f_0 ) = E^\prime (f_1)$.
\item
It satisfies the Mayer-Vietoris axiom :
If $K_0,K_1$ are subcomplexes of $L$, then the following induced chain complex of bicommutative Hopf algebras is
\begin{align}\notag
\notag
E^\prime ( K_0 \cap K_1 ) \to E^\prime ( K_0 ) \otimes E^\prime ( K_1 )  \to E^\prime ( K_0 \cup K_1 )  .
\end{align}
\end{itemize}
\end{Defn}

\begin{Lemma}
\label{202012291530}
If the LSM $\Xi$ is topological with respect to $(A, \phi)$, then the following assignments give a $\mathsf{Hopf}^\mathsf{bc}_k$-valued (covariant) Brown functor.
\begin{itemize}
\item
It assigns the bicommutative Hopf algebra $E(K)$ to each pointed finite CW-complex $K$.
\item
It assigns the linear homomorphism $Y(J(f)) : E(K) \to E(L)$ to each pointed map $f : K \to L$.
\end{itemize}
\end{Lemma}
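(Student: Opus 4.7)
The plan is to verify each requirement in turn: that the assignment lands in $\mathsf{Hopf}^\mathsf{bc}_k$, is functorial and symmetric monoidal, is homotopy invariant, and satisfies the Mayer--Vietoris axiom. The central leverage comes from topologicality: for any pointed homotopy equivalence $h$, the induced $E(h)$ is an isomorphism in $\mathsf{Hopf}^\mathsf{bc}_k$, whence (by Propositions \ref{202101042121} and \ref{202101042120}) the normalized generator integral $\mu_{E(h)}$ equals $E(h)^{-1}$ and is in particular a Hopf homomorphism.

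First I would verify well-definedness: in $J(f) = (M(f);K,l;L,r)$ the right arm $r$ is a homotopy equivalence, so $Y(J(f)) = E(r)^{-1}\circ E(l)$ is a Hopf homomorphism. The identity axiom is immediate, since both arms of $J(\mathrm{id}_K)$ are homotopic embeddings of $K$ into $K\wedge [0,1]^+$, whence $E(l)=E(r)$ by topologicality and $Y(J(\mathrm{id}_K))=\mathrm{id}_{E(K)}$. For composition, I would invoke Proposition \ref{202012280910} applied to $J(g)\circ J(f)$: the right arm of $J(g)$ is the canonical embedding $M\hookrightarrow M(g)$, which is a homotopy equivalence, so the induced $E$-morphism is an isomorphism and a fortiori an epimorphism, forcing $\lambda=1$. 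Proposition \ref{202101022059} alluded to in Remark \ref{202101030318} then identifies $Y(J(g)\circ J(f))$ with $Y(J(g\circ f))$ via the homotopy equivalence of cospans obtained by collapsing $M(g)\vee_L M(f)\to M(g\circ f)$ compatibly with both arms, and composition follows.

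Homotopy invariance is immediate from the same Proposition \ref{202101022059}: a pointed homotopy $f_0\simeq f_1$ yields a homotopy equivalence $J(f_0)\simeq J(f_1)$ of cospans, and therefore $Y(J(f_0))=Y(J(f_1))$. The symmetric monoidal structure is inherited from that of $\Xi$ combined with the compatibility of the homology Hopf algebra functor with tensor products. For Mayer--Vietoris, I would first verify that on an embedding $i:K\hookrightarrow L$ the extended functor recovers the original $E(i)$: the right arm of $J(i)$ is a homotopy equivalence, and $l$ is homotopic to $r\circ i$ inside $M(i)$, giving $Y(J(i))=E(r)^{-1}\circ E(l)=E(r)^{-1}\circ E(r)\circ E(i)=E(i)$. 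The exact sequence required by the Mayer--Vietoris axiom then coincides with the one furnished by Proposition \ref{202101021609}.

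The hard part will be the composition axiom, where two separate verifications must be combined: the vanishing of the scalar obstruction $\lambda$ in Proposition \ref{202012280910} (handled via the topologicality-induced epimorphism on the right arm of $J(g)$) and the homotopy invariance of $Y$ on cospans applied to the collapse $M(g)\vee_L M(f)\to M(g\circ f)$, which essentially requires both clauses of Definition \ref{202012291544} in view of Remark \ref{202101030318}. Once this is in place, the remaining items are either formal or direct applications of results already established in the preceding subsections.
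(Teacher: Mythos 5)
Your outline gets several pieces right and in essentially the paper's way: the verification that $Y(J(f))$ is well defined because the right arm $r$ of $J(f)$ is a homotopy equivalence (so $\mu_{E(r)}=E(r)^{-1}$), the identity axiom, the recovery $Y(J(i))=E(i)$ for an embedding $i$ via the isotopy $l\simeq r\circ i$, the observation that $\lambda=1$ in Proposition \ref{202012280910} because $E$ of the right arm of $J(g)$ is an isomorphism, and the reduction of monoidality and Mayer--Vietoris to the axioms of $\Xi$ and Proposition \ref{202101021609}.

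However, there is a genuine circularity at the two places where you lean on Proposition \ref{202101022059}. You use its third clause --- that $Y(\Lambda)$ is a homotopy invariant of cospans --- both to identify $Y(J(g)\circ J(f))$ with $Y(J(g\circ f))$ and to deduce $Y(J(f_0))=Y(J(f_1))$ from $f_0\simeq f_1$. But in the paper the implication ``$\Xi$ topological $\Rightarrow$ $Y(\Lambda)$ is a homotopy invariant'' is proved precisely by first establishing Lemma \ref{202012291530} (topological $\Rightarrow$ Brown functor $\Rightarrow$ homotopy invariance of $Y$ on cospans); there is no independent direct proof of that implication you could cite here. A related technical obstruction is that your proposed comparison map, the collapse $M(g)\vee_{K_1}M(f)\to M(g\circ f)$, is not an embedding, so at this stage of the development neither $E$ nor the only homotopy-invariance statement actually available (Lemma \ref{202012291433}, which concerns isotopic \emph{embeddings}) applies to it. The paper circumvents both problems by constructing an embedding in the opposite direction, $h:M(g\circ f)\hookrightarrow M(g)\vee_{K_1}M(f)$, computing $E(h)\circ E(r)=E(r')$ and $E(h)\circ E(l)=E(l')$ from Lemma \ref{202012291433}, and using that $\mu_{E(h)\circ E(r)}=\mu_{E(r)}\circ\mu_{E(h)}$ with $\mu_{E(h)}=E(h)^{-1}$; it first treats the case where $f$ is an embedding and $g$ arbitrary, derives homotopy invariance of $Y(J(-))$ from that case, and only then bootstraps to arbitrary composable maps via the factorization of $f$ through its mapping cylinder. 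You would need to replace your appeals to Proposition \ref{202101022059} with an argument of this kind.
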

\begin{proof}
Let $f : K \hookrightarrow L$ be an embedding.
Firstly we show that $Y ( J(f)) = E(f)$.
Let $r, l$ be the left and right arms of the cospan $J(f)$.
Then $r \circ f$ is isotopic to $l$.
By Lemma \ref{202012291433} and the functoriality, we obtain $E(r) \circ E(f) = E(r \circ f) = E(l)$.
Hence, $Y( J(f)) = \mu_{E(r)} \circ E(l) = \mu_{E(r)} \circ E (r) \circ E(f)$.
Here, $E(r)$ is an isomorphism since $\Xi$ is topological and $r$ is a homotopy equivalence.
It implies that $\mu_{E(r)} = E(r)^{-1}$ so that $Y ( J(f)) = E(f)$.

In particular, we also obtain $Y(J( Id_K ) ) = Id_{E(K)}$.

Let $f: K_0 \hookrightarrow K_1$ be an embedding and $g : K_1 \to K_2$ be an arbitrary pointed map.
We prove that $Y ( J(g)) \circ Y ( J(f)) = Y ( J(g\circ f))$.
By Proposition \ref{202012280910}, it suffices to prove that $Y( J(g) \circ J(f) ) = Y( J(g \circ f) )$.
We set an embedding $h : M( g\circ f) \hookrightarrow M(g) \vee_{K_1} M(f)$ by $h( [x , t ] ) = [ f(x) , t ] \in K_1 \wedge [0,1]^+ \subset M(g) \vee_{K_1} M(f)$ if $(x, t) \in K_0 \times ( 0,1 ]$ and $h( y) = y$ if $y \in K_2 \subset M(g) \vee_{K_1} M(f)$.
Let $r,l$ be the right and left arms of $J(g \circ f)$ and $r^\prime, l^\prime$ be those of $J(g) \circ J(f)$.
By definitions, we have $h \circ r = r^\prime$.
Furthermore, $h \circ l$ and $l^\prime$ are isotopic to each other.
Hence, we obtain $E(h) \circ E(r) = E(r^\prime)$ and $E(h) \circ E(l) = E(l^\prime)$ by Lemma \ref{202012291433}.
It gives $Y( J(g) \circ J(f) ) = \mu_{E(r^\prime)} \circ E(l^\prime ) = \mu_{E(h) \circ E(r) } \circ E(h) \circ E(l)$.
Note that $\mu_{E(h) \circ E(r) } = \mu_{E(r)} \circ \mu_{E(h)}$ since $E(r)$ is an isomorphism (due to the topologicality of $\Xi$).
Since $h$ is a homotopy equivalence and $\Xi$ is topological, $\Xi (h)$ is an isomorphism so that $\mu_{E(h)} = E(h)^{-1}$.
Thus, $Y( J(g) \circ J(f) ) = Y( J(g \circ f) )$.

Let $f_0, f_1 : K \to L$ be arbitrary pointed maps.
We show that if $f_0,f_1$ are homotopic to each other then $Y ( J(f_0 ) ) = Y( J(f_1 ) )$.
In fact, if $f : K \wedge [0,1]^+ \to L$ is a homotopy from $f_0$ to $f_1$, then we have $Y ( J(f_t ) ) = Y( J( f \circ i_t) ) = Y( J(f)) \circ Y(J(i_t))$ due to the above result.
Moreover $Y(J(i_t)) = E(i_t)$ since $i_t$ is an embedding.
We obtain $Y ( J(f_0 ) ) = Y( J(f_1 ) )$ since $E(i_0) = E(i_1)$.

We prove that $Y(J(-))$ preserves compositions.
Let $f : K_0 \to K_1$ and $g : K_1 \to K_2$ be arbitrary pointed maps.
Denote by $l,r$ the left and right arms of the cospan $J(f)$.
Note that there exists a homotopy inverse $h : M(f) \to K_1$ of $r$.
If we choose such $h$, then we have $Y(J(g)) = Y(J( g \circ h \circ r)) = Y(J(g \circ h)) \circ Y(J(r))$ and $Y(J(f)) = Y( J(h \circ l )) = Y(J(h)) \circ Y(J(l))$ due to the above results.
Thus, $Y(J(g)) \circ Y(J(f)) = Y(J(g \circ h)) \circ Y(J(r)) \circ Y(J(h)) \circ Y(J(l))$.
We claim that $Y(J(r)) \circ Y(J(h))$ is the identity.
In fact, $Y(J(h)) \circ Y(J(r)) = Y(J(h \circ r)) = Y(J( Id_{K_1})) = Id_{E(K_1)}$ since $r$ is an embedding.
Moreover, we have $Y(J(h)) = Y(J(r))^{-1}$ since $Y(J(r)) = E(r)$ is an isomorphism (note that $r$ is a homotopy equivalence).
By the claim, we prove $Y(J(g)) \circ Y(J(f)) = Y(J(g \circ h)) \circ Y(J(l)) = Y( J( g\circ h \circ l)) = Y( J( g \circ f ) )$.
Above all, it is proved that the assignments $E(K)$ and $Y(J(f))$ give a functor from the homotopy category of pointed finite CW-complexes and pointed maps to $\mathsf{Hopf}^\mathsf{bc}_k$.
Finally, the functor is improved to a symmetric monoidal functor in the obvious way due to the monoidal axiom of the LSM $\Xi$.
Furthermore, the functor satisfies the Mayer-Vietoris axiom due to Proposition \ref{202101021609}.
It completes the proof.
\end{proof}

\begin{Example}
For the LSM in Example \ref{202007061120}, we have $E(K) \cong \widetilde{H}_q ( K  ; A )$ where $\widetilde{H}_\bullet ( - ; A)$ is the reduced ordinary homology theory.
Similarly, for the LSM in Example \ref{202101021222}, we have $E(K) \cong H_q ( K \times F , \{ \ast \} \times F \cup K \times \{ \ast \} ; A ) \cong \widetilde{H}_q ( K \wedge F ; A)$.
\end{Example}

By combining Lemma \ref{202012291530} with our result in \cite{kim2020family}, we obtain the following theorem.

\begin{theorem}
\label{202101042109}
Let $n \in \mathbb{N}$.
Suppose that the LSM $\Xi$ is topological with respect to $(A , \phi)$.
\begin{itemize}
\item
There exists a projective $n$-TQFT $Z$ such that $Z (M^{n-1} ) = V_0 ( \Xi ( M^{n-1} ) ; A , \phi)$ for a closed $(n-1)$-manifold $M^{n-1}$.
\item
If $\Xi$ is the LSM in Example \ref{202007061120}, then there exists an $n$-TQFT $Z$ such that $Z (M^{n-1} ) = V_0 ( \Xi ( M^{n-1} ) ; A , \phi)$ and $Z ( N^n ) = \prod_{r \geq 0} \dim H_{q-r} ( N \times F , N \times \ast ; A )^{(-1)^r}$ for a closed $n$-manifold $N^n$.
\end{itemize}
\end{theorem}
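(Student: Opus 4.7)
The plan is to package the construction in two stages: first extract a Brown-functor-valued theory from the topological LSM, then feed it into the extension machinery developed in \cite{kim2020family} to produce the (projective) TQFT. By Lemma \ref{202012291530}, the topological LSM $\Xi$ together with $(A,\phi)$ already yields a symmetric monoidal, $\mathsf{Hopf}^\mathsf{bc}_k$-valued covariant Brown functor $E(K)=H(\Xi(K);A,\phi)$; by Theorem \ref{202006302032} its value on $K$ is naturally isomorphic to the $0$-eigenspace $V_0(\Xi(K);A,\phi)$. So I would identify these two spaces throughout and work with the Brown functor $E$.

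Next, for a closed $(n-1)$-manifold $M$ set $Z(M):=E(M)$ and for an $n$-dimensional cobordism $N:M_0\rightsquigarrow M_1$ define $Z(N)$ as the path-integral $Y(\Lambda_N)=\mu_{E(i_1)}\circ E(i_0)$ from Definition \ref{202101030308}, where $\Lambda_N=(N;M_0,i_0;M_1,i_1)$ is the associated cospan of embeddings. Proposition \ref{202101042121} guarantees that the normalized generator integral $\mu_{E(i_1)}$ exists, since $E(M_1)$ and $E(N)$ lie in $\mathsf{Hopf}^\mathsf{bc,vol}_k$ by Proposition \ref{202101042120}. Homotopy invariance of $Z(N)$ (so that it only depends on the diffeomorphism type of $N$) follows from Lemma \ref{202012291530} applied to the cospan $\Lambda_N$, and the symmetric monoidal axioms transport directly from those of $E$. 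The gluing axiom along a common boundary component reduces, via Proposition \ref{202012280910}, to a composition of cospans, which is satisfied \emph{up to a nonzero scalar} $\lambda\in k^\times$. This already gives the projective $n$-TQFT of the first bullet; one then invokes the construction of \cite{kim2020family}, which formalises exactly this passage from a Brown functor to a projective TQFT, to obtain a coherent functorial $Z$.

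For the second bullet I would specialise to $\Xi(K)=\Xi^F_q(K,\{\ast\};R)$, so that $E(K)\cong\widetilde H_q(K\wedge F;A)$. The key point is to rule out the scalar ambiguity by verifying the epimorphism/monomorphism hypothesis in the second half of Proposition \ref{202012280910}: for a cobordism decomposition $N=N'\cup_M N''$ the relevant map $E(g_2)$ (or dually $E(j_2)$) is induced by inclusion of the outgoing boundary, and for reduced ordinary homology on manifold cospans such an inclusion induces a surjection on the relevant degree (this is the standard half-lives-half-dies style observation for bordisms, combined with the exactness of $(A,\phi)^{(-)}$ from Example \ref{202012251553}). Hence $\lambda=1$ in every gluing, and the projective TQFT promotes to an honest $n$-TQFT.

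Finally, for the partition function of a closed $n$-manifold $N^n$, I would view $N$ as a cospan $(\mathrm{pt}\leftarrow N\to\mathrm{pt})$ so that $Z(N)$ is a scalar in $k$, and compute it using the identity from \cite{kim2019integrals} that relates the path-integral along $E(\mathrm{pt}\to N)$ to the alternating product of $\dim$'s of the homology Hopf algebras of the associated chain complex in $\mathsf{Hopf}^\mathsf{bc,vol}_k$. Unwinding through $E(K)\cong\widetilde H_q(K\wedge F;A)\cong H_q(K\times F,K\times\{\ast\};A)$ and using the long exact sequence shifted by $r$, this alternating product becomes $\prod_{r\geq 0}\dim H_{q-r}(N\times F,N\times\ast;A)^{(-1)^r}$, as claimed. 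The main obstacle I expect is precisely this last step: keeping track of the correct degree shifts and sign conventions so that the alternating product emerging from the volume formulas in \cite{kim2019integrals} matches the stated Euler-characteristic-type expression, together with carefully verifying the epi/mono hypothesis for reduced homology that upgrades projective to honest TQFT.
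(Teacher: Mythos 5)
Your construction of the projective TQFT in the first bullet is essentially the paper's own proof: take the Brown functor $E$ of Lemma \ref{202012291530}, set $Z(M^{n-1})=E(M^{n-1})$ and $Z(N^n)=\mu_{E(f_1)}\circ E(f_0)$ for the cospan associated to a cobordism, and invoke Proposition \ref{202012280910} to get compositionality up to a nonzero scalar. That part is fine.

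The second bullet, however, contains a genuine gap. You propose to kill the scalar $\lambda$ by verifying the epimorphism/monomorphism hypothesis of Proposition \ref{202012280910}, claiming that for reduced ordinary homology the inclusion of the outgoing boundary induces a surjection in the relevant degree. This is false in general, and the ``half-lives-half-dies'' principle you cite actually points the other way: it says the image of the boundary's homology inside the cobordism has only half the rank. Concretely, take $n=3$, $q=1$, $F=S^0$, and a genus-$g$ Heegaard splitting $N^3=H_g\cup_{\Sigma_g}H_g$, viewed as the composite of the cobordisms $\emptyset\rightsquigarrow\Sigma_g$ and $\Sigma_g\rightsquigarrow\emptyset$. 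Then $g_2$ is the inclusion of a point into $H_g$, so $E(g_2):k\to\widetilde H_1(H_g;A)\cong A^{\otimes g}$ is the unit and is not an epimorphism for $g\geq 1$, while $E(j_2):A^{\otimes g}\to \widetilde H_1(N^3;A)$ is not a monomorphism (e.g.\ for $N^3=S^3$ it is $A^{\otimes g}\to k$). So $\lambda\neq 1$ for these gluings, and your argument cannot promote the projective TQFT to an honest one. The paper's route is different: the scalars $\lambda$ assemble into a $2$-cocycle on the cobordism category, and because $E(-)\cong \widetilde H_q((-)\wedge F;A)$ is part of a $\mathsf{Hopf}^\mathsf{bc,vol}_k$-valued homology theory that is \emph{bounded below}, this $2$-cocycle is the coboundary of a canonical $1$-cochain by \cite{kim2020family}; twisting by that $1$-cochain yields the honest $n$-TQFT. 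The closed-manifold formula $Z(N^n)=\prod_{r\geq 0}\dim H_{q-r}(N\times F,N\times\ast;A)^{(-1)^r}$ is precisely the trace of this normalization --- the lower-degree groups $H_{q-r}$ for $r\geq 1$ enter only through the correcting $1$-cochain, so it cannot be recovered from a computation in which every gluing scalar is assumed to be $1$.
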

\begin{proof}
We sketch the proof.
Let $E^\prime$ be the Brown functor in Lemma \ref{202012291530}.
For a cobordism $N^n$ from $M^{n-1}_0$ to $M^{n-1}_1$, define a cospan by $\Phi ( N^n ) = ( N^n ; f_0 , M^{n-1}_0 ; f_1 , M^{n-1}_0)$ 
Let $Z(N^n) = \mu_{E^\prime(f_1)} \circ E^\prime (f_0)$ where $f_0 : M^{n-1}_0 \to N^n,f_1 : M^{n-1}_1 \to N^n$ are inclusions.
Then the first claim basically follows from Proposition \ref{202012280910}.

As explained in Proposition \ref{202012280910}, the composition is preserved up to a scalar in the ground field $k$.
Such scalars assigned to composable cobordisms $\Lambda_0, \Lambda_1$ induce a 2-cocycle of the cobordism category, which is an analogue of the 2-cocycle induced by a projective action of a group.
The 2-cocycle is a coboundary of a 1-cochain if $\Xi$ is the LSM in Example \ref{202007061120} since the eigenspace is isomorphic to the $q$-th homology theory $H_q ( (-) \wedge F ; A)$ which is bounded below.
In fact, for a $\mathsf{Hopf}^\mathsf{bc,vol}_k$-valued homology theory which is bounded below, the 2-cocycle is a coboundary of a canonical 1-cochain \cite{kim2020family}.
The symbol $\mathsf{Hopf}^\mathsf{bc,vol}_k$ is explained in the proof of Proposition \ref{202101042120}.
By using the 1-cochain, we could construct the $n$-TQFT in the statement from the previous projective $n$-TQFT.
\end{proof}

\begin{remark}
There is a homotopy-theoretic analogue of cobordism categories, called the cospan category $\mathsf{Cosp} ( \mathsf{CW}^\mathsf{fin}_\ast )$ of pointed finite CW-spaces \cite{kim2020family}.
We have an improvement of Theorem \ref{202101042109} : there exists a symmetric monoidal projective functor $Z : \mathsf{Cosp} ( \mathsf{CW}^\mathsf{fin}_\ast ) \to \mathsf{Vec}_k$ such that $Z(K) = V_0 ( \Xi ( K ) ; A , \phi)$.
For the LSM in Example \ref{202007061120}, we have an analogous result.
The proof is parallel with the above one.
\end{remark}

In Remark \ref{202101030318}, we give a question whether the definition of topological LSM's is appropriate.
Before we close this section, we give an answer as an application of Lemma \ref{202012291530}.
By the following proposition, our definition is essential to extend the 0-eigenspaces $E(K)$ with the assignment $Y(J(f))$ to a Brown functor ; to consider the path-integral $Y(\Lambda)$ as a homotopy invariant.

\begin{prop}
\label{202101022059}
The following propositions are equivalent with each other.
\begin{itemize}
\item
The LSM $\Xi$ is topological with respect to $(A, \phi)$.
\item
The assignments $E(K)$ and $Y(J(f))$ give a $\mathsf{Hopf}^\mathsf{bc}_k$-valued (covariant) Brown functor.
\item
The assignment $Y (\Lambda)$ is a homotopy invariant of $\Lambda$, i.e. $\Lambda_0 \simeq \Lambda_1$ implies $Y ( \Lambda_0 ) = Y ( \Lambda_1 )$.
\end{itemize}
\end{prop}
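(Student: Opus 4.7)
My plan is to establish the equivalence as a cyclic chain $(1) \Rightarrow (2) \Rightarrow (3) \Rightarrow (1)$. Since $(1) \Rightarrow (2)$ is exactly the content of Lemma \ref{202012291530}, only the remaining two implications require new arguments.

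For $(2) \Rightarrow (3)$, I will transport a homotopy equivalence of cospans to an equality of path-integrals via the Brown functor $E^\prime$ supplied by (2). Suppose $\Lambda_0 = (L; K_0, f_0; K_1, f_1)$ and $\Lambda_1 = (L^\prime; K_0, f^\prime_0; K_1, f^\prime_1)$ are homotopy equivalent via a pointed homotopy equivalence $h : L \to L^\prime$. Because $E^\prime$ is homotopy invariant and extends $E$ on embeddings (both being implicit in (2) in the same sense as in the proof of Lemma \ref{202012291530}), the homotopies $h \circ f_j \simeq f^\prime_j$ produce $E(f^\prime_j) = E^\prime (h) \circ E(f_j)$, and $E^\prime(h)$ is a Hopf isomorphism. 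The composition rule for normalized generator integrals along Hopf homomorphisms from \cite{kim2019integrals}, used in the form $\mu_{E^\prime (h) \circ E(f_1)} = \mu_{E(f_1)} \circ E^\prime(h)^{-1}$, then yields $Y(\Lambda_1) = \mu_{E(f^\prime_1)} \circ E(f^\prime_0) = \mu_{E(f_1)} \circ E^\prime(h)^{-1} \circ E^\prime(h) \circ E(f_0) = Y(\Lambda_0)$.

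For $(3) \Rightarrow (1)$, I will extract each topological axiom by probing $Y$ with a carefully chosen pair of homotopy equivalent cospans. The first axiom follows by taking $\Lambda_t = (K \wedge [0,1]^+; K, i_t; K \wedge [0,1]^+, \mathrm{id})$, whose path-integrals collapse to $Y(\Lambda_t) = E(i_t)$; the identity on $K \wedge [0,1]^+$ witnesses $\Lambda_0 \simeq \Lambda_1$ because the canonical homotopy $(x, s) \mapsto [x, s]$ provides $i_0 \simeq i_1$, so (3) gives $E(i_0) = E(i_1)$. For the second axiom, let $i : K_0 \hookrightarrow K_1$ be an embedding that is a pointed homotopy equivalence with pointed homotopy inverse $h$. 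First, the cospans $\Lambda = (K_1; K_0, i; K_0, i)$ and $\Lambda^\prime = (K_0; K_0, \mathrm{id}; K_0, \mathrm{id})$ are homotopy equivalent via $h$, and (3) yields $\mu_{E(i)} \circ E(i) = \mathrm{id}_{E(K_0)}$. Second, for the composition of $\Lambda_A = (K_1; K_1, \mathrm{id}; K_0, i)$ with $\Lambda_B = (K_1; K_0, i; K_1, \mathrm{id})$, whose total space is the double $D = K_1 \cup_{K_0} K_1$, Proposition \ref{202012280910} applies with scalar $1$ (the right arm of $\Lambda_B$ is the identity), giving $E(i) \circ \mu_{E(i)} = Y(\Lambda_B \circ \Lambda_A)$; the fold map $p : D \to K_1$ (the identity on each copy) strictly satisfies $p \circ j_A = p \circ j_B = \mathrm{id}_{K_1}$ and is a pointed homotopy equivalence, so $\Lambda_B \circ \Lambda_A \simeq (K_1; K_1, \mathrm{id}; K_1, \mathrm{id})$, and (3) forces $E(i) \circ \mu_{E(i)} = \mathrm{id}_{E(K_1)}$; combined with the previous identity, $E(i)$ is a Hopf isomorphism.

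The step I expect to be the main obstacle is the verification that the fold map $p : D \to K_1$ is a pointed homotopy equivalence, which appeals to the gluing lemma for pushouts along the subcomplex cofibration $i$ together with a deformation of the second copy of $K_1$ onto the shared $K_0 \subset D$. Beyond this, every remaining step reduces to a direct unpacking of the definitions of $Y$, $J$, and the composition rule for integrals.
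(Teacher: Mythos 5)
Your proposal is correct, and for most of the equivalence it coincides with the paper's own proof: the implication $(1)\Rightarrow(2)$ is cited to Lemma \ref{202012291530} in both, your $(2)\Rightarrow(3)$ argument (transporting the homotopy equivalence $h$ through the Brown functor and using $\mu_{E'(h)}=E'(h)^{-1}$ together with the composition rule for integrals) is the same computation the paper performs, and your extraction of the first topologicality axiom from the cospans $\Lambda_t=(K\wedge[0,1]^+;K,i_t;K\wedge[0,1]^+,\mathrm{id})$ is identical. Where you genuinely diverge is the second topologicality axiom in $(3)\Rightarrow(1)$. The paper picks a homotopy inverse $g$ of the embedding $f$ (which need not be an embedding), forms the mapping-cylinder cospan $J(g)$, shows $Y(J(g))\circ E(f)=\mathrm{id}$ and $E(f)\circ Y(J(g))=\mathrm{id}$ via homotopy equivalences of cospans, and concludes that $E(f)$ is simultaneously a monomorphism and an epimorphism, hence an isomorphism because $\mathsf{Hopf}^\mathsf{bc}_k$ is abelian. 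You instead stay entirely inside cospans of embeddings: the cospan $(K_1;K_0,i;K_0,i)$ compared with the identity cospan gives $\mu_{E(i)}\circ E(i)=\mathrm{id}$, and the composite whose total space is the double $K_1\cup_{K_0}K_1$, compared with the identity cospan via the fold map, gives $E(i)\circ\mu_{E(i)}=\mathrm{id}$; so $\mu_{E(i)}$ is an explicit two-sided inverse. Your route avoids both the mapping cylinder of a non-embedding and the appeal to the abelian-category characterization of isomorphisms, at the price of having to verify that the fold map $D\to K_1$ is a pointed homotopy equivalence --- which is legitimate by the gluing lemma, since the subcomplex inclusion $i$ is a cofibration and a homotopy equivalence; this is comparable in difficulty to the cylinder manipulations the paper carries out. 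One caveat applies equally to both proofs: in $(2)\Rightarrow(3)$ the identification $E'(g)=E(g)$ for embeddings $g$ (equivalently $Y(J(g))=E(g)$) is used without being re-derived from hypothesis $(2)$ alone; you flag this as ``implicit in (2)'' exactly as the paper does, so your argument is at the same level of rigor on this point.
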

\begin{proof}
By Lemma \ref{202012291530}, the first part implies the second part.

We prove that the second part implies the third part.
We assume that the assignments $E(K)$ and $Y(J(f))$ give a $\mathsf{Hopf}^\mathsf{bc}_k$-valued (covariant) Brown functor $E^\prime$.
Let $\Lambda_0 = (L ; K_0 , f_0 ; K_1 , f_1 )$ and $\Lambda_1 = ( L^\prime ; K_0 , g_0 ; K_1, g_1)$ be cospans of embeddings between pointed finite CW-complexes.
Suppose that $h : L \to L^\prime$ is a pointed homotopy equivalence such that $f_t \circ h \simeq g_t$ for $t = 0, 1$.
Then we have $E^\prime ( h ) \circ E^\prime ( f_t ) = E^\prime ( h \circ f_t ) = E^\prime (g_t)$.
Hence, $Y( \Lambda_1 ) = \mu_{E(g_1)} \circ E(g_0) = \mu_{E^\prime (g_1)} \circ E^\prime(g_0) = \mu_{E^\prime (f_1)} \circ \mu_{E^\prime (h)} \circ E^\prime ( h ) \circ E^\prime ( f_0)$.
Note that $E^\prime (h)$ is an isomorphism since $h$ is a homotopy equivalence and $E^\prime$ preserves homotopy.
Thus, $\mu_{E^\prime (h)} = E^\prime (h)^{-1}$ so that $Y( \Lambda_1 ) = \mu_{E^\prime (f_1)}  \circ E^\prime ( f_0) = Y( \Lambda_0 )$.
It proves that $Y(\Lambda)$ is a homotopy invariant of $\Lambda$.

In the final step, we prove that the third part implies the first part.
Let us assume that $Y(\Lambda)$ is a homotopy invariant of $\Lambda$.
Let $K$ be a pointed finite CW-complex.
Consider a cospan $\Lambda_t = ( K \wedge [0,1]^+ ; K , i_t ; id , K \wedge [0,1]^+ )$ for $t = 0,1$.
Note that $\Lambda_0, \Lambda_1$ are homotopy equivalent with each other.
Hence, $Y(\Lambda_0) = Y( \Lambda_1)$ which implies $E(i_0) = E(i_1)$ by definitions.
It proves that $\Xi$ satisfies the first part of Definition \ref{202012291544}.
We prove the second part of Definition \ref{202012291544}.
Let $f : K \hookrightarrow L$ be an embedding which is a homotopy equivalence.
Choose a homotopy inverse $g$ of $f$ (which needs not be an embedding).
We have $Y( J(g) ) \circ Y ( \iota (f)) = Y( J(g) \circ \iota (f))$ by Proposition \ref{202012280910}.
Note that $\Lambda \circ \iota (f)$ is homotopy equivalent with the identity cospan $\iota (id_K)$ of $K$ since $g \circ f \simeq id_K$.
Hence, $Y( J(g) \circ \iota (f)) = Y( \iota (id_K)) = id_{E(K)}$ so that $Y( J(g) ) \circ E(f) = Y( J(g)) \circ Y( \iota (f)) = id_{E(K)}$.
It implies that $E(f)$ is a monomorphism in the category $\mathsf{Hopf}^\mathsf{bc}_k$.
Analogously, $f \circ g \simeq id_L$ implies $E(f) \circ Y(J(g)) = id_{E(L)}$ so that $E(f)$ is an epimorphism $\mathsf{Hopf}^\mathsf{bc}_k$.
Thus, $E(f)$ is an isomorphism since $\mathsf{Hopf}^\mathsf{bc}_k$ is an abelian category.
It completes the proof.
\end{proof}

%%%%%%%%%%%%%%%%%%%
\section{Poincar\'e-Lefschetz duality of local stabilizers}
\label{201907220234}

In this section, we propose a Poincar\'e-Lefschetz duality of the $(\pm)$-stabilizers in framework.
Let $M$ be a polyhedral complex.
For a subcomplex $K$ of $M$, denote by $K^\star$ the complex consisting of dual cells in the supplement of $K$ in $M$ (see Definition \ref{202007062100}).
In general, $K^\star$ is a regular cell-complex with oriented homology cells, which may not be a polyhedral complex.
Note that the notations and results in Example \ref{202012261747}, \ref{202012272320} formally extends to regular cell-complexes with oriented homology cells.

\begin{theorem}[Poincar\'e-Lefschetz Duality]
\label{201907200015}
Let $R$ be a commutative unital ring.
Let $M$ be a closed $R$-oriented $m$-manifold $M$ with a polyhedral complex structure.
Let $L \subset K$ be its polyhedral subcomplexes.
Then we have an isomorphism of short abstract complexes over $R$ in the sense of Definition \ref{202007080920},
\begin{align}
\label{201907260947}
\Xi_q ( L^\star ,  K^\star ; R) \cong
\Xi_{m-q} ( L , K ; R )^T   . 
\end{align}
\end{theorem}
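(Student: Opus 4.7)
The plan is to reduce Theorem \ref{201907200015} to the classical Poincar\'e--Lefschetz duality at the level of cellular chain complexes, which is summarized in the appendix \ref{201907211845}, and then to repackage it in the language of short abstract complexes introduced in section \ref{202007061028}.

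First, I would unpack both sides explicitly in terms of cells. By Example \ref{202012261747} and Definition \ref{202012251514}, the short abstract complex on the right-hand side
\[
\Xi_{m-q}(L,K;R)^T
\]
has $+$-cells the $(m-q-1)$-cells of $(K,L)$, $\circ$-cells the $(m-q)$-cells of $(K,L)$, $-$-cells the $(m-q+1)$-cells of $(K,L)$, with incidence numbers obtained from the cellular boundary $\partial^{\mathrm{cell}}$ of $(K,L)$ but with the roles of $\partial_+$ and $\partial_-$ swapped. On the left-hand side, $\Xi_q(L^\star, K^\star;R)$ has $+$-cells the $(q+1)$-cells of $(L^\star, K^\star)$, $\circ$-cells its $q$-cells and $-$-cells its $(q-1)$-cells, with incidence numbers coming from the cellular boundary of $(L^\star,K^\star)$.

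Next I would set up the cell bijection. The key point from the appendix is that, for a polyhedral subcomplex of a closed $R$-oriented $m$-manifold, the dualization $c \mapsto c^\star$ gives a bijection between $p$-cells of $(K,L)$ and $(m-p)$-cells of $(L^\star, K^\star)$. Applied at $p=m-q, m-q\pm 1$, this matches exactly the underlying sets of cells of both sides:
\begin{align*}
\{(m-q)\text{-cells of }(K,L)\} &\longleftrightarrow \{q\text{-cells of }(L^\star,K^\star)\}, \\
\{(m-q-1)\text{-cells of }(K,L)\} &\longleftrightarrow \{(q+1)\text{-cells of }(L^\star,K^\star)\}, \\
\{(m-q+1)\text{-cells of }(K,L)\} &\longleftrightarrow \{(q-1)\text{-cells of }(L^\star,K^\star)\}.
\end{align*}
Under the $+/-$ swap built into transposition, these are precisely the $+/\circ/-$ bijections required.

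The heart of the proof is then the matching of incidence numbers. The classical Poincar\'e--Lefschetz duality (in the chain-level form recalled in appendix \ref{201907211845}) gives, for a chosen $R$-orientation of $M$, an isomorphism of chain complexes between $C_\bullet^{\mathrm{cell}}(L^\star, K^\star; R)$ and the shifted cellular cochain complex $C^{m-\bullet}_{\mathrm{cell}}(K, L; R)$ which, on basis elements, sends $c^\star$ to the dual functional $\hat c$ and under which the cellular boundary $\partial^{\mathrm{cell}}$ on the dual complex corresponds to the cellular coboundary $\delta^{\mathrm{cell}}$ on $(K,L)$. Taking matrix entries of this chain isomorphism in the distinguished bases gives the identity of incidence numbers
\[
[c^\star_{q+1} : c^\star_q]_{\mathrm{cell}} = [c_{m-q} : c_{m-q-1}]_{\mathrm{cell}}, \qquad [c^\star_{q} : c^\star_{q-1}]_{\mathrm{cell}} = [c_{m-q+1} : c_{m-q}]_{\mathrm{cell}},
\]
which is exactly the compatibility needed for the bijections above to assemble into an isomorphism of short abstract complexes in the sense of Definition \ref{202007080920}. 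Finally I would check the closed conditions of that definition: but since the bijection on cells is a bijection on each of $X_+, X_\circ, X_-$ (i.e.\ the map of triples of sets is surjective as well as injective), the closed conditions are automatic, and what we obtain is indeed an isomorphism in $\mathsf{SAC}^{\mathsf{inc}}_R$.

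The main obstacle I expect is the careful sign bookkeeping in the dual incidence numbers: the identification of $C^{\mathrm{cell}}_\bullet(L^\star, K^\star;R)$ with a shift of $C_{\mathrm{cell}}^\bullet(K,L;R)$ hides the global sign coming from the $R$-orientation (and possibly a sign $(-1)^{q(m-q)}$ from the shift). All of this is packaged in the appendix, but to conclude the isomorphism on the nose (and not merely up to an automorphism twisting basis elements by units) one must verify that these signs are already absorbed into the dualization $c \mapsto c^\star$ as defined in Definition \ref{202007062100}; if not, the argument still goes through after renormalizing the distinguished basis of $\Xi_q(L^\star, K^\star;R)$ by the appropriate units in $R$, which preserves isomorphism type in $\mathsf{SAC}^{\mathsf{inc}}_R$.
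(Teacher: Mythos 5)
Your proposal is correct and follows essentially the same route as the paper: the paper's proof is a one-line reduction to Corollary \ref{201907211913}, the chain-level Poincar\'e--Lefschetz duality $C^\bullet_{cell}(L,K) \cong C^{cell}_{m-\bullet}(L^\star,K^\star)$ established in the appendix, which is basis-preserving ($c_q \mapsto c^\vee_q$) and hence gives the isomorphism of short abstract complexes after identifying $\Xi_{m-q}(L,K;R)^T$ with the cochain-side complex $\Xi^{m-q}(L,K;R)$. Your worry about signs is resolved exactly as you anticipate: the sign $(-1)^{q(m+1)}$ is absorbed into the orientation of the dual cell via $\kappa^q$ (Definition \ref{201907260316}, Lemma \ref{201907212211}), so the incidence numbers match on the nose and your fallback renormalization is not needed.
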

\begin{proof}
The isomorphism follows from the Poincar\'e-Lefschetz Duality of cellular (co)chain complexes in Corollary \ref{201907211913}.
\end{proof}

\begin{Corollary}
\label{201907221136}
Recall the assumptions in Theorem \ref{201907200015}.
Let $X = \Xi_{m-q} ( L , K ; R)$ and $Y = \Xi_q ( L^\star , K^\star ; R)$.
\begin{enumerate}
\item
For an $(m-q-1)$-cell $c_-$ and its dual cell $(q+1)$-cell $c^\vee_-$, we have
\begin{align}\notag
\mathds{S}^+ (Y,  c^\vee_- ; A , \phi) &\cong \mathds{S}^- (X, c_- ; A^\vee , \phi^\vee )^\vee .
\end{align}
\item
For an $(m-q+1)$-cell $c_+$ and its dual cell $(q-1)$-cell $c^\vee_+$, we have
\begin{align}\notag
\mathds{S}^- ( Y, c^\vee_+ ; A , \phi ) &\cong \mathds{S}^+ ( X , c_+ ; A^\vee , \phi^\vee )^\vee .
\end{align}
\item
We have
\begin{align}\notag
\mathds{H} ( Y ; A , \phi ) &\cong \mathds{H} ( X ; A^\vee , \phi^\vee )^\vee .
\end{align}
\end{enumerate}
\end{Corollary}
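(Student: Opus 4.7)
The plan is to obtain all three duality statements by composing two facts already established in the paper: the isomorphism of short abstract complexes $\Xi_q ( L^\star , K^\star ; R ) \cong \Xi_{m-q} ( L , K ; R )^T$ supplied by Theorem \ref{201907200015}, and the transposition duality of the $(\pm)$-stabilizers and the elementary operator given by Proposition \ref{202006302146}. Writing $X = \Xi_{m-q} ( L , K ; R )$ and $Y = \Xi_q ( L^\star , K^\star ; R )$, the first step is simply to record that $Y \cong X^T$ as short abstract complexes over $R$.

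Next I would trace through what this isomorphism does on individual cells. By Definition \ref{202012251514} the $+$-cells of $X^T$ are the formal relabelings $x^T_-$ of the $-$-cells of $X$, and the $-$-cells of $X^T$ are the relabelings $x^T_+$ of the $+$-cells of $X$. On the other hand, the Poincar\'e--Lefschetz isomorphism underlying Corollary \ref{201907211913} matches a $(q+1)$-cell $c^\vee_-$ of $Y$ with an $(m-q-1)$-cell $c_-$ of $X$, and a $(q-1)$-cell $c^\vee_+$ of $Y$ with an $(m-q+1)$-cell $c_+$ of $X$. These bijections are exactly the transposition identifications $Y_+ = X^T_+ = X_-$ and $Y_- = X^T_- = X_+$, so under the isomorphism of Theorem \ref{201907200015} the cells $c^\vee_-$ and $c^\vee_+$ correspond to $x^T_-$ and $x^T_+$ respectively.

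With these cellular identifications in hand, the first two statements are immediate specializations of Proposition \ref{202006302146}:
\begin{align}\notag
\mathds{S}^+ (Y, c^\vee_- ; A , \phi) &\cong \mathds{S}^+ ( X^T , x^T_- ; A , \phi ) = \mathds{S}^- ( X , c_- ; A^\vee , \phi^\vee )^\vee , \\ \notag
\mathds{S}^- (Y, c^\vee_+ ; A , \phi) &\cong \mathds{S}^- ( X^T , x^T_+ ; A , \phi ) = \mathds{S}^+ ( X , c_+ ; A^\vee , \phi^\vee )^\vee .
\end{align}
The third equation on the elementary operator then follows either by summing the two preceding identities over all $+$-cells and $-$-cells using the defining formula of $\mathds{H}$, or more directly by applying the last sentence of Proposition \ref{202006302146} to $X^T \cong Y$.

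The proof is therefore essentially a bookkeeping reconciliation between two dualities---the geometric Poincar\'e--Lefschetz duality of subcomplexes in an $R$-oriented manifold and the algebraic transposition duality of stabilizers---so no genuine technical obstacle arises. The only point demanding care is verifying that the cell-level bijection underlying Theorem \ref{201907200015} really does send $c^\vee_\pm$ to the formal relabeling of $c_\pm$, with matching incidence numbers; but this is precisely the content of Corollary \ref{201907211913} and is therefore tautological once that corollary is in hand.
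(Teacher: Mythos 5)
Your proposal is correct and follows exactly the paper's own (one-line) argument: the paper likewise derives the corollary by combining the isomorphism $Y \cong X^T$ from Theorem \ref{201907200015} with the transposition duality of Proposition \ref{202006302146}. Your additional bookkeeping on the cell-level correspondence is a sound elaboration of what the paper leaves implicit.
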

\begin{proof}
It follows from Proposition \ref{202006302146} and Theorem \ref{201907200015}.
\end{proof}

\begin{Example}
We reformulate the duality in \cite{BraKit}.
Let $K$ be a polyhedral surface in a closed $\mathbb{Z}$-oriented 2-manifold.
If we denote by the dual complex by $K^\vee$, then we have an excision isomorphism $\Xi^1 ( K^\vee , (\partial_z K)^\vee ; \mathbb{Z} ) \cong \Xi^1 ( (\partial_x K)^\star , K^\star ; \mathbb{Z} )$.
Here, we denote by $\partial_x, \partial_z$ the $x$-boundary and $z$-boundary in their context.
By Theorem \ref{201907200015}, we obtain an isomorphism of short abstract complexes $\Xi^1 ( K , \partial_x K ; \mathbb{Z} )^T \cong  \Xi^1 ( K^\vee , (\partial_z K)^\vee ; \mathbb{Z} )$.
Under the isomorphism, the $(\pm)$-stabilizers correspond to each other based on Corollary \ref{201907221136}.
\end{Example}

\begin{Example}
Consider a $\mathbb{Z} /2$-action $\phi$ on $A$.
Here, we deal with $\mathbb{Z}/2$ as a ring.
Note that any manifold is $\mathbb{Z}/2$-oriented.
It implies that the duality of the $(\pm)$-stabilizers and the elementary operator holds for any manifold.
\end{Example}

\begin{Example}
The $\mathbb{Z}$-orientability is equivalent with the usual orientability of manifolds.
Recall that any bicommutative Hopf algebra $A$ has a canonical $\mathbb{Z}$-action.
We obtain the duality of the $(\pm)$-stabilizers and the elementary operator with respect to oriented manifolds for any finite-dimensional bisemisimple bicommutative Hopf algebra $A$.
\end{Example}

\vspace{0.5cm}

%%%%%%%%%%%%%%%%%%%%%%%%%%%%%%%%%%%%

\begin{appendices}

%%%%%%%%%%%%%%%%%%%%%%%%%%%%%%%%

\section{Poincar\'e-Lefschetz duality}
\label{201907211845}

\begin{Defn}
Let $T$ be a topological space and $T^{(\bullet)}$ be a filtration of $T$, i.e. a sequence of subspaces,
\begin{align}\notag
\cdots
\subset
T^{(-1)}
\subset
T^{(0)}
\subset
T^{(1)}
\subset
\cdots
\subset
T^{(q)}
\subset
\cdots 
\subset
T
.
\end{align}
We do not assume $\bigcup_{q} T^{(q)} =T$ or $\bigcap_{q} T^{(q)} = \emptyset$.

We define a {\it chain complex associated with a filtration $T^{(\bullet)}$} denoted by
\begin{align}\notag
C^{filt}_q ( T^{(\bullet)} ) \stackrel{\mathrm{def.}}{=} H_q ( T^{(q)}, T^{(q-1)}; R ) 
\end{align}
, the $q$-th singular homology theory of a pair of topological spaces $(T^{(q)} , T^{(q-1)} )$ with coefficients in $R$.
The boundary homomorphism $\partial^{filt}_q :C^{filt}_q (T^{(\bullet)} ) \to C^{filt}_{q-1} (T^{(\bullet)} )$ is defined by a composition of 
\begin{align}\notag
H_q (T^{(q)}, T^{(q-1)} ; R ) \stackrel{\partial_q}{\to} H_{q-1} (T^{(q-1)} ; R )  \to H_{q-1} ( T^{(q-1)} , T^{(q-2)} ; R ) . 
\end{align}
Here, the latter homomorphism is induced by the inclusion $(T^{(q-1)}, \emptyset) \to (T^{(q-1)} , T^{(q-2)})$.
Note that $\partial^{filt}_{q-1} \circ \partial^{filt}_q = 0$ \cite{massey}.
We define a chain complex $C^{filt}_\bullet (T^{(\bullet)} ) \stackrel{\mathrm{def.}}{=} \left( C^{filt}_q (T^{(\bullet)} ) , \partial^{filt}_q \right)_{q\in \mathbb{Z}}$.
We denote by $H^{filt}_\bullet ( T^{(\bullet)} )$ the homology theory associated with the chain complex $C^{filt}_\bullet (T^{(\bullet)} )$.
\end{Defn}

\begin{prop}
\label{201907171109}
Suppose that $H_p ( T^{(q)}, T^{(q-1)} ; R ) \cong 0$ if $p \neq q$.
Then we have a natural isomorphism between the homology theory associated with the filtration and the singular homology theory, $H^{filt}_q ( T^{(\bullet)} ) \cong \varinjlim_{k} H_q ( T^{(k)} ; R )$.
\end{prop}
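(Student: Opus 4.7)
The plan is to carry out the standard cellular-homology style argument in this abstract filtered setting. Concretely, I will show that the inclusions $T^{(k)} \hookrightarrow T^{(k+1)}$ eventually become isomorphisms on $H_q$ once $k$ is large enough, and then separately identify the stabilized group with $H^{filt}_q(T^{(\bullet)})$ by a direct diagram chase through two long exact sequences of pairs. Throughout I will implicitly assume that the filtration is bounded below (the relevant case, e.g.\ $T^{(-1)} = \emptyset$ in the cellular example), so that vanishing statements of the form $H_p(T^{(k)}; R) = 0$ have a base for the induction.

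First, by induction on $k$ using the long exact sequence of $(T^{(k)}, T^{(k-1)})$ and the hypothesis $H_p(T^{(k)}, T^{(k-1)}; R) = 0$ for $p \neq k$, I would prove that $H_p(T^{(k)}; R) = 0$ whenever $p > k$. Inserting this into the long exact sequence of the pair $(T^{(k+1)}, T^{(k)})$ shows that the inclusion $T^{(k)} \hookrightarrow T^{(k+1)}$ induces an isomorphism on $H_q$ whenever $k \geq q+1$, because both $H_{q+1}(T^{(k+1)}, T^{(k)}; R)$ and $H_q(T^{(k+1)}, T^{(k)}; R)$ vanish in that range by hypothesis. Consequently $\varinjlim_k H_q(T^{(k)}; R) \cong H_q(T^{(q+1)}; R)$, and naturality in the filtration is automatic from naturality of the long exact sequence of a pair.

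Next I would identify $H_q(T^{(q+1)}; R)$ with $H^{filt}_q(T^{(\bullet)})$. From the long exact sequence of $(T^{(q)}, T^{(q-1)})$, using $H_q(T^{(q-1)}; R) = 0$ obtained above, I get an injection $H_q(T^{(q)}; R) \hookrightarrow C^{filt}_q(T^{(\bullet)}) = H_q(T^{(q)}, T^{(q-1)}; R)$ whose image is exactly the kernel of the connecting map $\partial_{*}\colon H_q(T^{(q)}, T^{(q-1)}) \to H_{q-1}(T^{(q-1)})$. Using $H_{q-1}(T^{(q-2)}; R) = 0$, the subsequent inclusion-induced map $H_{q-1}(T^{(q-1)}) \hookrightarrow H_{q-1}(T^{(q-1)}, T^{(q-2)})$ is injective, so by the very definition of $\partial^{filt}_q$ we have $\ker \partial_{*} = \ker \partial^{filt}_q$. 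Hence $H_q(T^{(q)}; R)$ is canonically identified with the cycles of the filtration chain complex. Finally, the long exact sequence of $(T^{(q+1)}, T^{(q)})$ together with $H_q(T^{(q+1)}, T^{(q)}; R) = 0$ identifies $H_q(T^{(q+1)}; R)$ with $H_q(T^{(q)}; R)$ modulo the image of the connecting map $H_{q+1}(T^{(q+1)}, T^{(q)}) \to H_q(T^{(q)})$, and unwinding the definition shows this image corresponds precisely to the boundaries $\partial^{filt}_{q+1}(C^{filt}_{q+1})$. Combining these identifications yields $H_q(T^{(q+1)}; R) \cong H^{filt}_q(T^{(\bullet)})$.

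I do not anticipate any deep obstacle: the whole argument is a bookkeeping exercise on the two-term ladder of long exact sequences, and the hypothesis is strong enough to make every auxiliary group vanish exactly where needed. The one delicate point worth flagging is the behavior at the bottom of the filtration; if the filtration is not bounded below one must interpret the direct limit and the vanishing statements more carefully, but in the applications in the paper (finite-dimensional cellular skeleta) this is automatic.
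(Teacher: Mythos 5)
Your proof is correct and follows essentially the same route as the paper's own sketch: stabilization of $H_q(T^{(k)};R)$ for $k\geq q+1$, identification of the cycles of $C^{filt}_\bullet$ with $H_q(T^{(q)};R)$, and of the boundaries with the image of the connecting map from $H_{q+1}(T^{(q+1)},T^{(q)};R)$. You are in fact more careful than the paper in making explicit the induction giving $H_p(T^{(k)};R)=0$ for $p>k$ and in flagging that this requires the filtration to be bounded below, a hypothesis the paper's statement does not record but its proof implicitly uses.
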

\begin{proof}
We sketch the proof.
By assumption, the induced morphism $H_q ( T^{(q+1)} ; R ) \to H_q ( T^{(p)} ; R )$ is an isomorphism for $p > q$.
In particular, we have $\varinjlim_{k} H_q ( T^{(k)} ; R ) \cong H_q (T^{(q+1)} ; R)$.
By the assumption, $H_q (T^{(q-1)} ; R ) \cong 0$.
Hence, the kernel of $\partial^{cell}_q$ is isomorphic to the kernel of $H_q (T^{(q)}, T^{(q-1)} ; R ) \to H_{q-1} ( T^{(q-1)} ; R )$ which coincides with the image of $H_q (T^{(q)} ; R ) \to H_q (T^{(q)}, T^{(q-1)} ; R )$ by a long exact sequence.
The image is isomorphic to $H_q (T^{(q)} ; R )$ since $H_{q-1} ( T^{(q+1)} ; R ) \cong 0$ by the assumption.
As a result, the kernel of $\partial^{cell}_q$ is isomorphic to $H_q (T^{(q)} ; R )$.
Hence, the $q$-th cellular homology is given by a cokernel of $H_{q+1} (T^{(q+1)} , T^{(q)} ; R ) \to H_q (T^{(q)} ; R)$ which is isomorphic to $H_q (T^{(q+1)} ; R )$ since $H_q (T^{(q+2)}; R ) \cong 0$ by the assumption.
\end{proof}

We extend the notion of {\it regular CW-complex}, its cellular chain complex and homology as follows.
They are necessary to describe a Poincar\'e-Lefschetz duality on chain complex level.

\begin{Defn}
A topological space $X$ is a {\it homology $k$-sphere} if there exists an isomorphism $H_\bullet ( X ; \mathbb{Z}) \cong H_\bullet (S^k ; \mathbb{Z})$.
Note that by the long exact sequence, we have $H_\bullet ( CX , X ; \mathbb{Z}) \cong H_\bullet ( D^{k+1} , S^k ; \mathbb{Z})$ where $CX$ is the cone of $X$.
\end{Defn}

\begin{Defn}
\label{201907251333}
A {\it regular homology q-cell} on a topological space $X$ is given by a pair $(c, \varphi)$ such that 
\begin{itemize}
\item
$c \subset X$ is a subspace.
Denote by $\bar{c}$ the closure and $\partial c$ the boundary.
\item
The boundary of $\partial c$ is a homology $(q-1)$-sphere.
\item
$\varphi$ is a homeomorphism of pairs $( \bar{c} , \partial c ) \cong ( C \partial c_{q}, \partial c_{q})$ preserving $\partial c_{q} $.
\end{itemize}
For simplicity, we often omit a regular homology $q$-cell $(c, \varphi)$ to $c$.
\end{Defn}

\begin{Defn}
An {\it orientation} on a regular homology $q$-cell is defined by an isomorphism $H_q ( \bar{c} , \partial c ; \mathbb{Z} ) \cong \mathbb{Z}$.
An {\it oriented regular homology $q$-cell} is a regular homology $q$-cell equipped with an orientation.
\end{Defn}

\begin{Example}
A regular $q$-cell of a regular CW-complex gives a regular homology $q$-cell.
\end{Example}

\begin{Defn}
\label{20190722125}
A {\it regular complex structure $K$ with homology cells} on $X$ is given by a family of homology cells $\{ c_{q,j} \}_{q,j}$ in $X$ such that
\begin{itemize}
\item
$c_{q,j}$ is a regular homology $q$-cell for $j = 1, 2, \cdots , n_q$.
\item
$c_{q,j} \bigcap c_{q^\prime,j^\prime} = \emptyset$
\item
$\coprod_{q,j} \overline{c}_{q,j} \to X$ is a quotient map.
\item 
If we denote by $K^{(q)}$ the $q$-skeleton, i.e. $K^{(q)}$ is a union of homology $r$-cells for $r \leq q$, then $\partial c_{q,j} \subset K^{(q-1)}$.
\end{itemize}

A {\it regular complex with oriented homology cells} is a regular complex with homology cells whose homology cells are oriented.

For a regular complex structure $K$ with homology cells on $X$, $L$ is a {\it subcomplex} if $L$ is closed set in $X$ and $L$ is the union of a set of homology cells in $K$.
Then the homology cells consisting of $L$ gives a regular complex structure with homology cells.
\end{Defn}

\begin{Example}
A regular CW-complex structure gives a regular complex structure with homology cells.
\end{Example}

Fix a closed $R$-oriented $m$-dimensional manifold $M$ which is triangulable.
Recall the Poincar\'e-Lefschetz duality :
We have an isomorphism
\begin{align}\notag
[M] \cap : H^q ( M ; R) \to H_{m-q} ( M ; R ) . 
\end{align}
Note that the isomorphism is induced by the cap product with the fundamental class of $M$, $[M] \in H_m (M ; R)$.

\begin{Defn}
\label{201907251321}
Let $\Delta$ be a polyhedral complex structure of $M$ and $\Delta^\prime$ be the first barycentric subdivision.
Let $K$ be a subcomplex of $\Delta$.
Then the {\it supplement of $K$ in $\Delta$}, denoted by $K^\ast$, is defined as a subcomplex of $\Delta^\prime$, which consists of all simplcies none of whose vertices are in $K^\prime$.

Consider the skeleton filtration of $K$,
\begin{align}\notag
K^{(\bullet)} : 
\emptyset = 
\cdots
=
K^{(-1)}
\subset
K^{(0)}
\subset
K^{(1)}
\subset
\cdots
\subset
K^{(q)}
\subset
\cdots 
\subset
K^{(m)}
= 
\cdots
=
K . 
\end{align}
\label{201907251337}
Define a filtration $T^{(\bullet)} (K)$ of $\Delta^\prime$ by $T^{(q)} (K)\stackrel{\mathrm{def.}}{=} \left( K^{(n-q-1)} \right)^\ast$ where
\begin{align}\notag 
T^{(\bullet)}(K) : 
K^\ast
=
\cdots
=
\left( K^{(m)} \right)^\ast 
\subset
\cdots
\subset
\left( K^{(0)} \right)^\ast 
\subset
\left( K^{(-1)} \right)^\ast 
=
\cdots
=
\Delta^\prime
\end{align}
\end{Defn}

We have a refinement of the Poincar\'e-Lefschetz duality (\cite{whitehead} Theorem 7.4),
\begin{align}
\label{201907260248}
H^q ( L , K ; R) \cong H_{m-q} ( L^\ast , K^\ast ; R ) . 
\end{align}

Since the $q$-skeleton $K^{(q)}$ is also a subcomplex of $\Delta$, we have,
\begin{align}
\label{201907202107}
H^q ( K^{(q)} , K^{(q-1)} ; R ) \cong H_{m-q} ( \left( K^{(q-1)} \right)^\ast , \left( K^{(q)} \right)^\ast ; R ) . 
\end{align}

Recall that the left hand side $H^q ( K^{(q)} , K^{(q-1)} ;R)$ is naturally isomorphic to the $q$-th component of cellular cochain complex $C^q_{cell} ( K )$.
The right hand side is the filtration chain complex of $T^\bullet (K)$ in Definition \ref{201907251321}, in particular we obtain an isomorphism,
\begin{align}
\label{201907212246}
\varrho^q : C^q_{cell} ( K ) \to C^{filt}_{m-q} ( T^{(\bullet)} (K) )  . 
\end{align}

\begin{prop}
\label{201907251507}
The sequence of isomorphisms $\tilde{\varrho} = \left( (-1)^{q(m+1)} \varrho^q \right)_{q \in \mathbb{Z}}$ gives a chain isomorphism 
\begin{align}
\label{201907212310}
\tilde{\varrho} : C^\bullet_{cell} (K ) \to C^{filt}_{m-\bullet} ( T^{(\bullet)} (K) ) .
\end{align}
The isomorphism is natural for inclusions :
Let $L \subset K$ be a subcomplex.
Then the following diagram commutes,
\begin{equation}\notag
\begin{tikzcd}
C_{cell}^\bullet ( K  )
\ar[r]
\ar[d, "\tilde{\varrho} "]
&
C_{cell}^\bullet ( L )
\ar[d, "\tilde{\varrho} "]
\\
C^{filt}_{m - \bullet} (T^{(\bullet)} (K)  )
\ar[r]
&
C^{filt}_{m-\bullet} ( T^{(\bullet)} (L) ) 
\end{tikzcd}
\end{equation}
\end{prop}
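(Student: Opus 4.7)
The proof I propose is to verify two things: that $\tilde{\varrho}$ commutes with the differentials, and that the naturality square for an inclusion $L \subset K$ commutes. For the chain-map property, recall that $\varrho^q$ is the Poincar\'e-Lefschetz isomorphism (\ref{201907260248}) applied to the pair $(K^{(q)}, K^{(q-1)})$, and that this isomorphism is realized by cap product with the fundamental class $[M] \in H_m(M;R)$, followed by the deformation retraction identifying $M \setminus K^{(q-1)}$ (resp. $M \setminus K^{(q)}$) with $(K^{(q-1)})^\ast$ (resp. $(K^{(q)})^\ast$). Both the cellular coboundary $\delta^q : H^q(K^{(q)}, K^{(q-1)} ; R) \to H^{q+1}(K^{(q+1)}, K^{(q)}; R)$ and the filtration boundary $\partial^{filt}_{m-q}$ factor, by definition, through a connecting homomorphism of a triple composed with a map induced by an inclusion of subspaces.

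The strategy is to compare these two decompositions term by term, using the naturality of the cap product along inclusion-induced maps, together with the standard Leibniz identity $\partial([M] \cap \alpha) = (-1)^m\, [M] \cap \delta\alpha + \partial[M] \cap \alpha$. Since $[M]$ is a cycle, this reduces to $\partial([M] \cap \alpha) = (-1)^m\, [M] \cap \delta\alpha$. Thus each time we commute the duality past a connecting homomorphism we pick up a sign $(-1)^m$; in addition, the connecting homomorphism of the triple $(K^{(q+1)}, K^{(q)}, K^{(q-1)})$ carries the standard sign $(-1)$ (or $(-1)^q$, depending on convention) relative to that of the pair. The composite discrepancy per degree is $(-1)^{m+1}$, and the prefactor $(-1)^{q(m+1)}$ in the definition of $\tilde{\varrho}$ is chosen precisely so that $(-1)^{(q+1)(m+1)} = (-1)^{m+1} \cdot (-1)^{q(m+1)}$, absorbing the sign and rendering $\tilde{\varrho}$ an honest chain map.

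For naturality with respect to inclusions, the key elementary observation is that the supplement construction is order-reversing: if $L \subset K$ then every simplex of $\Delta^\prime$ whose vertices avoid $K^\prime$ automatically avoids $L^\prime$, so $K^\ast \subset L^\ast$ and hence $T^{(q)}(K) \subset T^{(q)}(L)$ for every $q$. This containment of filtrations induces the chain map $C^{filt}_{m-\bullet}(T^{(\bullet)}(K)) \to C^{filt}_{m-\bullet}(T^{(\bullet)}(L))$ in the bottom row of the square, while the cellular-cochain restriction in the top row is induced by $L^{(q)} \subset K^{(q)}$. Commutativity of the square is then the naturality of the Poincar\'e-Lefschetz isomorphism itself, which reduces to the fact that cap product with the single ambient class $[M]$ is natural with respect to inclusions of pairs, and the sign prefactor $(-1)^{q(m+1)}$ is applied in both rows in the same degree so does not affect naturality.

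The main obstacle will be bookkeeping of signs. The factor $(-1)^{m+1}$ appears in different guises depending on conventions for the cap product, for the connecting homomorphism of a triple, and for the direction in which the Poincar\'e-Lefschetz isomorphism is written; ensuring that these conventions line up so that exactly $(-1)^{q(m+1)}$ (rather than something off by an overall sign or by a factor of $(-1)^q$) is the correct normalization requires careful tracking. Beyond the signs, the argument is essentially a diagram chase built from the naturality of cap product and its graded Leibniz rule with connecting homomorphisms.
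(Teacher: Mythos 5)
Your proposal is a correct high-level strategy, but it is genuinely different from what the paper does: the paper does not prove this proposition at all, it simply cites Whitehead (Theorem 6.31 for the chain-map claim, Corollary 6.33 for the naturality square), whereas you are sketching the proof of those cited results from scratch. Your treatment of naturality is complete and correct as stated: the supplement construction reverses inclusions, so $L\subset K$ gives $T^{(q)}(K)\subset T^{(q)}(L)$, the bottom row is the induced map of filtration chain complexes, and commutativity reduces to naturality of the cap product with the single class $[M]$. The soft spot is precisely where the proposition has all of its content, namely the verification that the per-degree sign discrepancy between $\delta^q$ and $\partial^{filt}_{m-q}$ is the \emph{constant} $(-1)^{m+1}$, so that it is absorbed by the exponent $q(m+1)$. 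You derive this as $(-1)^m$ from the Leibniz rule times $(-1)$ from the connecting homomorphism of the triple, but under the most common convention the identity reads $\partial(z\cap\varphi)=(-1)^{\deg\varphi}\left(\partial z\cap\varphi - z\cap\delta\varphi\right)$, whose sign is $(-1)^q$ and hence $q$-dependent; with that convention the discrepancy per degree would not be uniform in $q$ and the factor $(-1)^{q(m+1)}$ would not absorb it. So the assertion that the conventions ``line up'' cannot be left as a remark: you must either fix the cap-product and connecting-homomorphism conventions of the source and carry the sign computation through explicitly, or retreat to the citation as the paper does. Granting that, the trade-off is clear: the citation is short and safe, while your argument explains where the exponent $q(m+1)$ comes from and would adapt to other filtrations.
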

\begin{proof}
The first claim follows from Theorem 6.31 \cite{whitehead}.
The second claim is due to Corollary 6.33 \cite{whitehead}.
\end{proof}

We introduce a notion of {\it dual cell} to deal with (\ref{201907212310}) as a cellular chain complex in the extended sense.

\begin{Defn}
Recall $\Delta$ and $\Delta^\prime$ in Definition \ref{201907251321}.
Let $c_q$ be a $q$-cell of $\Delta$.
We define the {\it dual cell of $c_q$} as a subspace $c^\vee_q \subset M$ to be a union of simplices consisting of joins $t_q \hat{c}_q + t_{q+1} \hat{c}_{q+1} + \cdots + t_m \hat{c}_m$ where $t_q \neq 0$ and $c_j$'s runs all $j$-cells such that $c_j$ is a face of $c_{j+1}$.
Here, $m$ is the dimension of $M$ as before.
\end{Defn}

\begin{Defn}
\label{201907260316}
We define an isomorphism $\kappa^q : H^q ( c_q , \partial c_q ; R ) \cong H_{m-q} ( c^\vee_q , \partial c^\vee_q ; R )$ by
\begin{align}\notag
H^q ( c_q , \partial c_q ; R )
\stackrel{(-1)^{q(m+1)}}{\to}
H^q ( c_q , \partial c_q ; R )
\stackrel{\mathrm{PD}}{\to}
H_{n-q} ( (\partial c_q)^\ast , (c_q)^\ast ; R )
\stackrel{\mathrm{exc.}}{\to}
 H_{m-q} ( c^\vee_q , \partial c^\vee_q ; R )
\end{align}
Here, PD means (\ref{201907260248}).
In particular, the boundary $\partial c^\vee_q$ is a homology $(n-q-1)$-sphere.
\end{Defn}

\begin{Lemma}
\label{201907212211}
A dual cell $c^\vee_q$ of a $q$-cell $c_q$ is naturally a regular homology $(n-q)$-cell on $M$.
Furthermore, the orientation on a cell $c_q$ by its characteristic map induces an orientation on the dual cell $c^\vee_q$ via the isomorphism $\kappa^q$.
\end{Lemma}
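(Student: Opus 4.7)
The plan is to verify the two defining conditions of a regular homology $(m-q)$-cell from Definition \ref{201907251333} for $c^\vee_q$, and then read off the orientation statement as a direct consequence of the isomorphism $\kappa^q$ of Definition \ref{201907260316}. First I would exhibit the cone homeomorphism $\varphi : (\bar{c^\vee_q}, \partial c^\vee_q) \cong (C\partial c^\vee_q, \partial c^\vee_q)$: this is essentially tautological from the definition of $c^\vee_q$, since every point of $\bar{c^\vee_q}$ lies on a join $t_q \hat c_q + t_{q+1}\hat c_{q+1} + \cdots + t_m \hat c_m$ with $\sum_{j\geq q} t_j = 1$, and can therefore be written uniquely as $t_q \hat c_q + (1-t_q) y$ for some $y \in \partial c^\vee_q$ and $t_q \in [0,1]$. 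The assignment $(t_q, y) \mapsto t_q \hat c_q + (1-t_q) y$ then yields the required homeomorphism of pairs and identifies $\partial c^\vee_q$ with the base $\{ t_q = 0 \}$; in particular $\bar{c^\vee_q}$ is contractible.

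The substantive step will be to show that $\partial c^\vee_q$ is a homology $(m-q-1)$-sphere with \emph{integral} coefficients, as demanded by Definition \ref{201907251333}. I plan to do this by running the chain of isomorphisms of Definition \ref{201907260316} with $R$ replaced by $\mathbb{Z}$. This is legitimate after restricting to a small open neighborhood $U$ of $\bar c_q$ in $M$: topological manifolds are locally $\mathbb{Z}$-orientable, so $U$ may be chosen $\mathbb{Z}$-orientable, and the local Poincar\'e--Lefschetz isomorphism from \cite{whitehead}, composed with excision on $U$, yields $H^p (\bar c_q , \partial c_q ; \mathbb{Z}) \cong H_{m-p} (\bar{c^\vee_q}, \partial c^\vee_q ; \mathbb{Z})$ for every $p$. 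Since $c_q$ is a cellular $q$-disc, the left-hand side is $\mathbb{Z}$ for $p=q$ and zero otherwise; combining this with the contractibility of $\bar{c^\vee_q}$ established in the previous step, the long exact sequence of $(\bar{c^\vee_q}, \partial c^\vee_q)$ forces $\widetilde H_j (\partial c^\vee_q ; \mathbb{Z})$ to be $\mathbb{Z}$ in degree $m-q-1$ and trivial otherwise, which is exactly the homology-sphere condition.

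Granting these two structural facts, the orientation assertion drops out from naturality of $\kappa^q$: an orientation of $c_q$ is by definition a generator of $H_q (\bar c_q, \partial c_q; \mathbb{Z}) \cong \mathbb{Z}$, which via the universal coefficient isomorphism corresponds to a generator of $H^q (\bar c_q , \partial c_q ; \mathbb{Z})$; applying $\kappa^q$ produces a generator of $H_{m-q}(\bar{c^\vee_q}, \partial c^\vee_q; \mathbb{Z})$, i.e.\ an orientation of $c^\vee_q$. The hard part throughout will be the one flagged in the second paragraph: passing from the $R$-orientability hypothesis on $M$ to a $\mathbb{Z}$-coefficient Poincar\'e--Lefschetz isomorphism on $(\bar c_q, \partial c_q)$. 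The remedy is to localize around $\bar c_q$, where the ambient manifold is automatically $\mathbb{Z}$-orientable, after which the rest of the argument is formal.
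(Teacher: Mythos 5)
Your overall strategy is the same as the paper's: the paper's own proof consists only of the explicit computation of $\overline{c^\vee_q}$ and $\partial c^\vee_q$ as unions of simplices of $\Delta^\prime$ spanned by the barycenters $\hat{c}_q, \hat{c}_{q+1}, \dots , \hat{c}_m$, from which the cone structure $\overline{c^\vee_q} = C(\partial c^\vee_q)$ is read off; the homology-sphere and orientation claims are then carried entirely by the isomorphism $\kappa^q$ of Definition \ref{201907260316}. Your first and third paragraphs are faithful (and somewhat more explicit) versions of exactly that. Where you genuinely go beyond the paper is in observing that the homology-sphere condition of Definition \ref{201907251333} and the notion of orientation of a homology cell are \emph{integral} conditions, whereas $\kappa^q$ is assembled from the $R$-oriented duality (\ref{201907260248}); the paper elides this point entirely, so your attempt to supply the missing $\mathbb{Z}$-coefficient statement is the right instinct.

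The device you propose to close that gap does not, however, work as stated. The closure $\overline{c^\vee_q}$ contains the barycenters $\hat{c}_m$ of every top-dimensional cell incident to $c_q$, so it is \emph{not} contained in a small open neighborhood $U$ of $\bar{c}_q$, and there is no excision ``on $U$'' relating $(\overline{c^\vee_q}, \partial c^\vee_q)$ to anything inside $U$; moreover the duality you invoke (\cite{whitehead}, Theorem 7.4, i.e. (\ref{201907260248}) and (\ref{201907202107})) is stated for the closed $R$-oriented manifold $M$, not for an open subset. The standard repair is different: work with the closed star of $\hat{c}_q$ in $\Delta^\prime$ (which does contain $\overline{c^\vee_q}$ and is a cone, hence contractible) and run the local-homology argument at an interior point $x$ of $c_q$, using that $H_\bullet(M, M \setminus \{x\}; \mathbb{Z})$ is $\mathbb{Z}$ concentrated in degree $m$ for \emph{any} topological $m$-manifold --- no orientability hypothesis is needed --- together with the join decomposition of the star into the cell and its link. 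Once $\widetilde{H}_\bullet(\partial c^\vee_q; \mathbb{Z})$ is pinned down this way, your long-exact-sequence computation and the derivation of the orientation from a $\mathbb{Z}$-coefficient version of $\kappa^q$ go through as you describe.
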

\begin{proof}
We compute the closure and boundary of $c^\vee_q$ in $M$.
Note that taking closures preserves finite unions.
The closure of $c^\vee_q$ in $M$ is given by a union of simplices whose vertices are $\hat{c}_{q}, \hat{c}_{q+1}, \cdots , \hat{c}_m$ for all $j$-cells $c_{j}$ such that $c_j$ is a face of $c_{j+1}$ for $q \leq j \leq m$.
Hence, the boundary of $c^\vee_q$ in $M$, $\partial c^\vee_q$, is a union of  simplices whose vertices are $\hat{c}_{q+1}, \cdots , \hat{c}_m$ for all $j$-cells $c_{j}$ such that $c_j$ is a face of $c_{j+1}$ for $q \leq j \leq m$.
In particular, the closure $\overline{c^\vee_q}$ is a cone of the boundary $\partial c^\vee_q$.
\end{proof}

\begin{remark}
The dual cell $c^\vee_q$ is not a cell, i.e. $(\bar{c}^\vee_q , \partial c^\vee_q)$ is not homeomorphic to $(D^{n-q}, S^{n-q-1})$ in general.
Even if $q=0$, for example, then $\partial c^\vee_0$ is a link of the vertex $c_0$ in $\Delta^\prime$.
Note that the link is not homeomorphic to a sphere since there exists a topological manifold with a triangulatioin which does not admit a PL structure for $m \geq 5$.
\end{remark}

\begin{prop}
The dual-cells for $\Delta$ give a regular cell-complex structure with oriented homology cells on $M$.
We denote the regular complex by $\emptyset^\star$.
\end{prop}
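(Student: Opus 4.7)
The plan is to verify the four conditions of Definition \ref{20190722125} in turn, using the explicit description of the dual cells in terms of the barycentric subdivision $\Delta'$. Recall that an open simplex of $\Delta'$ corresponds bijectively to a chain $c_{i_0} < c_{i_1} < \cdots < c_{i_k}$ of faces in $\Delta$, and its interior consists of points of the form $\sum_j t_j \hat{c}_{i_j}$ with all $t_j > 0$. By construction, $c_q^\vee$ is precisely the union of interiors of those simplices whose chain begins with $c_q$ (i.e.\ $i_0 = q$ in the chain).

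First, Lemma \ref{201907212211} already gives that each dual cell $c_q^\vee$ is a regular homology $(m-q)$-cell in $M$, and Definition \ref{201907260316} supplies a canonical orientation, so condition (i) of Definition \ref{20190722125} is immediate. For the pairwise disjointness (condition (ii)), I would use the fact that an interior point of an open simplex of $\Delta'$ has a unique presentation as a barycentric combination, so it lies in a unique chain and hence in a unique dual cell $c_q^\vee$, namely the one whose label is the smallest element of that chain. For the quotient covering (condition (iii)), every point of $M$ lies in the interior of exactly one open simplex of $\Delta'$, hence in exactly one dual cell; combined with the local-finiteness of the barycentric subdivision on a compact manifold, this shows that $\coprod_{q,j} \overline{c^\vee_{q,j}} \to M$ is surjective and a quotient map.

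The main step, and the most delicate one, is the skeleton condition (iv): $\partial c_q^\vee$ must be contained in the $(m-q-1)$-skeleton of the dual cell complex. The proof of Lemma \ref{201907212211} already identifies $\partial c_q^\vee$ as the union of open simplices of $\Delta'$ indexed by chains $c_q < c_{q+1} < \cdots < c_m$ but with the vertex $\hat{c}_q$ removed, i.e.\ by chains beginning at some $c_r$ with $r > q$ and $c_q$ a proper face of $c_r$. By the bijection above, such an open simplex lies in $c_r^\vee$ for some $r > q$. Hence $\partial c_q^\vee \subseteq \bigcup_{r > q, \; c_q < c_r} c_r^\vee$, which sits inside the $(m-q-1)$-skeleton of the dual cell structure.

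The only genuine obstacle I foresee is keeping careful track of the chains in $\Delta$: one must check that every open simplex of $\Delta'$ appears in \emph{some} dual cell (for the covering) while only being counted once (for disjointness), and simultaneously that removing the initial vertex $\hat{c}_q$ from a chain starting at $c_q$ yields a chain starting at some strictly larger cell (for the boundary condition). These are all standard combinatorial facts about the first barycentric subdivision, and once phrased in the chain-of-faces language each reduces to the observation that deleting the minimum of a totally ordered set produces a new totally ordered set whose minimum is strictly larger. With these checks in place, all four conditions of Definition \ref{20190722125} hold, and the oriented dual cells $\{c_q^\vee\}_{c_q \in \Delta}$ assemble into the regular complex $\emptyset^\star$ as asserted.
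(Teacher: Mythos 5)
Your proposal is correct and follows essentially the same route as the paper: it cites Lemma \ref{201907212211} for the homology-cell property and the boundary computation, derives disjointness and the quotient-map condition from the simplicial structure of $\Delta'$, and deduces the skeleton condition from the description of $\partial c_q^\vee$ as a union of simplices on chains beginning strictly above $c_q$. The paper's version is terser (it asserts disjointness ``by definitions'' and invokes finiteness of $\Delta'$ for the quotient map), whereas you make the chain-of-faces bookkeeping explicit, but the underlying argument is identical.
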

\begin{proof}
By Lemma \ref{201907212211}, the dual-cells are homology cells.
The dual cells are disjoint to each other by definitions.
Note that the dual cells are subcomplexes of the finite simplicial complex $\Delta^\prime$.
Hence, the gluing map is a quotient map.
If we denote by $(\emptyset^\star)^{(q)}$ the $q$-skeleton, then the boundary $\partial c^\vee_q$ of dual-cell lies in $(\emptyset^\star)^{(n-q-1)}$ by the computation of the boundary $\partial c^\vee_q$ in the proof of Lemma \ref{201907212211}.
It completes the proof.
\end{proof}

\begin{Defn}
\label{202007062100}
Let $K$ be a subcomplex of $\Delta$ as before.
Then the collection of dual-cells lying in $K^\ast$ is a subcomplex of $\emptyset^\ast$ in the sense of Definition \ref{20190722125}.
We denote the induced regular cell-complex structure by $K^\star$.
\end{Defn}

Recall the filtration $T^{(\bullet)} (K)$ in (\ref{201907251337}).Then by definitions, we have an equality between filtrations,
\begin{align}\notag
T^{(\bullet)} (K) = \left( \emptyset^\star , K^\star \right)^{(\bullet)} .  
\end{align}
Here, $\left( \emptyset^\star , K^\star \right)^{(\bullet)}$ is the skeleton filtration associated with the complex-pair $\left( \emptyset^\star , K^\star \right)$, i.e. $\left( \emptyset^\star , K^\star \right)^{(q)} \stackrel{\mathrm{def.}}{=} K^\star \bigcup (\emptyset^\star)^{(q)}$.
The chain isomorphism $\tilde{\varrho}$ induces a chain isomorphism $C^\bullet_{cell} ( K ) \cong C^{filt}_\bullet ( (\emptyset^\star, K^\star)^{(\bullet)} )$

\begin{prop}
\label{201907251509}
The chain isomorphism $\tilde{\varrho}$ induces a chain isomorphism,
\begin{align}
\label{201907251443}
C^\bullet_{cell}(K ) \cong C^{cell}_{m-\bullet} ( \emptyset^\star , K^\star  ) . 
\end{align}
Note that the right hand side is defined by a cokernel of $C^{cell}_{m - \bullet} ( K^\star  ) \to C^{cell}_{m - \bullet} ( \emptyset^\star )$.

The isomorphism is natural for inclusions :
Let $L \subset K$ be a subcomplex.
Then the following diagram commutes,
\begin{equation}\notag
\begin{tikzcd}
C_{cell}^\bullet ( K  )
\ar[r]
\ar[d, "\tilde{\varrho} "]
&
C_{cell}^\bullet ( L  )
\ar[d, "\tilde{\varrho} "]
\\
C^{cell}_{m - \bullet} ( \emptyset^\star , K^\star  )
\ar[r]
&
C^{cell}_{m-\bullet} ( \emptyset^\star , L^\star  ) 
\end{tikzcd}
\end{equation}
\end{prop}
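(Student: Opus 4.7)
The plan is to combine Proposition \ref{201907251507} with an identification of the filtration chain complex associated with $T^{(\bullet)}(K)$ with the relative cellular chain complex of the pair $(\emptyset^\star, K^\star)$.

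First I would unpack the definitions to verify the equality of filtrations
\[
T^{(q)}(K) \;=\; K^\star \cup (\emptyset^\star)^{(q)} \;=\; (\emptyset^\star, K^\star)^{(q)}.
\]
A simplex of $\Delta^\prime$ is a join $\hat{c}_{i_1} \cdots \hat{c}_{i_k}$ with $c_{i_1} < \cdots < c_{i_k}$ and lies in the dual cell $c^\vee_{i_1}$, which has dimension $m - i_1$. This simplex belongs to $T^{(q)}(K) = (K^{(m-q-1)})^\ast$ precisely when none of its vertices lie in $(K^{(m-q-1)})^\prime$, and one checks that this is equivalent to $c_{i_1} \notin K$ forcing $m - i_1 \leq q$, i.e.\ the simplex lies in a dual cell of dimension at most $q$ off $K^\star$, otherwise it lies in $K^\star$.

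Next I would show that for the regular complex $\emptyset^\star$ with oriented homology cells, the filtration chain complex $C^{filt}_\bullet\big((\emptyset^\star, K^\star)^{(\bullet)}\big)$ coincides as a chain complex with the relative cellular chain complex $C^{cell}_\bullet(\emptyset^\star, K^\star)$. By excision applied to the disjoint union of the interiors of the dual cells $c_p^\vee$ with $\dim c_p^\vee = q$ and $c_p \notin K$, together with the homology sphere condition on $\partial c_p^\vee$ from Definition \ref{201907251333}, we obtain
\[
H_r\big((\emptyset^\star, K^\star)^{(q)}, (\emptyset^\star, K^\star)^{(q-1)}; R\big) \;\cong\; \bigoplus_{\substack{c_p \notin K \\ m-p = q}} H_r\big(\overline{c_p^\vee}, \partial c_p^\vee; R\big).
\]
Since each $\overline{c_p^\vee}$ is a cone on a homology $(q-1)$-sphere, the right-hand side vanishes for $r \neq q$ and is free of the right rank for $r = q$, each summand identified with $R$ by the orientation of $c_p^\vee$ fixed in Definition \ref{201907260316}. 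This identifies the filtration module with $C^{cell}_q(\emptyset^\star, K^\star)$, and the differentials match because both are the connecting homomorphisms of the long exact sequence of the triple $(\emptyset^\star, K^\star)^{(q)} \supset (\emptyset^\star, K^\star)^{(q-1)} \supset (\emptyset^\star, K^\star)^{(q-2)}$.

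Composing with the chain isomorphism $\tilde{\varrho}$ of Proposition \ref{201907251507} yields the claimed isomorphism $C^\bullet_{cell}(K) \cong C^{cell}_{m-\bullet}(\emptyset^\star, K^\star)$. The naturality square for an inclusion $L \subset K$ follows directly from the naturality part of Proposition \ref{201907251507}: such an inclusion produces $K^\star \subset L^\star$, inducing a chain map on relative cellular chain complexes compatible, via the identification above, with the filtration-chain-complex naturality already established. The main delicate point is the combinatorial identification $T^{(q)}(K) = (\emptyset^\star, K^\star)^{(q)}$ above; once this is in hand, the remaining excision/sphere-bundle argument and the transport of naturality are essentially formal.
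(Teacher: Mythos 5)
Your proposal is correct and follows essentially the same route as the paper: identify the filtration $T^{(\bullet)}(K)$ with the relative skeleton filtration $(\emptyset^\star, K^\star)^{(\bullet)}$ (the paper asserts this equality by definition just before the proposition, whereas you verify it combinatorially), identify the associated filtration chain complex with the relative cellular chain complex, and compose with $\tilde{\varrho}$ from Proposition \ref{201907251507}, inheriting naturality from there. The only real difference is that you establish the middle identification by a direct excision computation over the dual cells not in $K^\star$, whereas the paper deduces it from the short exact sequence $0 \to C^{filt}_\bullet((K^\star)^{(\bullet)}) \to C^{filt}_\bullet((\emptyset^\star)^{(\bullet)}) \to C^{filt}_\bullet((\emptyset^\star,K^\star)^{(\bullet)}) \to 0$ together with $C^{cell} = C^{filt}$ in the absolute case; these amount to the same content.
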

\begin{proof}
It suffices to prove that $C^{cell}_\bullet (  \emptyset^\star  , K^\star ) \cong C^{filt}_\bullet ( (\emptyset^\star, K^\star)^{(\bullet)}  )$.
Recall that $C^{cell} ( K^\star   ) = C^{filt} ( (K^\star)^{(\bullet)}  )$ by definition.
It follows from the induced short exact sequence.
\begin{align}\notag
0
\to
C^{filt}_\bullet ( ( K^\star)^{(\bullet)}  )
\to
C^{filt}_\bullet ( (\emptyset^\star )^{(\bullet)} )
\to
C^{filt}_\bullet ( (\emptyset^\star, K^\star)^{(\bullet)} )
\to
0
\end{align}

The commutativity in the second claim follows from that of Proposition \ref{201907251507}.
\end{proof}

\begin{Corollary}
\label{201907211913}
Let $L \subset K$ be subcomplexes of $\Delta$.
Then we have a chain isomorphism which extends (\ref{201907251443}). 
\begin{align}
\label{201907260320}
C_{cell}^\bullet ( L , K  ) \cong C^{cell}_{n-\bullet} ( L^\star , K^\star  ) .
\end{align}
Under the orientations of dual cells in Lemma \ref{201907212211}, the isomorphism assigns $c^\vee_q$ to $c_q$.
\end{Corollary}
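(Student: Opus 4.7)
The plan is to deduce this corollary from the absolute case of Proposition \ref{201907251509} by a short kernel argument in the category of chain complexes, using the naturality already built into that proposition. First I would apply the naturality clause of Proposition \ref{201907251509} to the inclusion $L \hookrightarrow K$: this produces a commutative square whose vertical arrows are the chain isomorphisms $\tilde\varrho$, whose top row is the restriction of cochains $C^\bullet_{cell}(K) \to C^\bullet_{cell}(L)$, and whose bottom row is the map $C^{cell}_{m-\bullet}(\emptyset^\star, K^\star) \to C^{cell}_{m-\bullet}(\emptyset^\star, L^\star)$ induced by inclusion of pairs. Note that $L \subset K$ implies $K^\star \subset L^\star$ since the supplement construction reverses inclusions, so this inclusion of pairs is well-defined.

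Next I would identify the kernels of the two horizontal arrows. On the cochain side, the kernel of the restriction is by definition the relative cellular cochain complex $C^\bullet_{cell}(L, K)$ consisting of cochains on $K$ vanishing on $L$. On the chain side, the kernel of the quotient-to-quotient map $C^{cell}_{m-\bullet}(\emptyset^\star)/C^{cell}_{m-\bullet}(K^\star) \to C^{cell}_{m-\bullet}(\emptyset^\star)/C^{cell}_{m-\bullet}(L^\star)$ is, by the third isomorphism theorem, $C^{cell}_{m-\bullet}(L^\star)/C^{cell}_{m-\bullet}(K^\star) = C^{cell}_{m-\bullet}(L^\star, K^\star)$. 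Since both $\tilde\varrho$'s are isomorphisms and the square commutes, they restrict to an isomorphism of the respective kernels, which is exactly the claimed (\ref{201907260320}).

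For the cell-level assertion that the isomorphism sends $c_q$ to $c^\vee_q$: tracing the construction of $\tilde\varrho$ degree-wise, the map is, up to the sign $(-1)^{q(m+1)}$, built from $\kappa^q$ of Definition \ref{201907260316}; by construction $\kappa^q$ matches the generator of $H^q(c_q, \partial c_q; R)$ dual to the oriented $q$-cell $c_q$ with the generator of $H_{m-q}(c^\vee_q, \partial c^\vee_q; R)$ given by the induced orientation on the dual cell from Lemma \ref{201907212211}. After restricting to the relative kernels, the generators surviving on the cochain side are exactly the $q$-cells of $K$ not in $L$, and the generators surviving on the chain side are exactly the dual cells lying in $L^\star$ but not in $K^\star$; the assignment $c_q \mapsto c^\vee_q$ is the resulting bijection.

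The only real obstacle is bookkeeping: reconciling the paper's pair conventions (subcomplex-first on the cochain side, ambient-first on the chain side), tracking the sign $(-1)^{q(m+1)}$ and the orientation conventions of Lemma \ref{201907212211}, and checking that the inclusion of pairs $(\emptyset^\star, K^\star) \hookrightarrow (\emptyset^\star, L^\star)$ really induces the expected quotient map of relative chain complexes. There is no essential new geometric content beyond Proposition \ref{201907251509}; the proof is a purely algebraic diagram chase.
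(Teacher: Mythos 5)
Your proposal is correct and follows essentially the same route as the paper: the paper's own proof simply states that the isomorphism follows from the naturality clause of Proposition \ref{201907251509} applied to the inclusion $L \subset K$ (your kernel/diagram-chase argument is the fleshed-out version of this), and that the cell-level assignment $c_q \mapsto c^\vee_q$ follows from the definition of $\kappa^q$ in Definition \ref{201907260316}, exactly as you argue.
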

\begin{proof}
The isomorphism (\ref{201907260320}) follows from the naturality with respect to inclusions in Proposition \ref{201907251509}.
The second claim follows from the definition of $\kappa_q$ in Definition \ref{201907260316}.
\end{proof}
\end{appendices}

\bibliography{Hamiltonian_model_induced_by_chain_complex_theory_3}{}
\bibliographystyle{plain}

\end{document}